\documentclass[11pt]{article}
\usepackage[margin=1in]{geometry}
\usepackage{amsmath,amssymb,amsthm,mathtools}
\usepackage{booktabs}
\usepackage{graphicx}
\usepackage{placeins}
\usepackage{microtype}
\usepackage{enumitem}
\usepackage{cite}
\usepackage[hidelinks]{hyperref}
\hypersetup{linktoc=all,bookmarksdepth=3}
\usepackage{bm}
\usepackage{array}
\usepackage{authblk}
\usepackage{xcolor}

\newtheorem{theorem}{Theorem}[section]
\newtheorem{proposition}[theorem]{Proposition}
\newtheorem{corollary}[theorem]{Corollary}
\newtheorem{lemma}[theorem]{Lemma}
\theoremstyle{remark}
\newtheorem{remark}[theorem]{Remark}
\newtheorem{example}[theorem]{Example}

\newcommand{\E}{\mathbb{E}}
\newcommand{\Pp}{\mathbb{P}}
\newcommand{\X}{\mathcal{X}}
\newcommand{\Z}{\mathcal{Z}}
\newcommand{\F}{\mathcal{F}}
\newcommand{\KL}[2]{D\!\left(#1\middle\|#2\right)}
\newcommand{\Pois}{\operatorname{Pois}}
\newcommand{\Ber}{\operatorname{Bern}}
\newcommand{\coI}{\mathsf{I}}
\newcommand{\cR}{\mathcal{R}}
\newcommand{\cC}{\mathcal{C}}
\newcommand{\cS}{\mathcal{S}}
\newcommand{\cJ}{\mathcal{J}}
\newcommand{\ind}{\mathbf{1}}

\newcommand{\Rmis}{\mathcal R_{\mathrm{mis}}}
\newcommand{\Rmat}{\cR_{P,P}}
\newcommand{\Dout}{\mathcal D_{\mathrm{out}}}
\newcommand{\Dpost}{\mathcal D_{\mathrm{post}}}
\newcommand{\Cmis}{\mathcal C_{\mathrm{mis}}}
\newcommand{\Smis}{\mathcal S_{\mathrm{mis}}}
\newcommand{\Cmat}{\cC_{P,P}}
\newcommand{\Smat}{\cS_{P,P}}

\title{\textbf{Information, Estimation, and Prediction for Discrete Memoryless
Channels via Poissonization}}
\author{Abhiram Gorle, Tsachy Weissman\\
Department of Electrical Engineering, Stanford University, CA \\
\normalsize\texttt{abhiramg@stanford.edu{,}tsachy@stanford.edu}}
\date{}

\begin{document}
\maketitle
\begin{abstract}
\noindent We develop information--estimation relations for an arbitrary discrete memoryless channel with finite input and output alphabets by varying the expected number of observations while keeping the channel fixed. A single hidden input is observed through a \textit{Poisson} number of conditionally independent channel uses, whose mean specifies the observation budget. The derivative of mutual information with respect to this budget equals the minimum expected relative-entropy loss for estimating the input's output distribution. The posterior average of the channel's output distributions achieves this minimum and is the Bayesian predictor for the next observation.

\noindent When the input distribution, the channel, or both are misspecified, the relative entropy between the true and assumed observation laws is the integral of the excess prediction loss over the budget. We also extend these relations to finite-state processes taking values in the input alphabet and observed through the channel at independent Poisson sampling times. Both in the matched setting and under model mismatch, the expected prediction loss accumulated over time using past observations equals the corresponding loss using the full observation record (noncausal estimation), averaged over lower sampling rates.

\noindent We also characterize how redundancy between two outputs selects among input distributions that achieve one-use capacity. For the coupon-collector channel, where each observation is sampled uniformly from an unknown subset, we derive the exact capacity loss of any input distribution. In particular, balanced subset families can substantially reduce the input alphabet while preserving the leading capacity terms.

\end{abstract}

\clearpage
\tableofcontents
\clearpage

\section{Introduction}

Suppose a single latent input is observed repeatedly through a noisy channel. This gives rise to \emph{two} fundamental questions: i) how much have the observations revealed about the input, and ii) how well can they predict the outcome of the next observation? Mutual information measures the first quantity. For a general discrete memoryless channel (DMC), we connect it to the second through an exact differential identity. We keep the channel fixed and vary the number of observations: the input is drawn once and observed through a \textit{Poisson} number of conditionally independent
channel uses with mean \(\gamma\). Poissonization supplies a continuous observation parameter without requiring a distinguished signal-to-noise parameter in the transition matrix.

\noindent Let \(X\sim P\) take values in a finite alphabet \(\X\), and let \(M\) be a fixed channel with finite output alphabet \(\Z\). Conditional on \(X\), the observations \(Z_1,Z_2,\ldots\) are independent with common distribution \(W_X=M(\cdot| X)\). We observe
\[
Y_\gamma=(N_\gamma,Z_1,\ldots,Z_{N_\gamma}),
\text{ } N_\gamma\sim\Pois(\gamma),
\]
where the count is independent of the input and of the channel-sampling randomness. Let \(Z'\) be a further observation with conditional distribution \(W_X\), independent of the record given \(X\). Its posterior predictive distribution is \(\widehat W_P^\gamma=\E_P[W_X| Y_\gamma]\).
Our basic identity is
\[
\frac{d}{d\gamma}I_P(X;Y_\gamma)
=\E_P D(W_X\|\widehat W_P^\gamma)
=I_P(X;Z'| Y_\gamma).
\]
Thus the marginal information gain equals the information supplied by one additional observation, given the observations already collected. Equivalently, it is the expected excess logarithmic prediction loss relative to a predictor that knows the true observation distribution \(W_X\). The posterior predictive distribution minimizes this expected KL loss, a special case of conditional-mean optimality under Bregman losses \cite{banerjee2005bregman}. The estimation object is a probability vector, so the formulation requires no numerical encoding of the input symbols. All logarithms are natural, with the usual conventions for zero probabilities, and derivatives at zero are generally taken from the right.

\subsection{Main results and scope}

Our results center on three themes: i) the relation between information gain and prediction loss, ii) the excess prediction loss caused by an incorrect model, and iii) the connection between causal prediction from past observations and noncausal estimation from the full record. We use these connections to explain diminishing information gains, determine what repeated observations can identify, and characterize capacity.

For a fixed input prior, repeated observations yield diminishing information gains. Let \(Z_1',Z_2'\) be two further observations through the same channel, independent of each other and of the record conditional on \(X\). Then
\[
\frac{d^2}{d\gamma^2}I_P(X;Y_\gamma)
=-I_P(Z_1';Z_2'|Y_\gamma).
\]
The dependence remaining between these observations measures the reduction in information gain: on average, observing one makes the other less informative about the input. Thus mutual information is nondecreasing and concave in the observation budget. We also derive higher-order identities and show that derivative signs need not continue to alternate (Section~\ref{sec:curvature}).

The same prediction framework applies when an estimator uses an incorrect input distribution, an incorrect channel, or both. Under the stated support and finiteness conditions, the relative entropy between the true and assumed observation laws equals the excess expected KL prediction loss relative to the matched predictor, integrated over observation budgets (Section~\ref{sec:static-mismatch}).

Repeated observations can identify the channel-output distribution without identifying the input itself. Inputs producing the same output distribution remain indistinguishable, regardless of the observation budget. If the assumed channel is incorrect, even identifying the input need not lead to accurate predictions of further observations. We also characterize when accumulated excess prediction loss stays bounded and when it grows linearly as the observation budget increases (Sections~\ref{sec:posterior-mismatch} and~\ref{sec:high-intensity}).

Keeping the channel and observation budget fixed, we maximize mutual information over the input distribution. The resulting capacity is measured in nats per static experiment: one input is selected and its entire Poisson-sized observation record is collected. Each candidate distribution uses its own matched predictor, giving a prediction-loss characterization of capacity and its optimality conditions. When several input distributions achieve the capacity of a single use of \(M\), those with the least redundancy between two outputs determine the second-order term of the small-budget capacity expansion (Section~\ref{sec:capacity}).

A coupon-collector channel gives a concrete input-design application. Its input is a subset of symbols, and each observation samples one member uniformly. For any input distribution, we express the loss relative to capacity through the distributions it induces on observed subsets. This identifies balanced input families that preserve specified orders of the small-budget capacity expansion. In particular, seven carefully chosen triples match the linear and quadratic terms obtained using all 35 triples of a 7-symbol alphabet, with an exact formula for the remaining loss (Section~\ref{sec:coupon-design}).

We next allow the hidden input to evolve in time, observing it through the fixed DMC at independent Poisson sampling times. We compare prediction from past observations (causal estimation) with smoothing from the complete record (noncausal estimation). Their time-integrated expected KL losses satisfy an exact relation: the causal prediction loss at a positive sampling rate equals the average noncausal smoothing loss over rates from zero to that rate~\cite{atar2012poisson}. Under the stated conditions, the same relation holds even when the assumed process law and channel are both misspecified (Sections~\ref{sec:ct-model}--\ref{sec:ct-mismatch}).

We work out explicit examples for binary symmetric,  erasure channels and a sampled Markov process. An image experiment further illustrates that recovering the input is not the same as predicting its outputs: even when the estimator identifies the true photograph, misspecified measurement noise leaves a positive KL prediction-loss floor for future noisy pixel measurements (Section~\ref{sec:examples}).

\subsection{Relation to prior work}

For Poisson channels, derivatives of mutual information are expressed through conditional mean estimates \cite{guo2008poisson}. Related identities connect mutual information, relative entropy, and causal/noncausal estimation \cite{atar2012poisson}, while vector Poisson gradients admit generalized Bregman representations \cite{wang2013bregman}.

The connection to our experiment follows from Poisson thinning \cite{last2017poisson}: conditional on the input, the output-symbol counts are independent Poisson variables. Because channel rows and their posterior means both sum to one, the associated Poisson estimation loss reduces to KL divergence (Remark~\ref{rem:poisson-kl}). The first-order identity is therefore a probability-simplex specialization of established Poisson results.

Different assumed priors and channels induce different distributions over channel rows. Treating these rows as Poisson inputs connects the prediction identities to existing input-law mismatch theory \cite{atar2012poisson}. We also handle zero channel probabilities under explicit support conditions. Repeated or multi-view channels likewise collect conditionally independent observations of one input. The analysis in~\cite{rameshwar2025multiview} establishes concavity in the number of views for each fixed input distribution and introduces a Poisson approximation channel for the binary symmetric channel, with the same conditional count law as our binary example. Here we study prediction and mismatch for the Poissonized experiment with an arbitrary finite DMC, using the mean count as a continuous observation budget.

For Poissonized observations, an expression for second derivatives in terms of negative conditional mutual information appears in the proof of~\cite[Proposition~4.1]{kireeva2024concavity}. Here we use this dependence measure to determine the second-order capacity term when the input distribution is optimized.

The coupon-collector channel and its capacity for a fixed number of reads were studied in~\cite{sokolovskii2025coupon} for combinatorial motif-based DNA storage. That work also considers restricted input families to reduce decoding complexity. In~\cite{sokolovskii2025coupon}, it is also noted that randomizing the read count allows capacity to be viewed as a function of the mean number of reads. Our application gives an exact expression for the information deficit of any input distribution, separating the number of distinct symbols observed from the imbalance in their subset distribution. The resulting balance conditions connect small-budget capacity to weighted combinatorial designs and characterize when a smaller input family retains full-input capacity for a fixed number of reads. The lower bound on the support needed to match the first two capacity terms is a weighted form of Fisher's design inequality~\cite{fisher1940blocks}.

Gaussian information--estimation relations connect mutual information to causal mean-square error \cite{duncan1970information}, and its signal-to-noise derivative to minimum mean-square error \cite{guo2005immse}. Relative entropy characterizes the accumulated estimation cost of an incorrect input law \cite{verdu2010mismatched}. The relation between causal and noncausal estimation also extends to mismatch \cite{weissman2010causal}.

Beyond these models, general channel-derivative formulas express changes in mutual information through posterior input distributions \cite{palomar2007input}. The generalized area theorem gives related entropy-derivative identities for memoryless channels \cite{measson2009area}. L\'evy channels admit information--estimation relations with losses adapted to the observation model \cite{jiao2017levy}, while semimartingale channels provide a broader continuous-time framework for causal estimation and mismatch \cite{jiao2017semimartingale}.

Recent diffusion work also connects estimation losses with data likelihoods. Such representations have been developed for Gaussian denoising \cite{kong2023diffusion} and discrete diffusion \cite{jeon2025discrete}; Poisson reconstruction loss provides a related connection for a discrete generative model \cite{bhattacharya2025itdpdm}. GEXIT-based conservation laws express data--model cross-entropy as an integral along a memoryless noise path \cite{aharoni2026conservation}. Here the DMC remains fixed: the continuous parameter controls the expected number of outputs collected, and prediction concerns the distribution of a further output.

Sections~\ref{sec:static-model}--\ref{sec:capacity} develop the static model and its information--estimation relations. Sections~\ref{sec:ct-model}--\ref{sec:ct-mismatch} give the continuous-time counterparts. Section~\ref{sec:dictionary} collects the main identities, Section~\ref{sec:examples} presents analytical examples and an \hyperref[sec:catalogue-example]{image-reconstruction example}, and Section~\ref{sec:discussion} discusses their implications. The appendices contain the longer proofs and computational details.

\section{Static Poissonized repeated-DMC model}\label{sec:static-model}

Let $\mathcal X$ and $\mathcal Z$ be nonempty finite alphabets, and let $M(z|x)$ be a fixed channel transition matrix. The input $X$ is drawn once with distribution $P$. 
Define
\begin{equation}
W_x(z)\triangleq M(z| x),\text{ } W\triangleq W_X=M(\cdot| X).
\end{equation}
Thus \(W\) is a random probability vector on \(\Z\). For each deterministic parameter $\gamma\geq0$, independently of the input and the channel-sampling randomness, let
\begin{equation}
 N_\gamma\sim\operatorname{Poisson}(\gamma).
 \end{equation}

\noindent Conditional on $X$, the observations $Z_1,Z_2,\ldots$ are independent with common distribution $W_X$:
\begin{equation}
Z_1,\ldots,Z_{N_\gamma}|X \stackrel{\mathrm{iid}}{\sim}\ W_X.
\end{equation}

\noindent We observe the first $N_\gamma$ outputs and define
\begin{equation}
Y_\gamma\triangleq (N_\gamma,Z^{N_\gamma}),
\end{equation}
where $Z^n=(Z_1,\ldots,Z_n)$ and $Z^0=\varnothing$. \\
The parameter $\gamma=\E[N_\gamma]$ is the observation budget: increasing $\gamma$ raises the expected number of observations while the prior $P$ and channel $M$ remain fixed.
An equivalent sufficient statistic is the vector $C=(C_z)_{z\in\Z}$ of output-symbol counts, where

\begin{equation}
C_z=\#\{1\le i\le N_\gamma:Z_i=z\},\text{ } z\in\Z.
\end{equation}
By Poisson thinning \cite{last2017poisson}, applied conditionally on $X$,
\begin{equation}\label{eq:vectorpois}
C_z| X\sim\Pois\!\left(\gamma W_X(z)\right),\text{ } z\in\Z,
\end{equation}
independently across \(z\). Conditional on a histogram $c$, every compatible ordering has probability $\prod_z c_z!/(\sum_z c_z)!$, independently of the input. 

\noindent Consequently, the histogram preserves the posterior on $X$. Moreover, its conditional distribution depends on $X$ only through the selected row $W$, so
\begin{equation}
I_P(X;Y_\gamma)=I_P(X;C)=I_P(W;C).
\end{equation}
\emph{Hence, the repeated-DMC experiment is equivalently a vector Poisson channel with random intensity vector \(\gamma W\).}

\noindent For a prior $S$ on $\X$, let $\E_S$ denote expectation under the same channel and sampling model with $X\sim S$. Define the posterior predictive row
\begin{equation}\label{eq:predrow}
\widehat W_S^\gamma\triangleq \E_S[W| Y_\gamma].
\end{equation}
If $Z_{\mathrm{new}}$ is a fresh observation through $M$, conditionally independent of $Y_\gamma$ given $X$, then
\begin{equation}
\widehat W_S^\gamma(z)
=
S(Z_{\mathrm{new}}=z| Y_\gamma).
\end{equation}

\noindent To see how information changes with the observation budget, we express each quantity of interest as a Poisson average of its fixed-sample counterparts. The elementary lemma below converts differentiation in $\gamma$ into a forward difference in the sample size $n$. In its first-order form, it replaces $a_n$ by the increment $a_{n+1}-a_n$ associated with one additional observation.
\begin{lemma}[Poisson differentiation]\label{lem:poisson-diff}
Let \(\{a_n\}_{n\ge0}\) satisfy \(|a_n|\le K(1+n)\) for some finite \(K\), let \(N_\gamma\sim\Pois(\gamma)\), and define \(F(\gamma)=\E[a_{N_\gamma}]\).  With the forward difference \(\Delta a_n=a_{n+1}-a_n\),
\begin{equation}\label{eq:poisson-k}
F^{(k)}(\gamma)=\E\!\left[\Delta^k a_{N_\gamma}\right],\text{ } k\ge1.
\end{equation}
At \(\gamma=0\), derivatives are interpreted from the right on the  parameter space \([0,\infty)\).
\end{lemma}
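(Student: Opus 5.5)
The plan is to prove the case \(k=1\) by differentiating the Poisson series term by term, and then get general \(k\) by induction, applying the first-order result to the sequence \(\Delta^{k-1}a_n\).

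\textbf{First derivative.} Write
\[
F(\gamma)=\sum_{n\ge0} a_n e^{-\gamma}\frac{\gamma^n}{n!}.
\]
The growth condition \(|a_n|\le K(1+n)\) makes \(\sum_n |a_n|\gamma^n/n!\) an entire power series. Hence \(F(\gamma)=e^{-\gamma}G(\gamma)\) with \(G(\gamma)=\sum_n a_n\gamma^n/n!\) entire. Power series may be differentiated term by term inside their radius of convergence, which here is infinite, so
\[
G'(\gamma)=\sum_{n\ge0}a_{n+1}\frac{\gamma^n}{n!}.
\]
This shift of index is the key step. It gives
\[
F'(\gamma)=e^{-\gamma}\bigl(G'(\gamma)-G(\gamma)\bigr)=\sum_n (a_{n+1}-a_n)e^{-\gamma}\frac{\gamma^n}{n!}=\E[\Delta a_{N_\gamma}].
\]
The formula holds for all real \(\gamma\), since \(F\) extends to an entire function. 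In particular, at \(\gamma=0\) the one-sided derivative on \([0,\infty)\) agrees with the two-sided derivative of the extension, so the right-derivative convention is automatic.

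\textbf{Induction.} The sequence \(b_n=\Delta a_n\) satisfies
\[
|b_n|\le K(2+n)+K(1+n)\le 3K(1+n),
\]
so it obeys the same type of linear bound with a new constant. Inductively, \(\Delta^{k-1}a_n\) is bounded by \(3^{k-1}K(1+n)\). Applying the \(k=1\) case to \(\Delta^{k-1}a\) then gives \(F^{(k)}=\E[\Delta^k a_{N_\gamma}]\).

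\textbf{Main obstacle.} The only delicate point is justifying the term-by-term differentiation. I would handle it through the entire power series argument above, which requires only subexponential growth of \(a_n\). Alternatively, one can use dominated convergence on \([0,\Gamma]\) with the dominating series \(\sum_n K(1+n)(n\Gamma^{n-1}+\Gamma^n)/n!<\infty\).
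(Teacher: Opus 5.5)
Your proposal is correct and follows essentially the same route as the paper: termwise differentiation of the Poisson series for \(k=1\), justified by the linear growth bound together with the factorial denominators, followed by iteration for general \(k\). Your entire-power-series justification and your explicit check that \(\Delta^{k-1}a_n\) keeps a linear bound are just more detailed versions of the paper's uniform-convergence remark.
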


\begin{proof}
Termwise differentiation of
\(
F(\gamma)=e^{-\gamma}\sum_{n\ge0}a_n\gamma^n/n!
\)
gives the case \(k=1\); iteration gives \eqref{eq:poisson-k}.  The linear growth bound and the factorial denominator make the series and each of its termwise derivatives converge uniformly on compact \(\gamma\)-intervals, justifying these interchanges.
\end{proof}

\noindent Since $I_P(X;Z^n)\leq H_P(X)$, the lemma applies directly to the matched static information curve. Applications to mismatched likelihood ratios and continuous-time observations require the additional growth bounds established below.

\section{First-order information--estimation identity}\label{sec:first-order}

If the hidden input $X$ were known, a fresh output from the channel would have distribution $W=M(\cdot| X)$. Given only the observation record $Y_\gamma$, the Bayesian predictive distribution is the posterior average $\widehat W_P^\gamma=\E_P[W| Y_\gamma]$. Under logarithmic loss $-\log q(z)$, the expected excess loss from using $\widehat W_P^\gamma$ in place of the true row $W$, conditional on $X$ and $Y_\gamma$, is $\KL{W}{\widehat W_P^\gamma}$. Averaging over $(X,Y_\gamma)$ defines the \emph{matched KL prediction risk}
\begin{equation}\label{eq:matched-risk}
\cR_{P,P}(\gamma)
\triangleq
\E_P\left[\KL{W}{\widehat W_P^\gamma}\right].
\end{equation}
Here \emph{matched} means that the predictor uses the true prior $P$ and channel $M$. Because the estimation target is the probability vector $W$, the risk measures uncertainty about the selected channel row rather than the randomness of the next output. A predictor that knows $X$ knows $W$ exactly and incurs zero excess loss, even when the next output remains random.

\begin{theorem}[Poissonized-DMC I--estimation identity]\label{thm:first}
For every $\gamma\ge0$, 
\begin{equation}\label{eq:first}
\boxed{
\frac{d}{d\gamma}I_P(X;Y_\gamma)
=
\cR_{P,P}(\gamma)
=
\E_P\left[\KL{W}{\E_P[W| Y_\gamma]}\right].
}
\end{equation}
At $\gamma=0$, the derivative is taken from the right. Equivalently, if \(Z'\) is one fresh DMC observation conditionally independent of \(Y_\gamma\) given \(X\), then
\begin{equation}\label{eq:first-cmi}
\frac{d}{d\gamma}I_P(X;Y_\gamma)=I_P(X;Z'| Y_\gamma).
\end{equation}
\end{theorem}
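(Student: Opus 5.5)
The plan is to apply Lemma~\ref{lem:poisson-diff} to the fixed-sample information sequence $a_n \triangleq I_P(X;Z^n)$. Since $0\le a_n\le H_P(X)\le\log|\X|$, the growth condition holds trivially. We also need $I_P(X;Y_\gamma)=\E[a_{N_\gamma}]$. This holds because $N_\gamma$ is independent of $X$, so $I_P(X;Y_\gamma)=I_P(X;N_\gamma)+I_P(X;Z^{N_\gamma}| N_\gamma)=0+\sum_n \Pp(N_\gamma=n)\,I_P(X;Z^n)$. Lemma~\ref{lem:poisson-diff} with $k=1$ then gives, for every $\gamma\ge0$ (right derivative at $0$),
\[
\frac{d}{d\gamma}I_P(X;Y_\gamma)=\E\bigl[I_P(X;Z^{N_\gamma+1})-I_P(X;Z^{N_\gamma})\bigr].
\]

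The second step identifies the forward difference with a conditional mutual information. By the chain rule, $I_P(X;Z^{n+1})-I_P(X;Z^n)=I_P(X;Z_{n+1}| Z^n)$. Conditionally on $X$, the variable $Z_{n+1}$ is a fresh observation independent of $Z^n$, and $N_\gamma$ is independent of everything else. Averaging over $N_\gamma=n$ therefore gives $\E[I_P(X;Z_{N_\gamma+1}| Z^{N_\gamma},N_\gamma=\cdot)]=I_P(X;Z'| Y_\gamma)$, where $Z'$ has the stated conditional law. This proves \eqref{eq:first-cmi}.

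The last step rewrites the conditional mutual information as the KL risk. Given $Y_\gamma=y$ and $X=x$, the law of $Z'$ is $W_x$, because $Z'$ is conditionally independent of $Y_\gamma$ given $X$. Given only $Y_\gamma=y$, its law is $\sum_x P(x| y)W_x=\E_P[W| Y_\gamma=y]=\widehat W_P^\gamma$, as in \eqref{eq:predrow}. Hence
\[
I_P(X;Z'| Y_\gamma)=\E_P\bigl[\KL{P_{Z'| X,Y_\gamma}}{P_{Z'| Y_\gamma}}\bigr]=\E_P\bigl[\KL{W}{\widehat W_P^\gamma}\bigr]=\Rmat(\gamma),
\]
which is \eqref{eq:first}.

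The only point requiring care is zero probabilities. Since $W_x(z)>0$ with $P(x| y)>0$ forces $\widehat W_P^\gamma(z)>0$, the row $W$ is absolutely continuous with respect to $\widehat W_P^\gamma$ almost surely, so each KL term is finite. All quantities are also bounded by $\log|\X|$, so every expectation and interchange is finite. The main obstacle is therefore already handled by Lemma~\ref{lem:poisson-diff}, namely justifying termwise differentiation; the rest is a chain-rule bookkeeping exercise.
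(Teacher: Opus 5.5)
Your proposal is correct and follows the paper's proof essentially step for step. You write $I_P(X;Y_\gamma)=\E[I_P(X;Z^{N_\gamma})]$ and apply Lemma~\ref{lem:poisson-diff} to this bounded sequence. You identify the increment by the chain rule as $I_P(X;Z_{n+1}| Z^n)$ and average over $N_\gamma$ to obtain $I_P(X;Z'| Y_\gamma)$. You then convert this to $\E_P[\KL{W}{\widehat W_P^\gamma}]$ via the conditional-KL representation, handling zero channel entries by the same posterior-support observation the paper uses.
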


\begin{proof}
For $n\ge0$, let $I_n=I_P(X;Z^n)$, with $I_0=0$. Since $N_\gamma$ is independent of $(X,Z_1,Z_2,\ldots)$, conditioning on the observation count gives
\begin{equation}
\begin{aligned}
I_P(X;Y_\gamma)
&=I_P(X;N_\gamma)+I_P(X;Z^{N_\gamma}| N_\gamma)=\E[I_{N_\gamma}].
\end{aligned}
\end{equation}
For a fixed sample count, the chain rule identifies the information gained from one additional observation:
\begin{equation}
I_{n+1}-I_n=I_P(X;Z_{n+1}| Z^n).
\end{equation}
Since $0\le I_n\le H_P(X)<\infty$, Lemma~\ref{lem:poisson-diff} gives
\begin{equation}
\begin{aligned}
\frac{d}{d\gamma}I_P(X;Y_\gamma)
=\E[I_{N_\gamma+1}-I_{N_\gamma}]=\sum_{n=0}^\infty \Pr(N_\gamma=n)
I_P(X;Z_{n+1}| Z^n)&=I_P(X;Z'| Y_\gamma).
\end{aligned}
\end{equation}
The last equality follows by conditioning on $N_\gamma$ and using that $Z'$ has the same conditional law as a fresh channel output.

\noindent Conditioned on both $X$ and $Y_\gamma$, the law of $Z'$ is $W_X$, whereas its law given only $Y_\gamma$ is $\widehat W_P^\gamma$. The conditional KL representation of mutual information therefore gives
\begin{equation}
I_P(X;Z'| Y_\gamma)
=
\E_P\left[\KL{W}{\widehat W_P^\gamma}\right],
\end{equation}
which proves \eqref{eq:first}. This calculation permits zero channel entries: at every positive-probability history, the posterior mixture assigns positive mass wherever a channel row with positive posterior probability does.
\end{proof}

\begin{remark}[How does KL appear?]\label{rem:poisson-kl}
For the equivalent vector Poisson channel \eqref{eq:vectorpois}, the natural Poisson loss identified by Atar and Weissman \cite{atar2012poisson} sums coordinatewise:
\begin{equation}
\sum_{z\in\Z}\left[W(z)\log\frac{W(z)}{\widehat W(z)}-W(z)+\widehat W(z)\right].
\end{equation}
Because \(W\) and \(\widehat W\) are probability vectors, the linear terms cancel, leaving exactly
\(
\KL{W}{\widehat W}
\).
Thus Theorem~\ref{thm:first} recovers the simplex specialization of the Atar--Weissman Poisson information--estimation identity. The vector mismatch theorem in \cite{atar2012poisson} is stated for intensity inputs whose components are bounded between two positive constants; the finite-alphabet proof above handles zero channel entries directly.
\end{remark}

\noindent At $\gamma=0$, the observation is the deterministic empty record, so $I_P(X;Y_0)=0$. The derivative is continuous by Lemma~\ref{lem:poisson-diff}; integrating \eqref{eq:first} therefore gives
\begin{equation}\label{eq:integrated-matched}
\boxed{
I_P(X;Y_\gamma)=\int_0^\gamma \cR_{P,P}(\alpha)\,d\alpha.
}
\end{equation}

\section{Mismatched estimation and relative entropy}\label{sec:static-mismatch}
The matched predictor in Section~\ref{sec:first-order} uses the true statistical model. We now keep the true experiment $(P,M)$ and the target row $W$ fixed, first allowing the estimator to use an incorrect prior and then an incorrect channel as well. In both cases, relative entropy between the true and nominal observation laws accounts for the resulting excess prediction loss.

\subsection{Prior mismatch with a fixed channel}\label{sec:prior-mismatch}

Let $P$ be the true input prior and $Q$ the prior used by the estimator. The channel $M$ and the Poisson sampling law are the same under both models. The estimator therefore uses the correct likelihood for each possible input, but combines these likelihoods with different prior weights. This can lead to different posterior predictions from the same observed record.

\noindent Assume $P\ll Q$, meaning that $Q(x)>0$ whenever $P(x)>0$. Thus the nominal model does not exclude any input that can occur under the true model. Every history of positive probability under $P$ also has positive probability under $Q$, and the nominal predictive row is positive wherever the true predictive row is positive.

Define
\begin{equation}\label{eq:mismatch-risk}
\cR_{P,Q}(\gamma)
\triangleq
\E_P\left[
\KL{W}{\widehat W_Q^\gamma}
\right],
\text{ }
\widehat W_Q^\gamma=\E_Q[W| Y_\gamma].
\end{equation}

\noindent Here $\E_Q$ constructs the predictor, whereas $\E_P$ evaluates its performance under the true experiment. The additional cost of using $Q$ rather than $P$ is the risk difference $\cR_{P,Q}(\gamma)-\cR_{P,P}(\gamma)$.

\noindent Let $P_{Y_\gamma}$ and $Q_{Y_\gamma}$ denote the laws of $Y_\gamma$ under the two priors.

\begin{theorem}[Mismatched information--estimation identity]\label{thm:mismatch}
For \(\gamma\ge0\),
\begin{equation}\label{eq:mismatch-master}
\boxed{
\frac{d}{d\gamma}
\left[
I_P(X;Y_\gamma)+\KL{P_{Y_\gamma}}{Q_{Y_\gamma}}
\right]
=
\cR_{P,Q}(\gamma).
}
\end{equation}
The derivative is finite and continuous in $\gamma$; at $\gamma=0$, it is taken from the right. Consequently,
\begin{equation}\label{eq:mismatch-integrated}
\boxed{
I_P(X;Y_\gamma)+\KL{P_{Y_\gamma}}{Q_{Y_\gamma}}
=
\int_0^\gamma \cR_{P,Q}(\alpha)\,d\alpha.
}
\end{equation}
\end{theorem}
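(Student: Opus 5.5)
The plan is to run the same argument as in the proof of Theorem~\ref{thm:first}, now applied to the combined functional. First I reduce it to a Poisson average of fixed-sample quantities. Then Lemma~\ref{lem:poisson-diff} turns the derivative into a one-step increment, and the chain rule for log-likelihood ratios identifies that increment as a KL prediction loss.

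\textbf{Step 1: reduce to fixed sample sizes.} Let $q_n(z^n)=\sum_x Q(x)\prod_{i\le n}W_x(z_i)$ and $p_n(z^n)=\sum_x P(x)\prod_{i\le n}W_x(z_i)$. Since $N_\gamma$ has the same law under both priors and is independent of everything else, the count factor cancels in every likelihood ratio. This gives
\[
I_P(X;Y_\gamma)+\KL{P_{Y_\gamma}}{Q_{Y_\gamma}}=\E_P\!\left[\log\frac{\prod_{i\le N_\gamma}W_X(Z_i)}{q_{N_\gamma}(Z^{N_\gamma})}\right]=\E[a_{N_\gamma}],
\]
where
\[
a_n\triangleq I_n+\KL{P_{Z^n}}{Q_{Z^n}}=\E_P\!\left[\log\frac{\prod_{i\le n}W_X(Z_i)}{q_n(Z^n)}\right].
\]
The first equality uses the decomposition $\log\frac{p(y|x)}{q(y)}=\log\frac{p(y|x)}{p(y)}+\log\frac{p(y)}{q(y)}$.

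\textbf{Step 2: a uniform bound on $a_n$.} We have $0\le I_n\le H_P(X)$. By the data-processing inequality for relative entropy (the map $X\mapsto Z^n$ through the same channel), $0\le \KL{P_{Z^n}}{Q_{Z^n}}\le \KL{P}{Q}$, which is finite because $P\ll Q$ on a finite alphabet. Hence $|a_n|\le H_P(X)+\KL{P}{Q}$ for all $n$, so the hypothesis of Lemma~\ref{lem:poisson-diff} holds with a constant $K$.

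The lemma then gives
\[
\frac{d}{d\gamma}\E[a_{N_\gamma}]=\E[a_{N_\gamma+1}-a_{N_\gamma}].
\]
This is continuous in $\gamma$, since it is the Poisson average of a bounded sequence, with the right derivative at $0$.

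\textbf{Step 3: identify the increment.} Factor $q_{n+1}(z^{n+1})=q_n(z^n)\,\widehat W_Q(z_{n+1}\mid z^n)$, where $\widehat W_Q(\cdot\mid z^n)=\E_Q[W\mid Z^n=z^n]$ is the $Q$-posterior predictive row. Then
\[
a_{n+1}-a_n=\E_P\!\left[\log\frac{W_X(Z_{n+1})}{\widehat W_Q(Z_{n+1}\mid Z^n)}\right]=\E_P\!\left[\KL{W}{\widehat W_Q(\cdot\mid Z^n)}\right],
\]
obtained by conditioning on $(X,Z^n)$, under which $Z_{n+1}\sim W_X$. Averaging over $N_\gamma$, and using that the record $Y_\gamma$ with $N_\gamma=n$ yields exactly $\widehat W_Q^\gamma=\widehat W_Q(\cdot\mid Z^n)$, the result is $\cR_{P,Q}(\gamma)$. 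This proves \eqref{eq:mismatch-master}. Finally, at $\gamma=0$ the record is empty, so the left side of \eqref{eq:mismatch-integrated} vanishes. Integrating the continuous derivative then gives \eqref{eq:mismatch-integrated}.

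\textbf{Main obstacle: zero channel entries.} The delicate point is making every logarithm above finite when $M$ has zeros. I would handle it as in the proof of Theorem~\ref{thm:first}. Fix a history $z^n$ with $p_n(z^n)>0$ and a symbol $z$ with $W_x(z)>0$ for some $x$ of positive $P$-posterior weight. Because $P\ll Q$, that $x$ also has positive $Q$-prior weight, and its likelihood $\prod_{i\le n} W_x(z_i)$ is positive. Hence $x$ has positive $Q$-posterior weight, and so $\widehat W_Q(z\mid z^n)\ge Q(x\mid z^n)W_x(z)>0$. The same observation gives $q_n>0$ wherever $p_n>0$. Consequently every term is finite, and the chain-rule factorization in Step 3 is valid $P$-almost surely.
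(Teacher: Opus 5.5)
Your proposal is correct and follows essentially the same route as the paper. Both write the combined functional as the Poisson average of $\E_P\log[P(Z^n|X)/Q_n(Z^n)]$, bound this sequence uniformly, apply Lemma~\ref{lem:poisson-diff}, and identify the one-step increment as $\E_P\KL{W}{\E_Q[W|Z^n]}$ by factoring off the last observation. The only cosmetic difference is in the uniform bound: you use $I_n\le H_P(X)$ together with data processing $\KL{P_{Z^n}}{Q_{Z^n}}\le\KL{P}{Q}$, whereas the paper uses the pointwise inequality $Q_n(Z^n)\ge Q(X)P(Z^n|X)$, and both yield the same constant $\E_P[-\log Q(X)]=H(P)+\KL{P}{Q}$.
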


\begin{proof}
For fixed \(n\), let \(P_n,Q_n\) denote the output laws of \(Z^n\), and set
\begin{equation}
G_n=I_P(X;Z^n)+\KL{P_n}{Q_n}.
\end{equation}
Combining logarithms,
\begin{equation}\label{eq:mismatch-fixed-count}
G_n
=
\E_P\log\frac{P(Z^n| X)}{Q_n(Z^n)}.
\end{equation}
Since \(Q_n(Z^n)\ge Q(X)P(Z^n| X)\) holds \(P\)-almost surely,
\[
0\le G_n\le \E_P[-\log Q(X)]<\infty.
\]
Thus \((G_n)\) is bounded, justifying Poisson differentiation below.
Hence
\begin{align}
G_{n+1}-G_n
&=\E_P\log\frac{W_X(Z_{n+1})}{Q(Z_{n+1}| Z^n)}\notag\\
&=\E_P\left[\KL{W}{\E_Q[W| Z^n]}\right].\label{eq:mismatch-increment}
\end{align}
The last equality conditions on $(X,Z^n)$: the new output has true law $W_X$, whereas its nominal predictive law is $\E_Q[W| Z^n]$. Because the observation count has the same law under both models and is independent of the input and output sequence, conditioning on it gives
\begin{equation*}
I_P(X;Y_\gamma)+\KL{P_{Y_\gamma}}{Q_{Y_\gamma}}
=
\E[G_{N_\gamma}].
\end{equation*}
Applying Lemma~\ref{lem:poisson-diff} and using \eqref{eq:mismatch-increment} proves \eqref{eq:mismatch-master}, including continuity of the derivative. At $\gamma=0$, both models produce the same deterministic empty record, so the mutual information and output relative entropy vanish. Integration therefore gives \eqref{eq:mismatch-integrated}.
\end{proof}

Subtracting the matched identity yields the exact mismatch-regret formula
\begin{equation}\label{eq:out-div-deriv}
\boxed{
\frac{d}{d\gamma}\KL{P_{Y_\gamma}}{Q_{Y_\gamma}}
=
\cR_{P,Q}(\gamma)-\cR_{P,P}(\gamma).
}
\end{equation}
Therefore
\begin{equation}\label{eq:out-div-int}
\KL{P_{Y_\gamma}}{Q_{Y_\gamma}}
=
\int_0^\gamma
\left[\cR_{P,Q}(\alpha)-\cR_{P,P}(\alpha)\right]d\alpha.
\end{equation}

Thus output relative entropy is the accumulated excess prediction loss over observation budgets up to $\gamma$. The next result identifies this excess loss at each budget directly: it is the expected KL disagreement between the matched and mismatched predictions for a fresh output.

\begin{proposition}[Predictive disagreement]\label{prop:predictive-disagreement}
For every $\gamma\ge0$,
\begin{equation}\label{eq:pyth}
\boxed{
\cR_{P,Q}(\gamma)-\cR_{P,P}(\gamma)
=
\E_P\left[
\KL{\widehat W_P^\gamma}{\widehat W_Q^\gamma}
\right].
}
\end{equation}
Consequently,
\begin{equation}\label{eq:out-div-pred}
\boxed{
\frac{d}{d\gamma}\KL{P_{Y_\gamma}}{Q_{Y_\gamma}}
=
\E_P\left[
\KL{\widehat W_P^\gamma}{\widehat W_Q^\gamma}
\right].
}
\end{equation}
\end{proposition}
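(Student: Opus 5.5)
The plan is to prove \eqref{eq:pyth} pointwise in the record: fix $Y_\gamma$, average over the posterior of $X$, and then take the outer expectation. Write $A=\widehat W_P^\gamma$ and $B=\widehat W_Q^\gamma$, both functions of $Y_\gamma$. For any realization with positive $P$-probability, and any $x$ with positive posterior probability under $P$, I will expand
\[
\KL{W_x}{B}=\KL{W_x}{A}+\sum_{z}W_x(z)\log\frac{A(z)}{B(z)}.
\]
This is the usual three-point (Pythagorean) decomposition of log loss. It needs the support facts already recorded in the paper. First, $A(z)>0$ wherever $W_x(z)>0$, because $A$ is the posterior mixture that includes $W_x$; this is the remark at the end of the proof of Theorem~\ref{thm:first}. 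Second, $B(z)>0$ wherever $A(z)>0$, which is the consequence of $P\ll Q$ stated in Section~\ref{sec:prior-mismatch}. With both facts, every logarithm is finite on the relevant support. All alphabets are finite and $P\ll Q$ makes the $Q$-posterior positive on the $P$-posterior support, so both KL terms are finite.

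Next I will take the conditional expectation $\E_P[\,\cdot\,|Y_\gamma]$. The second term is linear in $W_x$, and $\E_P[W|Y_\gamma]=A$, so it becomes $\sum_z A(z)\log\frac{A(z)}{B(z)}=\KL{A}{B}$. Hence
\[
\E_P[\KL{W}{B}\,|\,Y_\gamma]=\E_P[\KL{W}{A}\,|\,Y_\gamma]+\KL{A}{B}.
\]
Taking $\E_P$ over $Y_\gamma$ gives $\cR_{P,Q}=\cR_{P,P}+\E_P\KL{A}{B}$. All terms are nonnegative and finite: $\cR_{P,Q}(\gamma)$ is finite because it equals the derivative in Theorem~\ref{thm:mismatch}, which the paper states is finite. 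The subtraction is therefore legitimate, and \eqref{eq:pyth} follows.

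Equation \eqref{eq:out-div-pred} then follows by substituting \eqref{eq:pyth} into \eqref{eq:out-div-deriv}.

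The main obstacle is the bookkeeping with zero channel entries. I must make sure the split of $\log(W_x/B)$ into $\log(W_x/A)+\log(A/B)$ is applied only on $\{W_x(z)>0\}$. I must also check that the linear term, after averaging, is supported exactly where $A(z)>0$. On that set the absolute-continuity chain $W_x\ll A\ll B$ holds, so no $\infty-\infty$ terms arise.
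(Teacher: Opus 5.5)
Your proposal is correct and takes essentially the same route as the paper. You fix a $P$-positive history, split $\log\bigl(W_x/\widehat W_Q^\gamma\bigr)$ through $\widehat W_P^\gamma$, and use $\E_P[W\mid Y_\gamma]=\widehat W_P^\gamma$ to turn the cross term into $\KL{\widehat W_P^\gamma}{\widehat W_Q^\gamma}$; averaging and combining with \eqref{eq:out-div-deriv} finishes. The only difference is cosmetic: the paper states the conditional decomposition for an arbitrary $q$ positive on the support of $\widehat W_P^\gamma(y)$, so it can reuse it for the nominal-channel predictor, whereas you substitute $\widehat W_Q^\gamma$ directly.
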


\begin{proof}
Fix a history $y$ of positive probability under $P$, and evaluate both posterior predictors at this history. For any probability vector $q$ positive on the support of $\widehat W_P^\gamma(y)$, split the logarithm in $\KL{W}{q}$ through $\widehat W_P^\gamma(y)$. Taking the conditional expectation and using $\E_P[W| Y_\gamma=y]=\widehat W_P^\gamma(y)$ gives
\begin{equation}\label{eq:conditional-kl-decomposition}
\E_P[\KL{W}{q}| y]
=
\E_P[\KL{W}{\widehat W_P^\gamma}| y]
+
\KL{\widehat W_P^\gamma}{q}.
\end{equation}
The choice $q=\widehat W_Q^\gamma(y)$ satisfies the support condition because $P\ll Q$. Substituting this choice and averaging over $Y_\gamma$ proves \eqref{eq:pyth}. Combining it with \eqref{eq:out-div-deriv} gives \eqref{eq:out-div-pred}.

\end{proof}

Thus the derivative of output relative entropy is exactly the expected KL disagreement between the matched and mismatched predictions for the next sample.

\subsection{Simultaneous prior and channel mismatch}\label{sec:joint-mismatch}

Suppose the estimator uses both a nominal prior \(Q\) and a nominal channel \(\widetilde M\) with rows \(\widetilde W_x=\widetilde M(\cdot| x)\), while the true model remains \((P,M)\). Both models use the same independent \(\Pois(\gamma)\) count, and both prior--channel pairs remain fixed as \(\gamma\) varies. Unless a weaker condition is explicitly stated, the static results below assume
\begin{equation}\label{cm:eq:standing-support}
P\ll Q,\text{ } W_x\ll\widetilde W_x
\text{ }\text{for every }x\text{ with }P(x)>0.
\end{equation}
For finite alphabets these assumptions imply \(D(P\|Q)<\infty\) and
\(\kappa=\sum_{x:P(x)>0}P(x)D(W_x\|\widetilde W_x)<\infty\). Here \(\kappa\) abbreviates \(\kappa(P;M,\widetilde M)\). These assumptions permit zero entries elsewhere and ensure that the nominal likelihood and its required next-symbol probabilities are positive on true histories.

Write $P_\gamma^M$ and $Q_\gamma^{\widetilde M}$ for the true and nominal laws of $Y_\gamma$. Let $P_M(x|y)$ and $Q_{\widetilde M}(x|y)$ denote the posterior probabilities of the input computed under $(P,M)$ and $(Q,\widetilde M)$, respectively. For a fixed observed record $y$, the two predictive rows are
\begin{align}
\widehat W_{P,\gamma}(z| y)
&=\sum_xP_M(x| y)W_x(z)
,\label{cm:eq:truepred}\\
\widehat{\widetilde W}_{Q,\gamma}(z| y)
&=\sum_xQ_{\widetilde M}(x| y)\widetilde W_x(z).
\label{cm:eq:nompred}
\end{align}
Evaluating the true predictor at the random record recovers the earlier notation:
$\widehat W_{P,\gamma}(\cdot| Y_\gamma)=\widehat W_P^\gamma$.
When the history argument is omitted, both predictive rows are understood to be evaluated at $Y_\gamma$.

Two parts of the estimator change: the posterior is computed with the nominal likelihood, and its prediction averages the nominal rows. The loss still compares this prediction with the true row \(W_X\). Define
\begin{equation}\label{cm:eq:Rmis-def}
\Rmis(\gamma)=\E_{P,M}D(W_X\|\widehat{\widetilde W}_{Q,\gamma}).
\end{equation}
We continue to use $\cR_{P,P}(\gamma)$ for the matched risk.

When \(\widetilde M=M\), the predictor in \eqref{cm:eq:nompred} reduces to \(\widehat W_Q^\gamma\) and \(\Rmis=\cR_{P,Q}\). The following result therefore extends Theorem~\ref{thm:mismatch} with the same prediction-loss interpretation.

\begin{theorem}[Simultaneous prior and channel mismatch]\label{cm:thm:main}
Let
\begin{equation}
\mathcal J(\gamma)
\triangleq
I_{P,M}(X;Y_\gamma)
+
\KL{P_\gamma^M}{Q_\gamma^{\widetilde M}}.
\end{equation}
For every $\gamma\ge0$,
\begin{equation}\label{cm:eq:main-identity}
\boxed{
\frac{d}{d\gamma}\mathcal J(\gamma)=\Rmis(\gamma).
}
\end{equation}
The derivative is finite and continuous in $\gamma$; at $\gamma=0$, it is taken from the right. Since $\mathcal J(0)=0$,
\begin{equation}\label{cm:eq:main-integrated}
\boxed{
I_{P,M}(X;Y_\gamma)
+
\KL{P_\gamma^M}{Q_\gamma^{\widetilde M}}
=
\int_0^\gamma \Rmis(s)\,ds.
}
\end{equation}
\end{theorem}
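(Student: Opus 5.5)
We follow the proof of Theorem~\ref{thm:mismatch}, now with a nominal likelihood that differs from the true one. For fixed $n$, let $P_n^M$ and $Q_n^{\widetilde M}$ be the true and nominal laws of $Z^n$, so that $Q_n^{\widetilde M}(z^n)=\sum_x Q(x)\prod_{i}\widetilde W_x(z_i)$, and set
\[
G_n=I_{P,M}(X;Z^n)+\KL{P_n^M}{Q_n^{\widetilde M}}=\E_{P,M}\log\frac{P(Z^n| X)}{Q_n^{\widetilde M}(Z^n)},
\]
where $P(Z^n| X)=\prod_i W_X(Z_i)$. The two logarithms combine exactly as in \eqref{eq:mismatch-fixed-count}. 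The count $N_\gamma$ has the same law under both models and is independent of everything else, so conditioning on it gives $\mathcal J(\gamma)=\E[G_{N_\gamma}]$. Then $\mathcal J(0)=G_0=0$.

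The telescoping step is as before. Write $Q_{\widetilde M}(z| z^n)=\sum_x Q_{\widetilde M}(x| z^n)\widetilde W_x(z)$ for the nominal next-symbol predictive. Then
\[
G_{n+1}-G_n=\E_{P,M}\log\frac{W_X(Z_{n+1})}{Q_{\widetilde M}(Z_{n+1}| Z^n)}=\E_{P,M}\,\KL{W_X}{\widehat{\widetilde W}_{Q}(\cdot| Z^n)}.
\]
The second equality comes from conditioning on $(X,Z^n)$: given these, $Z_{n+1}$ has law $W_X$. The standing assumption \eqref{cm:eq:standing-support} guarantees that every quantity here is finite. On a true history, $P\ll Q$ and $W_x\ll\widetilde W_x$ make the nominal likelihood of $z^n$ positive, because the true $x$ carries positive nominal weight. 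Also $Q_{\widetilde M}(z| z^n)\ge Q_{\widetilde M}(x| z^n)\widetilde W_x(z)>0$ whenever $W_x(z)>0$. Averaging over the Poisson count identifies $\E[\Delta G_{N_\gamma}]$ with $\Rmis(\gamma)$.

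The main obstacle is that Lemma~\ref{lem:poisson-diff} needs the growth bound $|G_n|\le K(1+n)$, and the bound $G_n\le \E_P[-\log Q(X)]$ from Theorem~\ref{thm:mismatch} no longer holds. Because the channel is mismatched, $G_n$ genuinely grows linearly in $n$. We handle this by bounding the nominal mixture below by its true-input term:
\[
Q_n^{\widetilde M}(Z^n)\ge Q(X)\prod_{i}\widetilde W_X(Z_i).
\]
This gives
\[
G_n\le \E_P[-\log Q(X)]+\E_{P,M}\sum_{i=1}^n\log\frac{W_X(Z_i)}{\widetilde W_X(Z_i)}=\E_P[-\log Q(X)]+n\kappa.
\]
The first term is finite since $P\ll Q$ on a finite alphabet. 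For the lower bound, $G_n\ge0$, because both of its summands are nonnegative. Hence $|G_n|\le K(1+n)$ with $K=\max(\E_P[-\log Q(X)],\kappa)$.

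Lemma~\ref{lem:poisson-diff} then yields \eqref{cm:eq:main-identity}, and it also shows the derivative is finite and continuous. At $\gamma=0$ the derivative is one-sided. Integrating from $\mathcal J(0)=0$ gives \eqref{cm:eq:main-integrated}.
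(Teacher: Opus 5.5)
Your proposal is correct and follows essentially the same route as the paper. Both arguments write the fixed-count functional as the combined likelihood-ratio expectation, bound the nominal mixture below by its true-label term to get $0\le G_n\le \E_P[-\log Q(X)]+n\kappa$, factor off one observation into the nominal-predictor KL risk, and invoke the linear-growth form of Lemma~\ref{lem:poisson-diff}. The only point the paper states more explicitly is that, on true histories over finite alphabets, the log likelihood ratios are bounded by constants of order $1+n$, which justifies writing $G_{n+1}-G_n$ as a single expectation; your positivity remarks supply the same fact.
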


\begin{proof}
For exactly $n$ observations, write $P_n^M$ and $Q_n^{\widetilde M}$ for the corresponding output laws and define
\begin{equation}\label{cm:eq:Jn-def}
J_n
\triangleq
I_{P,M}(X;Z^n)+\KL{P_n^M}{Q_n^{\widetilde M}}.
\end{equation}
The likelihood representation \eqref{eq:mismatch-fixed-count} applies with \(Q_n\) replaced by \(Q_n^{\widetilde M}\): the true conditional likelihood remains \(\prod_{i=1}^nW_X(Z_i)\). On a true history with $X=x$, the nominal denominator is at least
$Q(x)\prod_{i=1}^n\widetilde W_x(Z_i)$. Consequently,
\begin{equation}\label{cm:eq:Jn-bound}
0\le J_n\le\E_P[-\log Q(X)]+n\kappa
=H(P)+D(P\|Q)+n\kappa.
\end{equation}
Unlike the prior-only case, the bound may now grow linearly with the sample count. The linear-growth form of Lemma~\ref{lem:poisson-diff} is sufficient for this extension. For every $x$ with $P(x)>0$, both $W_x$ and $\widetilde W_x$ are positive on the support of $W_x$. Since the alphabets are finite, the corresponding log likelihood ratios are bounded in absolute value by constants of order $1+n$. In particular, the subtraction below involves finite expectations. 

\noindent We now write $\widehat{\widetilde W}_{Q,n}(\cdot| Z^n)$ for the nominal predictive row after exactly $n$ observations. Factoring off one observation as in \eqref{eq:mismatch-increment}, now using the nominal-channel predictor \eqref{cm:eq:nompred}, gives
\begin{equation}\label{cm:eq:Jn-increment}
J_{n+1}-J_n
=\E_{P,M}
\KL{W_X}{\widehat{\widetilde W}_{Q,n}(\cdot| Z^n)}.
\end{equation}
Because the count has the same Poisson law under both models and, under each model, is independent of the input and the full output sequence,
\begin{equation}
\mathcal J(\gamma)=\E[J_{N_\gamma}].
\end{equation}
The bound \eqref{cm:eq:Jn-bound} permits Lemma~\ref{lem:poisson-diff} to be applied to \eqref{cm:eq:Jn-increment}, yielding \eqref{cm:eq:main-identity} with a right derivative at zero. The fixed-count risks, being nonnegative increments, also satisfy an $O(1+n)$ bound; their Poisson averages are therefore finite and continuous, including at zero. Since \(J_0=\mathcal J(0)=0\), integration gives \eqref{cm:eq:main-integrated}.
\end{proof}

Subtracting \eqref{eq:first} again isolates excess prediction loss, just as in \eqref{eq:out-div-deriv}. The conditional KL decomposition behind Proposition~\ref{prop:predictive-disagreement} also applies to the nominal-channel predictor.

\begin{corollary}[Derivative of output relative entropy]\label{cm:cor:Dout}
Define
\begin{equation}
\Dout(\gamma)
\triangleq
\KL{P_\gamma^M}{Q_\gamma^{\widetilde M}}.
\end{equation}
Then
\begin{align}
\Dout'(\gamma)
&=\Rmis(\gamma)-\Rmat(\gamma)\label{cm:eq:Dout-diff-risk}\\
&=
\E_{P,M}
\KL{\widehat W_{P,\gamma}(\cdot| Y_\gamma)}
{\widehat{\widetilde W}_{Q,\gamma}(\cdot| Y_\gamma)}.
\label{cm:eq:Dout-predictive}
\end{align}
Consequently,
\begin{equation}\label{cm:eq:Dout-integrated}
\boxed{
\Dout(\gamma)
=
\int_0^\gamma
\E_{P,M}
\KL{\widehat W_{P,s}}{\widehat{\widetilde W}_{Q,s}}
\,ds.
}
\end{equation}
In particular, $\Dout(\gamma)$ is nondecreasing in $\gamma$.
\end{corollary}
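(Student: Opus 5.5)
The plan is to subtract the matched identity from Theorem~\ref{cm:thm:main}, identify the resulting excess risk by a conditional Pythagorean split as in Proposition~\ref{prop:predictive-disagreement}, and then integrate.

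\textbf{Step 1: the derivative as a risk difference.} By Theorem~\ref{cm:thm:main}, $\mathcal J=I_{P,M}(X;Y_\gamma)+\Dout(\gamma)$ is differentiable on $[0,\infty)$ with continuous derivative $\Rmis(\gamma)$. By Theorem~\ref{thm:first}, applied to the true pair $(P,M)$, $I_{P,M}(X;Y_\gamma)$ is differentiable with derivative $\Rmat(\gamma)$. Since both terms are finite, we may subtract, which gives $\Dout'(\gamma)=\Rmis(\gamma)-\Rmat(\gamma)$. This is \eqref{cm:eq:Dout-diff-risk}, with a right derivative at zero. Both derivatives are continuous, so $\Dout'$ is continuous.

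\textbf{Step 2: the risk difference as predictive disagreement.} Fix a history $y$ with positive probability under $(P,M)$, and write $\widehat W=\widehat W_{P,\gamma}(\cdot|y)=\E_{P,M}[W_X|Y_\gamma=y]$. Decomposition \eqref{eq:conditional-kl-decomposition} gives
\[
\E_{P,M}[\KL{W_X}{q}|y]=\E_{P,M}[\KL{W_X}{\widehat W}|y]+\KL{\widehat W}{q}
\]
for any probability vector $q$ positive on the support of $\widehat W$. Its proof uses only the linearity of $\sum_z W_X(z)\log(\widehat W(z)/q(z))$ in $W_X$ together with the fact that $\widehat W$ is the true posterior mean. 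It therefore does not care where $q$ came from.

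The point needing care, and the one I expect to be the main obstacle, is checking that $q=\widehat{\widetilde W}_{Q,\gamma}(\cdot|y)$ satisfies this support condition. I would argue it as follows.
\begin{itemize}
\item Suppose $\widehat W(z)>0$. Then some $x$ with $P_M(x|y)>0$ has $W_x(z)>0$.
\item $P_M(x|y)>0$ forces $P(x)>0$ and $\prod_i W_x(y_i)>0$.
\item By \eqref{cm:eq:standing-support}, $Q(x)>0$ and $\widetilde W_x(y_i)>0$ for each $i$, so $Q_{\widetilde M}(x|y)>0$. The nominal record probability is positive, so this posterior is well defined.
\item Again by \eqref{cm:eq:standing-support}, $\widetilde W_x(z)>0$, hence $q(z)\ge Q_{\widetilde M}(x|y)\widetilde W_x(z)>0$.
\end{itemize}

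All terms are finite, because finite alphabets and this positivity bound the logarithms. Averaging the decomposition over $Y_\gamma$ under $(P,M)$ therefore gives $\Rmis-\Rmat=\E_{P,M}\KL{\widehat W_{P,\gamma}}{\widehat{\widetilde W}_{Q,\gamma}}$, which is \eqref{cm:eq:Dout-predictive}.

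\textbf{Step 3: integration and monotonicity.} At $\gamma=0$ both laws are the point mass on the empty record, so $\Dout(0)=0$. The derivative is continuous on $[0,\infty)$, so the fundamental theorem of calculus gives \eqref{cm:eq:Dout-integrated}. Alternatively, subtract \eqref{eq:integrated-matched} from \eqref{cm:eq:main-integrated}. Since the integrand is an expected KL divergence, it is nonnegative, and hence $\Dout$ is nondecreasing.
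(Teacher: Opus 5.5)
Your proposal is correct and follows essentially the same route as the paper: subtract \eqref{eq:first} from \eqref{cm:eq:main-identity}, apply the conditional KL decomposition \eqref{eq:conditional-kl-decomposition} with $q=\widehat{\widetilde W}_{Q,\gamma}(\cdot|y)$ and average under the true law, then integrate from $\Dout(0)=0$. Your explicit check that the nominal predictor is positive wherever the true predictor is positive correctly fills in the step the paper covers only by citing the standing support condition \eqref{cm:eq:standing-support}.
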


\begin{proof}
Subtract \eqref{eq:first} from \eqref{cm:eq:main-identity} for the first equality. For the second, apply \eqref{eq:conditional-kl-decomposition} with \(q=\widehat{\widetilde W}_{Q,\gamma}(\cdot| y)\) and average over the true observation law. The support condition \eqref{cm:eq:standing-support} and the risk bounds in the preceding proof make these quantities finite. Integration from \(\Dout(0)=0\) gives \eqref{cm:eq:Dout-integrated}.
\end{proof}

Equation~\eqref{cm:eq:Dout-predictive} gives a direct predictive interpretation:
\begin{quote}
The rate at which the true and nominal output laws separate is exactly the expected KL disagreement between their predictions for the next observation.
\end{quote}
This statement includes prior mismatch and channel mismatch on equal footing.

\begin{remark}[Weaker support for output and predictive identities]\label{cm:rem:active-cover}
Theorem~\ref{cm:thm:main} and Corollary~\ref{cm:cor:Dout} also hold if, for every $x$ with $P(x)>0$, there exists a possibly different active label $j_x$ such that $Q(j_x)>0$ and $W_x\ll\widetilde W_{j_x}$. Indeed, their proofs then use
\begin{equation}\label{cm:eq:active-cover-bound}
0\le J_n\le
\sum_{x:P(x)>0}P(x)\left[-\log Q(j_x)+nD(W_x\|\widetilde W_{j_x})\right]
\end{equation}
in place of \eqref{cm:eq:Jn-bound}. The same likelihood and risk bounds justify differentiation and conditioning. This is a support condition, not an exact-representation condition: $\widetilde W_{j_x}$ need not equal $W_x$, and the nominal label $j_x$ need not equal the true label $x$. Prediction of the observations requires only this support coverage. The posterior conservation law below instead compares beliefs about the same latent labels, so we retain the stronger assumptions \eqref{cm:eq:standing-support} for that law.

\end{remark}

\subsection{Posterior mismatch and channel discrepancy}\label{sec:posterior-mismatch}

The prediction identities above concern the induced observation law. To understand their implications for inference about the latent label, we apply the KL chain rule to the joint law of $(X,Y_\gamma)$. This gives a conservation law separating observable prediction discrepancy from disagreement between label posteriors, under the same-label support assumptions \eqref{cm:eq:standing-support}.

\noindent Recall the same-label average channel discrepancy
\begin{equation}\label{cm:eq:kappa}
\kappa(P;M,\widetilde M)
=
\E_P\KL{W_X}{\widetilde W_X}.
\end{equation}
This quantity compares the true and nominal output laws at the same input label. It is the expected excess logarithmic loss incurred by using $\widetilde W_X$ instead of $W_X$ even when $X$ is known. It therefore measures channel discrepancy without uncertainty about the input and is independent of the observation budget $\gamma$. Also define the expected posterior mismatch
\begin{equation}\label{cm:eq:Dpost-def}
\Dpost(\gamma)
\triangleq
\E_{P_\gamma^M}
\KL{P_M(\cdot| Y_\gamma)}
{Q_{\widetilde M}(\cdot| Y_\gamma)}.
\end{equation}
Both arguments are posterior distributions on the same input alphabet $\X$, formed from the same observed record. The expectation averages their disagreement over records generated by the true model.

\begin{theorem}[Relative-entropy conservation]\label{cm:thm:conservation}
For every $\gamma\ge0$,
\begin{equation}\label{cm:eq:conservation}
\boxed{
D(P\|Q)
+
\gamma\,\kappa(P;M,\widetilde M)
=
\Dout(\gamma)+\Dpost(\gamma).
}
\end{equation}
Therefore
\begin{equation}\label{cm:eq:Dpost-derivative}
\boxed{
\Dpost'(\gamma)
=
\kappa(P;M,\widetilde M)
-
\E_{P,M}\KL{\widehat W_{P,\gamma}}{\widehat{\widetilde W}_{Q,\gamma}}.
}
\end{equation}
\end{theorem}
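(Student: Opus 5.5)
The plan is to apply the KL chain rule to the joint law of $(X,Y_\gamma)$ in two orders. Write $\Pi_P$ for the true joint law, in which $X\sim P$ and $Y_\gamma$ is generated through $M$ with an independent $\Pois(\gamma)$ count. Write $\Pi_Q$ for the nominal joint law, in which $X\sim Q$ and the outputs are generated through $\widetilde M$ with the same count law.

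\emph{First order: condition on $X$.} Here $\KL{\Pi_P}{\Pi_Q}=D(P\|Q)+\E_P\KL{P(Y_\gamma\mid X)}{Q(Y_\gamma\mid X)}$. Given $X=x$, the count $N_\gamma$ has the same law under both models. Given the count $n$, the outputs are i.i.d.\ $W_x$ versus i.i.d.\ $\widetilde W_x$, so the conditional divergence equals $\E[N_\gamma]\,D(W_x\|\widetilde W_x)=\gamma D(W_x\|\widetilde W_x)$. It is cleaner to use the count histogram $C$: by \eqref{eq:vectorpois} it has independent $\Pois(\gamma W_x(z))$ versus $\Pois(\gamma\widetilde W_x(z))$ coordinates. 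The ordering given the histogram has the same input-free law under both models, so it contributes nothing. Averaging over $X\sim P$ gives $D(P\|Q)+\gamma\kappa$.

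\emph{Second order: condition on $Y_\gamma$.} Here $\KL{\Pi_P}{\Pi_Q}=\KL{P_\gamma^M}{Q_\gamma^{\widetilde M}}+\E_{P_\gamma^M}\KL{P_M(\cdot\mid Y_\gamma)}{Q_{\widetilde M}(\cdot\mid Y_\gamma)}=\Dout(\gamma)+\Dpost(\gamma)$. Equating the two expansions yields \eqref{cm:eq:conservation}.

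The main obstacle is justifying both chain rules, and in particular the finiteness of every term, so that no expression of the form $\infty-\infty$ arises. Under \eqref{cm:eq:standing-support}, $P\ll Q$ and $W_x\ll\widetilde W_x$ for every $x$ with $P(x)>0$, so $\Pi_P\ll\Pi_Q$ on the countable space of (label, record) pairs. The log density ratio is $\log\frac{P(x)}{Q(x)}+\sum_i\log\frac{W_x(Z_i)}{\widetilde W_x(Z_i)}$. Since the alphabets are finite, this is bounded by a constant times $1+N_\gamma$, and $N_\gamma$ has all moments. The joint divergence is therefore finite, and the chain rule holds in both orders with all terms nonnegative and finite. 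The positivity of the nominal posterior on the support of the true posterior, needed for $\Dpost$, follows from the same conditions. This is exactly where the same-label assumption is used, rather than the weaker cover condition of Remark~\ref{cm:rem:active-cover}.

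For \eqref{cm:eq:Dpost-derivative}, rearrange \eqref{cm:eq:conservation} as $\Dpost(\gamma)=D(P\|Q)+\gamma\kappa-\Dout(\gamma)$ and differentiate. Corollary~\ref{cm:cor:Dout} gives $\Dout$ as a finite, continuous derivative with the predictive form \eqref{cm:eq:Dout-predictive}, so $\Dpost'(\gamma)=\kappa-\E_{P,M}\KL{\widehat W_{P,\gamma}}{\widehat{\widetilde W}_{Q,\gamma}}$. At $\gamma=0$ this derivative is taken from the right.
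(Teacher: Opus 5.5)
Your proposal is correct and follows essentially the same route as the paper: you apply the KL chain rule to the joint law of $(X,Y_\gamma)$ in both orders, use the common Poisson count to reduce the conditional term to $\gamma D(W_x\|\widetilde W_x)$, and then obtain the derivative from Corollary~\ref{cm:cor:Dout}. Your explicit $O(1+N_\gamma)$ bound on the joint log-likelihood ratio, which shows that every term in both chain-rule expansions is finite, is a useful detail that the paper leaves implicit.
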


\begin{proof}
By the KL chain rule,
\begin{align}
D\!\left(P^M_{X,Y_\gamma}\middle\|Q^{\widetilde M}_{X,Y_\gamma}\right) =
D(P\|Q)
+
\E_P D\!\left(P^M_{Y_\gamma| X}\middle\|Q^{\widetilde M}_{Y_\gamma| X}\right).
\end{align}
Conditional on $X=x$, the count has the same law under both models. For a fixed count $n$, independence of the marks makes the conditional relative entropy equal to $nD(W_x\|\widetilde W_x)$. Averaging over the count and using $\E[N_\gamma]=\gamma$ therefore gives
\begin{equation}
D\!\left(P^M_{Y_\gamma| X=x}\middle\|Q^{\widetilde M}_{Y_\gamma| X=x}\right)
=
\gamma D(W_x\|\widetilde W_x).
\end{equation}
Applying the KL chain rule in the opposite order gives
\begin{equation}
D\!\left(P^M_{X,Y_\gamma}\middle\|Q^{\widetilde M}_{X,Y_\gamma}\right)
=
\Dout(\gamma)+\Dpost(\gamma),
\end{equation}
which proves \eqref{cm:eq:conservation}. Differentiation and Corollary~\ref{cm:cor:Dout} give \eqref{cm:eq:Dpost-derivative}.
\end{proof}

\noindent When only the prior is wrong, $\widetilde M=M$ and $\kappa=0$, so the expected posterior divergence $\Dpost(\gamma)$ is nonincreasing in the observation budget. This is an averaged statement, not a claim that posterior disagreement decreases after every realized observation. With channel mismatch, the joint-model relative entropy on the left side of \eqref{cm:eq:conservation} grows at rate $\kappa$. The change in expected posterior disagreement is the difference between this rate and the current expected predictive disagreement, as expressed by \eqref{cm:eq:Dpost-derivative}. Additional observations can therefore make the two label posteriors more different on average.

\begin{remark}[No canonical additive split into prior and channel regret]
The prediction regret
\begin{equation}
\E_{P,M}\KL{\widehat W_{P,\gamma}}{\widehat{\widetilde W}_{Q,\gamma}}
\end{equation}
does not, in general, admit a canonical decomposition into nonnegative prior-only and channel-only contributions. The two errors can cancel, as Example~\ref{cm:ex:label-swap} illustrates. Equation~\eqref{cm:eq:conservation} instead partitions initial prior discrepancy plus accumulated same-label channel discrepancy into output discrepancy and residual posterior discrepancy.
\end{remark}

\begin{example}[Label swapping: correct predictions, increasing posterior mismatch]\label{cm:ex:label-swap}
Let $\X=\Z=\{0,1\}$, $P=Q=(1/2,1/2)$, and
\begin{equation}
W_0=\operatorname{Bern}(\eta),\text{ }
W_1=\operatorname{Bern}(1-\eta),\text{ }
\widetilde W_0=W_1,\text{ }\widetilde W_1=W_0,
\text{ } 0<\eta<1/2.
\end{equation}
Relabeling the equally weighted mixture leaves its observation law unchanged, so $P_\gamma^M=Q_\gamma^{\widetilde M}$ for every $\gamma$. The nominal posterior satisfies
$Q_{\widetilde M}(x| y)=P_M(1-x| y)$, and its mixture of the swapped rows is exactly the true predictor. Therefore
\begin{equation}
\Dout(\gamma)=0,\text{ } \Rmis(\gamma)=\Rmat(\gamma),\text{ }
\kappa=(1-2\eta)\log\frac{1-\eta}{\eta}>0.
\end{equation}
The conservation law gives $\Dpost(\gamma)=\gamma\kappa$: label-posterior disagreement grows without bound even though the two output laws and predictors agree exactly.

The same cancellation also permits both a wrong prior and wrong rows. For $P=(p,1-p)$ and $Q=(1-p,p)$, $p\in(0,1)\setminus\{1/2\}$, keep the swapped rows. Again the full output laws coincide. In contrast, either changing only the prior or swapping only the rows changes the one-step marginal, giving positive initial predictive regret by Proposition~\ref{cm:prop:small}. Thus two separately observable discrepancies can cancel when combined.
\end{example}

\subsection{Initial slopes under mismatch}

At $\gamma=0$ no data have been observed. Define the true and nominal one-step marginal output laws
\begin{equation}
\overline W_P\triangleq\sum_xP(x)W_x,
\text{ }
\overline{\widetilde W}_Q\triangleq\sum_xQ(x)\widetilde W_x.
\end{equation}
In the proposition below, $(X,Z)$ denotes one use of the true channel: $X\sim P$ and $Z| X=x\sim W_x$.

\begin{proposition}[Initial slopes]\label{cm:prop:small}
At zero intensity,
\begin{equation}\label{cm:eq:Dout0}
\boxed{
\Dout'(0+)
=
\KL{\overline W_P}{\overline{\widetilde W}_Q}.
}
\end{equation}
Moreover,
\begin{equation}\label{cm:eq:Rmis0}
\boxed{
\Rmis(0)
=
I_{P,M}(X;Z)
+
\KL{\overline W_P}{\overline{\widetilde W}_Q}.
}
\end{equation}
\end{proposition}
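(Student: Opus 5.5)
The plan is to evaluate the already-established derivative identities at $\gamma=0$, where the record is the deterministic empty history. By Corollary~\ref{cm:cor:Dout},
\[
\Dout'(\gamma)=\E_{P,M}\KL{\widehat W_{P,\gamma}}{\widehat{\widetilde W}_{Q,\gamma}},
\]
and this derivative is continuous and equals the right derivative at zero. Theorem~\ref{cm:thm:main} together with Lemma~\ref{lem:poisson-diff} gives the same continuity for $\Rmis$. We therefore only need the predictive rows at $\gamma=0$.

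\textbf{Predictors at zero budget.} With $N_0=0$ almost surely, $Y_0=(0,\varnothing)$ carries no information. Both posteriors then equal their priors: $P_M(\cdot|Y_0)=P$ and $Q_{\widetilde M}(\cdot|Y_0)=Q$. Substituting into \eqref{cm:eq:truepred} and \eqref{cm:eq:nompred} gives
\[
\widehat W_{P,0}=\overline W_P,\qquad \widehat{\widetilde W}_{Q,0}=\overline{\widetilde W}_Q.
\]
Equivalently, one can read this off the Poisson-differentiation formula: $\Dout'(0+)=J_1-J_0$ minus the matched increment $I_1-I_0$, i.e.\ the fixed-count risks at $n=0$. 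The expectation over $Y_0$ is trivial, so \eqref{cm:eq:Dout0} follows at once. The support condition \eqref{cm:eq:standing-support} ensures $\overline W_P\ll\overline{\widetilde W}_Q$, so the divergence is finite.

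\textbf{The risk at zero.} From the definition \eqref{cm:eq:Rmis-def}, $\Rmis(0)=\E_P\KL{W_X}{\overline{\widetilde W}_Q}$. Apply the conditional KL decomposition \eqref{eq:conditional-kl-decomposition} at the empty history, with $q=\overline{\widetilde W}_Q$. Since $\E_P[W_X]=\overline W_P$, this splits the risk as
\[
\Rmis(0)=\E_P\KL{W_X}{\overline W_P}+\KL{\overline W_P}{\overline{\widetilde W}_Q}.
\]
The first term is the conditional KL representation of mutual information for a single use, namely $I_{P,M}(X;Z)$. This matches $\Rmat(0)$ from Theorem~\ref{thm:first}, consistent with \eqref{cm:eq:Dout-diff-risk}.

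\textbf{Main obstacle.} The one delicate point is justifying that $\Dout'(0+)$ equals the value of the identity at $\gamma=0$, rather than merely a limit. This is supplied by the right-derivative statement in Theorem~\ref{cm:thm:main}: $J_n$ grows at most linearly in $n$ by \eqref{cm:eq:Jn-bound}, so Lemma~\ref{lem:poisson-diff} yields $\mathcal J'(0+)=J_1-J_0$ directly.

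The remaining step is checking support so that the splitting of logarithms is legitimate. Wherever $\overline W_P(z)>0$, some $x$ with $P(x)>0$ has $W_x(z)>0$. Then $\widetilde W_x(z)>0$ and $Q(x)>0$, hence $\overline{\widetilde W}_Q(z)>0$.
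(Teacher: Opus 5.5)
Your proof is correct and follows essentially the same route as the paper's. You evaluate the predictive-disagreement identity \eqref{cm:eq:Dout-predictive} at $\gamma=0$, where both predictors reduce to the one-step marginals, and then split $\Rmis(0)$ with the conditional KL decomposition using $q=\overline{\widetilde W}_Q$; your added remarks on the right derivative at zero and on support only make explicit what the paper uses implicitly.
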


\begin{proof}
At $\gamma=0$, the predictors in \eqref{cm:eq:truepred}--\eqref{cm:eq:nompred} reduce to $\overline W_P$ and $\overline{\widetilde W}_Q$. Equation~\eqref{cm:eq:Dout-predictive} gives \eqref{cm:eq:Dout0}. Finally,
\begin{align}
\E_P\KL{W_X}{\overline{\widetilde W}_Q}
&=
\E_P\KL{W_X}{\overline W_P}
+
\KL{\overline W_P}{\overline{\widetilde W}_Q}\\
&=
I_{P,M}(X;Z)
+
\KL{\overline W_P}{\overline{\widetilde W}_Q}.
\end{align}
\end{proof}
\noindent Before any observations are available, the mismatched prediction risk therefore consists of the matched risk, equal to the information in one true channel use, plus the divergence between the true and nominal one-step output marginals. The latter is the entire initial penalty for using the nominal prior--channel pair.

\section{Curvature, diminishing returns, and higher derivatives}\label{sec:curvature}

Returning to the matched model with fixed \((P,M)\), we differentiate \(I_P(\gamma)\triangleq I_P(X;Y_\gamma)\) a second time. Let \(Z_1',Z_2',\ldots\) be fresh observations drawn i.i.d.\ through \(M\) given \(X\), conditionally independently of \(Y_\gamma\) given \(X\). Although independent given \(X\), these outputs may remain dependent given \(Y_\gamma\), since they share the same uncertain channel row. Observing one can therefore change the prediction of the other.

\begin{theorem}[Second derivative and predictive dependence]\label{thm:second}
The function \(\gamma\mapsto I_P(X;Y_\gamma)\) is nondecreasing and concave, with
\begin{equation}\label{eq:second}
\boxed{
\frac{d^2}{d\gamma^2}I_P(X;Y_\gamma)
=-I_P(Z_1';Z_2'| Y_\gamma)\le0.
}
\end{equation}
If
\begin{equation}
\widehat W_{P,+}^\gamma
\triangleq
\E_P[W| Y_\gamma,Z_1'],
\end{equation}
then equivalently
\begin{equation}\label{eq:predictive-update}
\boxed{
-I_P''(\gamma)
=
\E_P\left[
\KL{\widehat W_{P,+}^\gamma}{\widehat W_P^\gamma}
\right].
}
\end{equation}
\end{theorem}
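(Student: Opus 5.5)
The plan is to mirror the proof of Theorem~\ref{thm:first}, using Lemma~\ref{lem:poisson-diff} with $k=2$. With $I_n=I_P(X;Z^n)$, which satisfies $0\le I_n\le H_P(X)$, the lemma gives
\[
I_P''(\gamma)=\E\!\left[I_{N_\gamma+2}-2I_{N_\gamma+1}+I_{N_\gamma}\right].
\]
The chain rule gives $I_{n+1}-I_n=I_P(X;Z_{n+1}| Z^n)$. So the second difference equals
\[
I_P(X;Z_{n+2}| Z^{n+1})-I_P(X;Z_{n+1}| Z^n).
\]
By exchangeability of the conditionally i.i.d.\ outputs given $X$, the second term equals $I_P(X;Z_{n+2}| Z^n)$. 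Hence the fixed-count second difference is
\[
I_P(X;Z_{n+2}| Z^n,Z_{n+1})-I_P(X;Z_{n+2}| Z^n).
\]

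The key step is to identify this difference as $-I_P(Z_{n+1};Z_{n+2}| Z^n)$. Expand $I_P(Z_{n+2};X,Z_{n+1}| Z^n)$ by the chain rule in two orders. Adding $X$ first gives $I(Z_{n+2};X| Z^n)+I(Z_{n+2};Z_{n+1}| X,Z^n)$, and the last term vanishes by conditional independence given $X$. Adding $Z_{n+1}$ first gives $I(Z_{n+2};Z_{n+1}| Z^n)+I(Z_{n+2};X| Z^n,Z_{n+1})$. Equating the two expansions yields the claim.

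Averaging over $N_\gamma=n$ then gives $I_P''(\gamma)=-I_P(Z_1';Z_2'| Y_\gamma)$. This averaging step uses independence of the count from $(X,Z_1,Z_2,\ldots)$ and the fact that $(Z_1',Z_2')$ has the same conditional law as $(Z_{n+1},Z_{n+2})$ given $(X,Z^n)$, exactly as in the last step of the proof of Theorem~\ref{thm:first}. Nonnegativity of conditional mutual information gives concavity. Monotonicity follows from Theorem~\ref{thm:first}, since $\cR_{P,P}\ge0$.

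For \eqref{eq:predictive-update}, I will write $I_P(Z_1';Z_2'| Y_\gamma)$ in conditional KL form. Given $(Y_\gamma,Z_1')$, the law of $Z_2'$ is $\E_P[W_X| Y_\gamma,Z_1']=\widehat W_{P,+}^\gamma$, because $Z_2'$ is conditionally independent of $(Y_\gamma,Z_1')$ given $X$. Given only $Y_\gamma$, its law is $\widehat W_P^\gamma$. Hence
\[
I_P(Z_1';Z_2'| Y_\gamma)=\E_P\,\KL{\widehat W_{P,+}^\gamma}{\widehat W_P^\gamma}.
\]
Alternatively, this follows from the decomposition \eqref{eq:conditional-kl-decomposition}: the one-step risk after appending $Z_1'$ is smaller by exactly this predictive-update divergence.

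The main obstacle is modest. It is the bookkeeping that justifies the exchangeability substitution and the handling of zero channel entries in the KL expressions. As in Theorem~\ref{thm:first}, the updated posterior predictive is absolutely continuous with respect to $\widehat W_P^\gamma$ at every positive-probability history, so all divergences are finite. The growth condition of Lemma~\ref{lem:poisson-diff} is trivially met because $I_n$ is bounded.
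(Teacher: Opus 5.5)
Your proposal is correct and follows essentially the same route as the paper. You apply Lemma~\ref{lem:poisson-diff} with $k=2$ to the bounded $I_n$, use exchangeability to turn $\Delta^2 I_n$ into $I_P(X;Z_{n+2}| Z^n,Z_{n+1})-I_P(X;Z_{n+2}| Z^n)$, identify this as $-I_P(Z_{n+1};Z_{n+2}| Z^n)$ via the two-order chain-rule expansion and conditional independence given $X$, average over $N_\gamma$, and then obtain \eqref{eq:predictive-update} from the conditional KL representation of $I_P(Z_1';Z_2'| Y_\gamma)$.
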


\begin{proof}
Recall that $I_n=I_P(X;Z^n)$ and $0\le I_n\le H_P(X)$. Applying Lemma~\ref{lem:poisson-diff} twice gives
\begin{equation*}
I_P''(\gamma)=\E[\Delta^2 I_{N_\gamma}].
\end{equation*}
For each $n\ge0$, the fixed-sample increment satisfies
\begin{align}
\Delta^2 I_n
&=I_P(X;Z_{n+2}| Z^n,Z_{n+1})
-I_P(X;Z_{n+1}| Z^n)\\
&=I_P(X;Z_{n+2}| Z^n,Z_{n+1})
-I_P(X;Z_{n+2}| Z^n)\\
&=I_P(Z_{n+1};Z_{n+2}| X,Z^n)
-I_P(Z_{n+1};Z_{n+2}| Z^n)\\
&=-I_P(Z_{n+1};Z_{n+2}| Z^n).
\end{align}
The second equality uses conditional exchangeability of the two fresh outputs. The third follows by expanding $I_P(X,Z_{n+1};Z_{n+2}| Z^n)$ in two orders. Conditional independence given $X$ makes the first term in that line zero. Averaging over $N_\gamma$ proves \eqref{eq:second}. Together with the nonnegative first derivative in \eqref{eq:first}, this establishes that the information curve is nondecreasing and concave.

The conditional law of $Z_2'$ given $Y_\gamma$ is $\widehat W_P^\gamma$. After additionally observing $Z_1'$, that law becomes $\widehat W_{P,+}^\gamma$. The conditional KL representation of mutual information therefore gives
\begin{equation*}
I_P(Z_1';Z_2'| Y_\gamma)
=
\E_P\left[
\KL{\widehat W_{P,+}^\gamma}{\widehat W_P^\gamma}
\right],
\end{equation*}
which proves \eqref{eq:predictive-update}. Thus observations have diminishing returns: the expected information supplied by one further output, $I_P'(\gamma)=\cR_{P,P}(\gamma)$, is nonincreasing in the observation budget. This monotonicity holds after averaging over observation records; it does not assert that the conditional prediction risk decreases after every realized sample.
\end{proof}

\noindent At the fixed-count level, the decreasing increments of \(I(X;Z^n)\) are the repeated-observation instance of submodularity of information gain under conditional independence \cite{krause2007submodular,krause2005nonmyopic}. At zero budget, this redundancy also distinguishes priors that are equally informative for one observation. Section~\ref{sec:capacity} uses it to determine the second-order capacity term; the designs in Section~\ref{sec:coupon-design} show why agreement to second order need not imply optimality at positive budgets.

\noindent The same dependence can be expressed through posterior fluctuations of the channel row. Fix a history $y$ of positive probability under $P$, and write $\pi(x)=P(X=x| Y_\gamma=y)$. Let $q_\pi$ denote the common predictive distribution of each fresh output and $r_\pi$ their joint predictive distribution:
\begin{align}
q_\pi(z)&=\sum_x\pi(x)W_x(z),\\
r_\pi(z,z')&=\sum_x\pi(x)W_x(z)W_x(z').
\end{align}
Then
\begin{equation}\label{eq:posterior-second}
I(Z_1';Z_2'| Y_\gamma=y)
=
\KL{r_\pi}{q_\pi\otimes q_\pi},
\end{equation}
and
\begin{equation}
r_\pi(z,z')-q_\pi(z)q_\pi(z')
=
\operatorname{Cov}_\pi(W_X(z),W_X(z')).
\end{equation}
Thus the difference between the joint predictive law and the product of its marginals is exactly the posterior covariance of the corresponding channel-row coordinates. Averaging \eqref{eq:posterior-second} over the observed record recovers $-I_P''(\gamma)$.

\noindent For the local expansion, let $\Z_\pi=\{z\in\Z:q_\pi(z)>0\}$. Pairs outside $\Z_\pi\times\Z_\pi$ have zero probability under both $r_\pi$ and $q_\pi\otimes q_\pi$ and can be omitted. When the relative deviations of $r_\pi$ from $q_\pi\otimes q_\pi$ are small on this support, the divergence admits the quadratic expansion
\begin{equation}\label{eq:quad-second}
I_P(Z_1';Z_2'| Y_\gamma=y)
=
\frac12\sum_{z,z'\in\Z_\pi}
\frac{\operatorname{Cov}_\pi(W_X(z),W_X(z'))^2}
{q_\pi(z)q_\pi(z')}
+\mathcal R.
\end{equation}
Appendix~\ref{app:covariance} gives a precise relative-deviation condition and an explicit bound on $\mathcal R$. This is a local approximation: full support alone does not justify a uniform unweighted remainder as predictive probabilities approach zero. This is the closest finite-alphabet analogue of the squared-posterior-variance expression in the second derivative of Gaussian I--MMSE \cite{guo2011gaussian}.

\subsection{A higher-order co-information hierarchy}
Repeated Poisson differentiation produces higher forward differences of the fixed-sample information; these differences can be expressed as alternating sums of conditional entropies of fresh outputs. For finite-valued random variables $U_1,\ldots,U_k$ and a conditioning variable $V$, write $[k]=\{1,\ldots,k\}$ and $U_S=(U_i)_{i\in S}$. Define conditional co-information \cite{mcgill1954multivariate} with the following sign convention:
\begin{equation}\label{eq:coinf-def}
\coI_k(U_1;\ldots;U_k| V)
\triangleq
\sum_{\emptyset\ne S\subseteq[k]}
(-1)^{|S|+1}H(U_S| V).
\end{equation}
Thus \(\coI_2(U_1;U_2| V)=I(U_1;U_2| V)\), while
\begin{equation}
\coI_3(U_1;U_2;U_3| V)
=I(U_1;U_2| V)-I(U_1;U_2| U_3,V).
\end{equation}

\begin{proposition}[Higher derivatives]\label{prop:higher}
For every integer \(k\ge2\),
\begin{equation}\label{eq:higher}
\boxed{
\frac{d^k}{d\gamma^k}I_P(X;Y_\gamma)
=
(-1)^{k-1}
\coI_k(Z_1';\ldots;Z_k'| Y_\gamma).
}
\end{equation}
In particular,
\begin{equation}\label{eq:third}
I_P^{(3)}(\gamma)
=
\coI_3(Z_1';Z_2';Z_3'| Y_\gamma).
\end{equation}
\end{proposition}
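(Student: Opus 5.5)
The plan follows the proof of Theorem~\ref{thm:second}: reduce to a fixed sample count, compute the $k$th forward difference of $I_n=I_P(X;Z^n)$ as a signed sum of conditional entropies, and then average over the Poisson count. Since $0\le I_n\le H_P(X)$, Lemma~\ref{lem:poisson-diff} gives
\[
I_P^{(k)}(\gamma)=\E\!\left[\Delta^k I_{N_\gamma}\right],\qquad
\Delta^k I_n=\sum_{j=0}^k(-1)^{k-j}\binom{k}{j}I_{n+j}.
\]
It therefore suffices to show that, for each fixed $n$,
\[
\Delta^k I_n=(-1)^{k-1}\coI_k(Z_{n+1};\ldots;Z_{n+k}| Z^n),
\]
where $Z_{n+1},\ldots,Z_{n+k}$ play the role of the fresh outputs. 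Conditioning on $N_\gamma=n$, and using that $(Z_1',\ldots,Z_k')$ together with $Y_\gamma$ has the same joint law as $(Z_{n+1},\ldots,Z_{n+k},Z^n)$, turns the average over $N_\gamma$ into the conditional co-information given $Y_\gamma$.

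For the fixed-$n$ identity, write $I_{n+j}=I_P(X;Z^n)+I_P(X;Z_{n+1},\ldots,Z_{n+j}| Z^n)$. The term $I_P(X;Z^n)$ is annihilated by $\Delta^k$ for $k\ge1$, because $\sum_j(-1)^{k-j}\binom{k}{j}=0$. Next expand
\[
I_P(X;U_{[j]}| V)=H(U_{[j]}| V)-H(U_{[j]}| X,V),
\]
with $U_i=Z_{n+i}$ and $V=Z^n$. Given $X$, the $U_i$ are i.i.d., so $H(U_{[j]}| X,V)=j\,H(Z_1| X)$. This is linear in $j$, and $\Delta^k$ kills linear functions for $k\ge2$. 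This is exactly where the hypothesis $k\ge2$ enters.

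This leaves $\Delta^k$ applied to $h(j)\triangleq H(U_{[j]}| V)$. By conditional exchangeability of the fresh outputs given $V$ (they are i.i.d.\ given $X$, and $X$ is independent of their indices), $H(U_S| V)=h(|S|)$ for every $S\subseteq[k]$. Grouping the definition \eqref{eq:coinf-def} by $|S|=j$ gives
\[
\coI_k=\sum_{j=1}^k(-1)^{j+1}\binom{k}{j}h(j).
\]
Compare this with
\[
\Delta^k h(0)=\sum_{j=0}^k(-1)^{k-j}\binom{k}{j}h(j),
\]
noting that $h(0)=0$ and $(-1)^{k-j}=(-1)^{k-1}(-1)^{j+1}$. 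Hence $\Delta^k h(0)=(-1)^{k-1}\coI_k$, which is the claimed identity.

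The third-derivative formula \eqref{eq:third} is the case $k=3$. As a consistency check, $k=2$ recovers Theorem~\ref{thm:second}, since $-\coI_2=-I(Z_1';Z_2'| Y_\gamma)$.

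The only delicate step is justifying the exchangeability claim $H(U_S| V)=h(|S|)$ and the transfer to conditioning on $Y_\gamma$. Both follow because, given $X$, the entire sequence $Z_1,Z_2,\ldots$ is i.i.d. and $N_\gamma$ is independent of it. Every quantity is finite because the alphabets are finite, so there is no analytic obstacle beyond Lemma~\ref{lem:poisson-diff}.
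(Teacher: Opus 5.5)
Your proposal is correct and follows essentially the same argument as the paper: you remove the term $jH(Z_1| X)$, which is linear in the count and so annihilated by $\Delta^k$ for $k\ge2$; you use conditional exchangeability so that each conditional entropy depends only on the subset size; and you group the co-information sum by subset size before averaging over $N_\gamma$. Your $h(j)=H(Z_{n+1},\ldots,Z_{n+j}| Z^n)$ is exactly the paper's $h_{n+j}-h_n$. The only difference is bookkeeping: you condition on $Z^n$ from the start through the chain rule, whereas the paper writes $I_n=H(Z^n)-nH(Z_1| X)$.
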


\begin{proof}
Let \(h_n=H(Z^n)\).  Conditional i.i.d. observations imply
\begin{equation}
I(X;Z^n)=h_n-nH(Z_1| X).
\end{equation}
For \(k\ge2\), the linear term disappears under \(\Delta^k\). By conditional exchangeability, the conditional entropy of any $j$ of the fresh outputs given $Z^n$ is $h_{n+j}-h_n$. Grouping the terms in \eqref{eq:coinf-def} by subset size therefore gives
\begin{equation}
\begin{aligned}
\coI_k(Z_{n+1};\ldots;Z_{n+k}| Z^n)
&=
\sum_{j=1}^k(-1)^{j+1}\binom{k}{j}(h_{n+j}-h_n)\\
&=
(-1)^{k-1}\Delta^k h_n.
\end{aligned}
\end{equation}
Since $\Delta^k I_n=\Delta^k h_n$ for $k\ge2$, applying Lemma~\ref{lem:poisson-diff} to the bounded sequence $I_n=I_P(X;Z^n)$ and conditioning on $N_\gamma$ proves \eqref{eq:higher}.
\end{proof}

\begin{remark}[No alternating-sign theorem beyond curvature]
The mutual-information curve is always nondecreasing and concave, but its higher derivatives do not in general alternate in sign.  This can be checked within the present model: for instance, consider
\[
P(X=1)=\frac1{10},\text{ } W_0=\left(\frac45,\frac15\right),\text{ } W_1=(0,1).
\]
Then
\[
I_P^{(3)}(0+)
=3H(Z_1)-3H(Z_1,Z_2)+H(Z_1,Z_2,Z_3)
\approx-0.01299748<0.
\]
For a noiseless channel with a nondegenerate prior, \(I_P(\gamma)=(1-e^{-\gamma})H(P)\), so \(I_P^{(3)}(\gamma)=e^{-\gamma}H(P)>0\).  Thus both signs occur already at the third derivative.  The hierarchy \eqref{eq:higher} is an exact identity, not a complete-monotonicity statement.
\end{remark}

\section{High-intensity limits and identifiability}\label{sec:high-intensity}

We now examine the information and accumulated prediction losses as the observation budget grows. We first use the common channel of Section~\ref{sec:prior-mismatch}, and then return to the nominal-channel predictor of Section~\ref{sec:joint-mismatch}. This distinguishes uncertainty that observations can remove from approximation error that remains in the active nominal row family.

\subsection{Row identifiability with a fixed channel}

The observation law depends on \(X\) only through the row \(W=W_X=M(\cdot| X)\).  Let \(P_W,Q_W\) be the induced priors on the finite set of distinct rows of \(M\).  The observation laws and predictive risks depend on \(P,Q\) only through these row priors.  Thus the preceding mismatch identities apply on the row alphabet whenever \(P_W\ll Q_W\), even if \(P\not\ll Q\) on input labels.

\begin{proposition}[High-intensity limit]\label{prop:high-intensity}
Assume \(P_W\ll Q_W\).  Then
\begin{equation}\label{eq:high-kl}
\lim_{\gamma\to\infty}\KL{P_{Y_\gamma}}{Q_{Y_\gamma}}
=
\KL{P_W}{Q_W}.
\end{equation}
Consequently,
\begin{equation}\label{eq:infinite-regret}
\boxed{
\KL{P_W}{Q_W}
=
\int_0^\infty
\left[\cR_{P,Q}(\gamma)-\cR_{P,P}(\gamma)\right]d\gamma.
}
\end{equation}
If the rows \(W_x\) are all distinct, \(W\) identifies \(X\) and the right side equals \(D(P\|Q)\).
\end{proposition}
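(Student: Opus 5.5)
We reduce everything to the row alphabet and use the KL chain rule, in the same way as in the proof of Theorem~\ref{cm:thm:conservation}. Because the law of \(Y_\gamma\) depends on \(X\) only through \(W\), we may work with the finite set \(\mathcal W\) of distinct rows and the priors \(P_W\ll Q_W\). The chain rule applied to the joint laws of \((W,Y_\gamma)\) under the two priors, with the common channel \(M\), gives
\begin{equation*}
\KL{P_W}{Q_W}
=
\KL{P_{Y_\gamma}}{Q_{Y_\gamma}}
+
\E_{P}\KL{P(W\in\cdot\mid Y_\gamma)}{Q(W\in\cdot\mid Y_\gamma)}.
\end{equation*}
This holds because the conditional law of \(Y_\gamma\) given \(W\) is identical under both models, so that term vanishes. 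It is the \(\kappa=0\) case of \eqref{cm:eq:conservation} on the row alphabet. Consequently \(\KL{P_{Y_\gamma}}{Q_{Y_\gamma}}\le\KL{P_W}{Q_W}\), and \eqref{eq:high-kl} reduces to showing that the expected posterior divergence on rows tends to zero.

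For the posterior term, I would use the fact that the rows are distinct and that the count histogram is a sufficient statistic. The histogram has conditional law \(C_z\mid W\sim\Pois(\gamma W(z))\). By the strong law, \(C/N_\gamma\to W\) almost surely as \(\gamma\to\infty\), which requires coupling the budgets through a Poisson process. A simpler route is to bound the posterior mass on any wrong row \(w'\neq w\) directly: the likelihood ratio \(\prod_z (w'(z)/w(z))^{C_z}\) has expectation under \(w\) decaying like \(e^{-\gamma c}\), using the Bhattacharyya/Hellinger affinity for Poisson vectors. Hence \(P(W=w\mid Y_\gamma)\to1\) in \(P\)-probability. Under \(Q\), the posterior ratio equals the \(P\)-posterior ratio multiplied by the prior ratio \(Q_W(w')P_W(w)/(P_W(w')Q_W(w))\). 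Here \(P_W(w)>0\), and \(Q_W(w)>0\) by absolute continuity, so the \(Q\)-posterior also concentrates on \(w\).

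The main obstacle is passing to the limit inside the expectation. The divergence between two posteriors concentrating on \(w\) tends to zero pointwise, but it must also be dominated. The dominating bound is
\begin{equation*}
\KL{p}{q}\le\sum_{w'}p(w')\log\frac{p(w')}{q(w')},
\qquad
\frac{p(w')}{q(w')}=\frac{P_W(w')}{Q_W(w')}\cdot\frac{\sum_v Q_W(v)L_v}{\sum_v P_W(v)L_v}\le\max_{v:P_W(v)>0}\frac{P_W(v)}{Q_W(v)}\cdot\max_v\frac{Q_W(v)}{P_W(v)}.
\end{equation*}
The second factor could be infinite if \(Q_W\) charges rows outside the support of \(P_W\). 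To avoid this, I would first note that \(\sum_v Q_W(v)L_v\) can be split into a part supported on \(\operatorname{supp}P_W\) and an extra nonnegative part, and deal with the extra part separately. Alternatively, one can bypass domination altogether: \(\KL{P_{Y_\gamma}}{Q_{Y_\gamma}}\) is nondecreasing (Corollary~\ref{cm:cor:Dout}) and bounded above by \(\KL{P_W}{Q_W}\), so its limit exists. The limit equals \(\KL{P_W}{Q_W}\) by lower semicontinuity applied to the asymptotically sufficient statistic \(C/N_\gamma\), whose law converges to that of \(W\) under both models. Since the rows are finitely many distinct points, the quantized estimate \(\widehat w=\arg\min_{w'}\|C/N_\gamma-w'\|\) converges in law to \(W\) under both \(P\) and \(Q\). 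Data processing then gives \(\KL{P_{Y_\gamma}}{Q_{Y_\gamma}}\ge\KL{P_{\widehat w}}{Q_{\widehat w}}\to\KL{P_W}{Q_W}\), and the last convergence holds because the alphabet is finite and \(P_W\ll Q_W\). This is the route I would take, as it avoids posterior domination entirely.

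Finally, \eqref{eq:infinite-regret} follows from \eqref{eq:out-div-int}. The integrand is nonnegative by Proposition~\ref{prop:predictive-disagreement}, so monotone convergence lets \(\gamma\to\infty\). When the rows are distinct, \(W\) is a bijective function of \(X\), so \(P_W,Q_W\) are relabelings of \(P,Q\), and therefore \(\KL{P_W}{Q_W}=D(P\|Q)\).
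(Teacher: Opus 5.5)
Your proposal is correct, and the route you settle on is essentially the paper's proof. The matching steps are: the upper bound \(D(P_{Y_\gamma}\|Q_{Y_\gamma})\le D(P_W\|Q_W)\), where your chain-rule identity is the same data-processing bound with the gap written as the expected row-posterior divergence; a lower bound via a consistent nearest-row decoder and data processing; a squeeze; and finally letting the upper limit in \eqref{eq:out-div-int} tend to infinity using nonnegativity of the excess risk. One step needs tightening: coordinatewise convergence of the decoded-row laws on a finite alphabet does not by itself give \(D(P_{\widehat w}\|Q_{\widehat w})\to D(P_W\|Q_W)\), but either lower semicontinuity (which the paper invokes) combined with your upper bound, or the fact that the common decoding kernel preserves \(P_W\le cQ_W\) pointwise, closes it.
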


\begin{proof}
Because the set of distinct rows is finite, nearest-row decoding of the empirical output frequencies is consistent under every row as \(N_\gamma\to\infty\) in probability.  Let \(\widehat W_\gamma(Y_\gamma)\) be such a decoder, assigning any row to the empty record. Unlike the posterior predictive row, this decoder returns one of the finitely many distinct channel rows. It is consistent under both row priors.  The Markov chain \(W\to Y_\gamma\to\widehat W_\gamma\) and data processing give
\begin{equation}
D(P_W\|Q_W)
\ge
D(P_{Y_\gamma}\|Q_{Y_\gamma})
\ge
D(P_{\widehat W_\gamma}\|Q_{\widehat W_\gamma}).
\end{equation}
Consistency gives coordinatewise convergence of the decoded-row laws to \(P_W,Q_W\).  Lower semicontinuity of relative entropy, including at zero coordinates, therefore gives
\[
\liminf_{\gamma\to\infty}
D(P_{\widehat W_\gamma}\|Q_{\widehat W_\gamma})
\ge D(P_W\|Q_W).
\]
The squeeze proves \eqref{eq:high-kl}.  Equation \eqref{eq:infinite-regret} follows from \eqref{eq:out-div-int} by letting the upper integration limit tend to infinity, since the excess risk is nonnegative by Proposition~\ref{prop:predictive-disagreement}.
\end{proof}

The matched limit does not require any assumption on $Q$. Under the true law, the same consistent decoder has vanishing row error probability. Fano's inequality on the finite row alphabet therefore gives \(H(W| Y_\gamma)\to0\). Thus the matched limit is
\begin{equation}\label{eq:entropy-row}
\boxed{
H(W)=\int_0^\infty \cR_{P,P}(\gamma)\,d\gamma,
}
\end{equation}
because \(I(X;Y_\gamma)\to I(X;W)=H(W)\).  In particular, if \(x\mapsto W_x\) is injective, \(H(W)=H(X)\).

\subsection{KL projection under channel mismatch}

For the nominal-channel predictor, the identity \eqref{cm:eq:Dout-integrated} turns long-run approximation error into growth of output relative entropy. The relevant approximation is the KL projection of the true row onto the \emph{active} nominal row family. The resulting posterior behavior is a finite-alphabet instance of Bayesian misspecification studied by Berk~\cite{berk1966posterior}. Write
\[
\mathcal A_Q=\{j\in\X:Q(j)>0\},
\]
and, for each \(x\) with \(P(x)>0\), define
\begin{equation}\label{cm:eq:dx}
d(x)\triangleq\min_{j\in\mathcal A_Q}\KL{W_x}{\widetilde W_j},
\text{ }
\bar d\triangleq\sum_{x:P(x)>0}P(x)d(x).
\end{equation}
For observations generated from $W_x$, minimizing this divergence is equivalent to maximizing the limiting average log likelihood assigned by a nominal row. The minimization is over individual active rows, not arbitrary mixtures of them: the nominal model also draws one latent input and uses its row to generate the entire record. For the output and predictive-loss statements in this section, it suffices that \(d(x)<\infty\) for every \(P\)-supported \(x\).  Other pairwise divergences may be infinite, and same-label domination is not needed.  Same-label assumptions will be invoked explicitly when using the posterior conservation law.

\begin{proposition}[High-intensity output-divergence slope]\label{cm:prop:high}
Assume \(d(x)<\infty\) whenever \(P(x)>0\).  Then
\begin{equation}\label{cm:eq:high-Dout}
\boxed{
\lim_{\gamma\to\infty}\frac{\Dout(\gamma)}{\gamma}=\bar d.
}
\end{equation}
If, in addition, for each such \(x\) the active-label minimizer
\[
j^*(x)=\arg\min_{j\in\mathcal A_Q}\KL{W_x}{\widetilde W_j}
\]
is unique, then, conditionally on \(X=x\), the nominal posterior predictive distribution converges in probability to \(\widetilde W_{j^*(x)}\), and
\begin{equation}\label{cm:eq:high-risk}
\boxed{
\lim_{\gamma\to\infty}\Rmis(\gamma)=\bar d.
}
\end{equation}
\end{proposition}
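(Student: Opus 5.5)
We prove the slope statement \eqref{cm:eq:high-Dout} by bounding $\Dout(\gamma)$ from above and below, conditionally on each supported input $x$, and averaging over $P$. Throughout, $\mathcal J$, $J_n$ and the likelihood representation \eqref{eq:mismatch-fixed-count} refer to the active-cover setting of Remark~\ref{cm:rem:active-cover}, with $j_x$ a minimizer in \eqref{cm:eq:dx}.

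\emph{Upper bound.} The active-cover bound \eqref{cm:eq:active-cover-bound}, averaged over $N_\gamma$, gives
\[
\mathcal J(\gamma)\le \sum_x P(x)\bigl[-\log Q(j_x)\bigr]+\gamma\bar d .
\]
Since $I_{P,M}(X;Y_\gamma)\le H(P)$, it follows that $\Dout(\gamma)\le H(P)+\E_P[-\log Q(j_X)]+\gamma\bar d$. Hence $\limsup_{\gamma\to\infty}\Dout(\gamma)/\gamma\le\bar d$.

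\emph{Lower bound.} Write
\[
\Dout(\gamma)=\E_P\log\frac{P^M_\gamma(Y_\gamma)}{Q^{\widetilde M}_\gamma(Y_\gamma)} .
\]
The true mixture satisfies $P^M_\gamma(y)\ge P(x)\prod_i W_x(z_i)$ on histories generated by $x$. The nominal mixture satisfies
\[
Q^{\widetilde M}_\gamma(y)\le \sum_{j\in\mathcal A_Q}\prod_i\widetilde W_j(z_i)\le|\mathcal A_Q|\max_{j\in\mathcal A_Q}\prod_i\widetilde W_j(z_i).
\]
Therefore, conditional on $X=x$, the log-ratio is at least
\[
\log P(x)-\log|\mathcal A_Q|+\min_{j\in\mathcal A_Q}\sum_{z}C_z\log\frac{W_x(z)}{\widetilde W_j(z)}\quad(\text{with value }+\infty\text{ when }W_x\not\ll\widetilde W_j).
\]
By the law of large numbers for the Poisson counts, $C/\gamma\to W_x$ almost surely. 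For each $j$ with $W_x\ll\widetilde W_j$, the term $\gamma^{-1}\sum_z C_z\log(W_x/\widetilde W_j)$ then converges to $D(W_x\|\widetilde W_j)$. For $j$ with $W_x\not\ll\widetilde W_j$, the term is $+\infty$ as soon as a symbol in $\operatorname{supp}W_x\setminus\operatorname{supp}\widetilde W_j$ appears, which happens with probability tending to one.

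Passing the limit through the expectation is the delicate part. I would restrict the minimum to the finitely many $j$ with finite divergence, which only lowers the bound. Each such term is bounded in absolute value by $C_{\max}N_\gamma$ with $C_{\max}$ a constant, because logs are bounded on $\operatorname{supp}W_x$. Dominated convergence, divided by $\gamma$, then applies, using $\E N_\gamma/\gamma=1$ and uniform integrability of $N_\gamma/\gamma$. This yields $\liminf_{\gamma\to\infty}\Dout(\gamma)/\gamma\ge\sum_xP(x)d(x)=\bar d$, proving \eqref{cm:eq:high-Dout}.

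\emph{Posterior concentration.} Assume the minimizer $j^*=j^*(x)$ is unique, and condition on $X=x$. The nominal posterior satisfies
\[
\frac{Q_{\widetilde M}(j\mid Y_\gamma)}{Q_{\widetilde M}(j^*\mid Y_\gamma)}=\frac{Q(j)}{Q(j^*)}\exp\Bigl(-\sum_z C_z\log\frac{\widetilde W_{j^*}(z)}{\widetilde W_j(z)}\Bigr).
\]
By the law of large numbers, $\gamma^{-1}\sum_z C_z\log(\widetilde W_{j^*}/\widetilde W_j)\to D(W_x\|\widetilde W_j)-D(W_x\|\widetilde W_{j^*})>0$, or the sum is eventually $+\infty$ for a support violation. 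Hence every competitor's posterior weight tends to $0$ in probability, and the nominal predictor \eqref{cm:eq:nompred} converges in probability to $\widetilde W_{j^*(x)}$.

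\emph{Risk limit.} It remains to show $\E_x D\bigl(W_x\|\widehat{\widetilde W}_{Q,\gamma}\bigr)\to D(W_x\|\widetilde W_{j^*})=d(x)$. The map $q\mapsto D(W_x\|q)$ is continuous at $\widetilde W_{j^*}$, because $\widetilde W_{j^*}>0$ on $\operatorname{supp}W_x$. The main obstacle is uniform integrability near histories where the predictor puts small mass on $\operatorname{supp}W_x$. I would handle it with the bound $\widehat{\widetilde W}_{Q,\gamma}(z)\ge Q_{\widetilde M}(j^*\mid y)\,\widetilde W_{j^*}(z)$. This reduces the problem to showing $\E_x[-\log Q_{\widetilde M}(j^*\mid Y_\gamma)]\to0$. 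Writing the posterior odds as above, this is $\E\log\bigl(1+\sum_{j\neq j^*}e^{-L_j}\bigr)$, where $L_j$ is the log-likelihood ratio of $j^*$ against $j$ plus the constant $\log(Q(j^*)/Q(j))$. The integrand is at most $\log|\mathcal A_Q|+\sum_j\max(0,-L_j)$, and $\max(0,-L_j)$ is dominated by a constant times $(1+N_\gamma)$ times the indicator that $L_j<0$. That event has probability decaying exponentially in $\gamma$ by a Chernoff bound, while $\E[N_\gamma^2]=O(\gamma^2)$, so Cauchy--Schwarz gives convergence to zero. Averaging over $x$ under $P$ then gives \eqref{cm:eq:high-risk}.
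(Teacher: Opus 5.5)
Your overall strategy works, but two steps fail as written.

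\emph{Lower bound.} Dropping the labels $j$ with $\KL{W_x}{\widetilde W_j}=\infty$ from $\min_{j\in\mathcal A_Q}$ raises the minimum rather than lowering it. The restricted expression is therefore not a lower bound on the log-ratio. Your parenthetical convention has the same defect. The term for such a $j$ is $+\infty$ only after a symbol of $\operatorname{supp}W_x\setminus\operatorname{supp}\widetilde W_j$ has actually been observed. Until then, that term is finite and can dominate the nominal mixture. This happens on an event of probability $e^{-\gamma w_j}$, where $w_j=W_x(\operatorname{supp}W_x\setminus\operatorname{supp}\widetilde W_j)>0$.

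The repair is short. Since $\widetilde W_j\le1$, every term satisfies $\sum_zC_z\log(W_x(z)/\widetilde W_j(z))\ge N_\gamma\log a_x$, with $a_x=\min_{z\in\operatorname{supp}W_x}W_x(z)$. So the normalized \emph{unrestricted} minimum has the uniformly integrable minorant $(N_\gamma/\gamma)\log a_x$. Under a marked-Poisson-process coupling it converges almost surely to $d(x)$, and Fatou's lemma gives the $\liminf$ with no restriction. Equivalently, the exceptional event contributes only $O(\E[(1+N_\gamma)\ind_{\mathcal E}])=O(\gamma e^{-c\gamma})$. In the paper, this minorant is supplied by the left inequality in \eqref{cm:eq:high-logratio-bound}.

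\emph{Risk limit.} Your dominating function $\log|\mathcal A_Q|+\sum_j\max(0,-L_j)$ does not tend to zero in expectation, because the constant $\log|\mathcal A_Q|$ survives. Cauchy--Schwarz only disposes of the second part. What the bound does give is uniform integrability of $-\log Q_{\widetilde M}(j^*| Y_\gamma)$. You must combine this with your earlier convergence $Q_{\widetilde M}(j^*| Y_\gamma)\to1$ in probability (Vitali). Alternatively, use $\log(1+u)\le u$ on the event where every competitor has been excluded or has $L_j$ of order $\gamma$. Keep the linear bound only on the exponentially small complement.

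Note also that $\widehat{\widetilde W}_{Q,\gamma}\ge Q_{\widetilde M}(j^*| y)\,\widetilde W_{j^*}$ yields only $\limsup_\gamma\E_x\KL{W_x}{\widehat{\widetilde W}_{Q,\gamma}}\le d(x)$. A mixture can be closer to $W_x$ than $\widetilde W_{j^*}$. The matching $\liminf$ therefore needs your continuity remark together with Fatou's lemma, and you should say so.

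With these repairs the argument is correct, and it differs from the paper's. The paper proves $J_n/n\to\bar d$ at fixed counts, using the strong law with two-sided linear domination. It then transfers the slope to Poisson averages through an $\epsilon n+C_\epsilon$ bound. Your active-cover bound $\Dout(\gamma)\le\mathrm{const}+\gamma\bar d$ is non-asymptotic and works directly at the Poisson level.

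For the risk, the paper builds Hoeffding good events and uses local Lipschitz continuity of the logarithm near $\widetilde W_{j^*}$. This gives the explicit rate $C_x(1+n)e^{-c_xn}$. Your domination of the predictor by $Q_{\widetilde M}(j^*| y)\widetilde W_{j^*}$ avoids that local analysis and reduces everything to concentration of the posterior on $j^*$. The cost is that it does not produce a rate.
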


\begin{proof}
The logarithm of the finite nominal likelihood mixture is asymptotically controlled by its KL-minimizing active row. Bounded normalized log likelihood ratios give convergence in expectation and preserve the slope under Poisson averaging. With a unique minimizer, exponentially small posterior-concentration errors and a linear bound on exceptional-history losses give the expected-risk limit. Appendix~\ref{app:channel-high} provides the full argument, including zero entries.
\end{proof}

\subsection{Representable rows and posterior discrepancy}

A zero asymptotic slope in \eqref{cm:eq:high-Dout} does not by itself establish a finite limit for $\Dout(\gamma)$. We now show that exact representability of every true row by a nominal row with positive prior mass gives this stronger result. Define probability measures on the finite union of true and nominal row values by
\[
P_W(v)=\sum_{x:W_x=v}P(x),\text{ }
Q_{\widetilde W}(v)=\sum_{j:\widetilde W_j=v}Q(j).
\]
Thus identical rows are grouped even when their latent labels differ.

\begin{proposition}[Finite divergence limit under row representability]\label{cm:prop:representable}
Suppose every row \(W_x\) with \(P(x)>0\) equals \(\widetilde W_j\) for at least one \(j\in\mathcal A_Q\).  Equivalently, \(P_W\ll Q_{\widetilde W}\).  Then
\begin{equation}\label{cm:eq:representable-Dout}
\boxed{
\lim_{\gamma\to\infty}\Dout(\gamma)
=\KL{P_W}{Q_{\widetilde W}}<\infty.
}
\end{equation}
No uniqueness of the nominal latent label is required.  Consequently,
\begin{equation}\label{cm:eq:representable-regret}
\boxed{
\int_0^\infty
\E_{P,M}\KL{\widehat W_{P,\gamma}}
{\widehat{\widetilde W}_{Q,\gamma}}\,d\gamma
=\KL{P_W}{Q_{\widetilde W}}.
}
\end{equation}
\end{proposition}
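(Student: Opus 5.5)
The plan mirrors the proof of Proposition~\ref{prop:high-intensity}: a data-processing upper bound, a consistent row decoder for the lower bound, and a squeeze. The key observation is that both observation laws depend on the latent input only through its row value. Under \((P,M)\), the record \(Y_\gamma\) is generated by first drawing a row \(V\sim P_W\) and then sampling \(\Pois(\gamma)\) outputs i.i.d.\ from \(V\). Under \((Q,\widetilde M)\), it is generated in the same way from \(V\sim Q_{\widetilde W}\), with the same kernel \(v\mapsto\) law of \(Y_\gamma\) given row \(v\). Both models therefore push the common kernel forward from two row priors on the finite set \(\mathcal V\) of true and nominal row values.

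\textbf{Upper bound.} By data processing, \(\Dout(\gamma)\le\KL{P_W}{Q_{\widetilde W}}\) for every \(\gamma\). This bound is finite because \(P_W\ll Q_{\widetilde W}\), which is exactly the representability hypothesis. Grouping identical rows is what removes any need for uniqueness of the nominal label.

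\textbf{Lower bound.} Let \(\widehat V_\gamma\) be nearest-row decoding of the empirical output frequencies over the finite set \(\mathcal V\), assigning an arbitrary value to the empty record. Since the elements of \(\mathcal V\) are distinct probability vectors and \(N_\gamma\to\infty\) in probability, \(\widehat V_\gamma\to v\) in probability under each fixed row \(v\). The laws of \(\widehat V_\gamma\) under the two models therefore converge coordinatewise to \(P_W\) and \(Q_{\widetilde W}\). Data processing gives \(\Dout(\gamma)\ge D(P_{\widehat V_\gamma}\|Q_{\widehat V_\gamma})\). Lower semicontinuity of relative entropy on the finite alphabet then gives \(\liminf_{\gamma\to\infty}\Dout(\gamma)\ge\KL{P_W}{Q_{\widetilde W}}\). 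Combined with the upper bound, this proves \eqref{cm:eq:representable-Dout}.

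\textbf{Integrated regret.} For \eqref{cm:eq:representable-regret}, let \(\gamma\to\infty\) in \eqref{cm:eq:Dout-integrated}. The integrand is nonnegative, so monotone convergence applies. That identity must hold under the present hypothesis, which is weaker than \eqref{cm:eq:standing-support}. Remark~\ref{cm:rem:active-cover} covers this case: take \(j_x\) with \(\widetilde W_{j_x}=W_x\) and \(Q(j_x)>0\), so \(W_x\ll\widetilde W_{j_x}\) trivially, and Theorem~\ref{cm:thm:main} and Corollary~\ref{cm:cor:Dout} remain valid.

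\textbf{Main obstacle.} The delicate step is making the row-level reduction rigorous when \(\mathcal V\) mixes true rows with nominal rows outside the support of \(P_W\). Decoder consistency is needed for every \(v\) charged by \(Q_{\widetilde W}\), not only for true rows, so that the nominal decoded law converges to \(Q_{\widetilde W}\). Without this, lower semicontinuity would be applied to the wrong limit. Since nearest-row decoding over the full finite set \(\mathcal V\) is consistent for all its elements, this works. I would also state explicitly that the kernel is common to both models. This holds because the Poisson count law is identical under both models and the outputs depend only on the selected row.
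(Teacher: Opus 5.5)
Your proposal is correct and follows the paper's proof. The paper reduces to the row-value alphabet, with a common identity channel and priors $P_W, Q_{\widetilde W}$, and cites Proposition~\ref{prop:high-intensity}, whose data-processing, consistent-decoder and squeeze argument you reproduce inline, and it then lets $\gamma\to\infty$ in \eqref{cm:eq:Dout-integrated}; your appeal to Remark~\ref{cm:rem:active-cover} to justify that identity under the weaker representability hypothesis makes explicit a step the paper leaves implicit.
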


\begin{proof}
Regard the finite union of true and nominal row values as an input alphabet, with the conditional output distribution at input \(v\) equal to \(v\), and with priors \(P_W,Q_{\widetilde W}\). Conditional on the row value, both models use the same repeated-sampling experiment. Since \(P_W\ll Q_{\widetilde W}\), Proposition~\ref{prop:high-intensity} gives \eqref{cm:eq:representable-Dout}. Letting the upper integration limit in \eqref{cm:eq:Dout-integrated} tend to infinity then gives \eqref{cm:eq:representable-regret}, since its integrand is nonnegative.
\end{proof}

In the prior-only case $\widetilde M=M$, the row priors are $P_W$ and $Q_W$. The finite total regret is therefore $D(P_W\|Q_W)$, which equals $D(P\|Q)$ when the channel rows are pairwise distinct. Irreducible channel mismatch need not have a finite total regret, as shown below.

For finite row families with \(d(x)<\infty\), representability is equivalent to \(\bar d=0\).  Otherwise some \(P\)-supported true row is absent from the active nominal family, so \(\bar d>0\) and
\[
\Dout(\gamma)=\gamma\bar d+o(\gamma),\text{ }
\int_0^\infty
\E_{P,M}\KL{\widehat W_{P,\gamma}}
{\widehat{\widetilde W}_{Q,\gamma}}\,d\gamma=\infty.
\]
These divergence and integrated-regret conclusions do not require a unique KL minimizer.  The instantaneous risk limit \eqref{cm:eq:high-risk} is asserted under the uniqueness condition of Proposition~\ref{cm:prop:high}. \\
\textbf{Remaining posterior discrepancy:}

If \(D(P\|Q)<\infty\) and
\(\kappa(P;M,\widetilde M)<\infty\), the same-label nominal row is active and admissible for every \(P\)-supported \(x\).  Therefore
\[
\bar d\le\kappa(P;M,\widetilde M).
\]
Dividing the conservation law \eqref{cm:eq:conservation} by \(\gamma\) and using \eqref{cm:eq:high-Dout} gives the further limit
\begin{equation}\label{cm:eq:high-posterior-slope}
\boxed{
\lim_{\gamma\to\infty}\frac{\Dpost(\gamma)}{\gamma}
=\kappa(P;M,\widetilde M)-\bar d.
}
\end{equation}
This limit describes the linear growth rate of the expected posterior divergence. A zero rate does not imply that the divergence itself tends to zero. Thus the output-distribution slope measures the best active-row approximation error, while the remainder of the same-label discrepancy appears in the posterior divergence.  In particular, matching the observable row family need not make the two latent-label posteriors agree.

\section{Capacity of the static Poissonized DMC}\label{sec:capacity}

We return to the matched experiment and optimize mutual information over the input prior for the fixed true channel \(M\). The prediction representation therefore uses the matched risk \(\cR_{P,P}\) from \eqref{eq:matched-risk}. For fixed \(M\) and \(\gamma\), regard the experiment \(x\mapsto Y_\gamma\) itself as a channel.  Let
\begin{equation}
\mathsf C_M(\gamma)\triangleq \max_{P\in\Delta(\X)} I_P(X;Y_\gamma)
\end{equation}
be its capacity in nats per use of the static experiment. One use selects a single input and observes its entire Poisson-sized output record. The budget $\gamma$ is prescribed; only the input prior is optimized. This differs from waveform Poisson-channel capacity, where the input controls a time-dependent intensity and information is measured per unit time \cite{kabanov1978capacity}.

\noindent The output alphabet is countably infinite, but the input prior ranges over a finite-dimensional simplex.  To justify attainment, write \(I_n(P)=I_P(X;Z^n)\).  For every fixed \(n\), this is continuous in \(P\), and \(0\le I_n(P)\le\log|\X|\).  The Poisson series for \(I_P(X;Y_\gamma)\), and its derivative series with coefficients \(I_{n+1}(P)-I_n(P)\), therefore converge uniformly in \(P\), locally uniformly in \(\gamma\ge0\).  Thus the information and its intensity derivative are jointly continuous in \((P,\gamma)\), with a right derivative at zero.  Compactness of the input simplex gives a maximizing prior at every budget.

\noindent Using the count-vector sufficient statistic \(C=(C_z:z\in\Z)\), for \(\gamma>0\) the induced channel law is explicit:
\begin{equation}\label{eq:cap-induced-channel}
P_\gamma(c| x)
=e^{-\gamma}\frac{\gamma^{|c|}}{\prod_{z\in\Z}c_z!}
\prod_{z\in\Z}W_x(z)^{c_z},
\text{ } |c|=\sum_z c_z.
\end{equation}
Here a zero count contributes the factor \(W_x(z)^0=1\), including when \(W_x(z)=0\).  At \(\gamma=0\), the count vector is zero almost surely under every input.  Thus
\begin{equation}\label{eq:cap-explicit}
\mathsf C_M(\gamma)
=\max_P\sum_xP(x)\sum_c P_\gamma(c| x)
\log\frac{P_\gamma(c| x)}{\sum_{x'}P(x')P_\gamma(c| x')}.
\end{equation}
Equivalently, conditioning on \(N_\gamma\),
\begin{equation}\label{eq:cap-poisson-mixture}
\mathsf C_M(\gamma)
=\max_P e^{-\gamma}\sum_{n\ge0}\frac{\gamma^n}{n!}I_P(X;Z^n).
\end{equation}
The maximization remains outside the Poisson average: one input prior must serve all possible observation counts.

\subsection{Estimation envelope and capacity geometry}

The first-order identity immediately gives an estimation-theoretic representation of capacity.

\begin{proposition}[Capacity as accumulated predictive risk]\label{prop:cap-envelope}
For every \(\gamma\ge0\),
\begin{equation}\label{eq:cap-envelope}
\boxed{
\mathsf C_M(\gamma)
=\max_{P\in\Delta(\X)}
\int_0^\gamma \cR_{P,P}(s)\,ds
=\max_P\int_0^\gamma
\E_P\!\left[\KL{W}{\E_P[W| Y_s]}\right]ds.
}
\end{equation}
\end{proposition}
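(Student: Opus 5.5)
The plan is to reduce the statement to the integrated matched identity \eqref{eq:integrated-matched}, applied separately to each input prior, and then take the maximum. Fix \(\gamma\ge0\) and an arbitrary \(P\in\Delta(\X)\). Theorem~\ref{thm:first} gives \(\frac{d}{ds}I_P(X;Y_s)=\cR_{P,P}(s)\) for every \(s\ge0\), with a right derivative at \(s=0\). Lemma~\ref{lem:poisson-diff} shows this derivative is continuous on \([0,\infty)\). Since \(I_P(X;Y_0)=0\) (the record is deterministically empty), the fundamental theorem of calculus yields \(I_P(X;Y_\gamma)=\int_0^\gamma\cR_{P,P}(s)\,ds\) for every prior \(P\), exactly as in \eqref{eq:integrated-matched}.

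Next, I would maximize both sides over \(P\). The two sides are equal as functions of \(P\) on the simplex, so their suprema coincide. The left side's supremum is attained and equals \(\mathsf C_M(\gamma)\), by the compactness and joint-continuity argument given just before \eqref{eq:cap-induced-channel}. Hence the maximum on the right of \eqref{eq:cap-envelope} is also attained, at the same maximizing prior. The second equality in \eqref{eq:cap-envelope} is just the definition \eqref{eq:matched-risk} of \(\cR_{P,P}\) with \(\widehat W_P^s=\E_P[W|Y_s]\).

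The only point that needs care is that the integral representation must hold for every \(P\), including priors whose support is not all of \(\X\) and channels with zero entries. Otherwise the maximum over the full simplex could not be exchanged freely. This is already covered by Theorem~\ref{thm:first}: its proof allows zero channel entries, and the bound \(0\le I_n\le H_P(X)\le\log|\X|\) makes Lemma~\ref{lem:poisson-diff} apply for each \(P\).

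I would also stress that the maximum stays outside the integral. The optimizer depends on \(\gamma\) only through the accumulated risk over \([0,\gamma]\), not through a pointwise maximization of \(\cR_{P,P}(s)\) in \(s\). No interchange of maximum and integral is claimed, so there is no real obstacle beyond quoting these earlier results.
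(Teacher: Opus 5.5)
Your proposal is correct and follows the paper's route. The paper obtains the proposition directly from the integrated matched identity \eqref{eq:integrated-matched} applied to each fixed prior, with the maximum attained by the compactness and joint-continuity argument preceding \eqref{eq:cap-induced-channel}. Your remark that the maximization cannot be moved inside the integral also matches the paper's comment after the statement.
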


\noindent Each candidate prior $P$ is evaluated using its own matched predictor over $[0,\gamma]$, so capacity selects the input distribution with the largest accumulated matched prediction risk. Since $P$ is held fixed across the entire interval, the maximization cannot in general be moved inside the integral. \\
There is also a useful geometric identity.  Let \(P_{\gamma,x}\) denote the output law in \eqref{eq:cap-induced-channel} given \(X=x\).  For \(\gamma>0\), its product-Poisson form gives
\begin{equation}\label{eq:pairwise-poisson-kl}
\boxed{
\KL{P_{\gamma,x}}{P_{\gamma,x'}}
=\gamma\KL{W_x}{W_{x'}}.
}
\end{equation}
This equality includes infinite divergences.  At \(\gamma=0\), both output laws are the point mass at the zero count vector and their divergence is zero.
The induced channel has the standard information-radius representation
\begin{equation}\label{eq:cap-radius}
\mathsf C_M(\gamma)
=\inf_Q\max_{x\in\X}\KL{P_{\gamma,x}}{Q},
\end{equation}
where \(Q\) ranges over output distributions on count vectors.  Restricting the center \(Q\) to product-Poisson laws with mean vector \(\gamma q\), and also degrading the observation to its first mark when at least one arrival occurs, gives the simple bounds
\begin{equation}\label{eq:cap-bounds}
\boxed{
(1-e^{-\gamma})C(M)
\le \mathsf C_M(\gamma)
\le \min\!\left\{\gamma C(M),\,(1-e^{-\gamma})\log r\right\},
}
\end{equation}
where \(C(M)=\max_P I_P(X;Z)\) is the ordinary one-use DMC capacity and \(r\) is the number of distinct rows of \(M\).

\begin{proof}[Proof of the bounds]
At \(\gamma=0\), all terms in the bounds are zero.  For \(\gamma>0\), map the full observation to an erasure if \(N_\gamma=0\), and otherwise retain only the first DMC output.  This degraded channel is an independent erasure of one use of \(M\), and has capacity \((1-e^{-\gamma})C(M)\), proving the lower bound.  For the first upper bound, choose in \eqref{eq:cap-radius} a product-Poisson center with mean \(\gamma q\); then \(D(P_{\gamma,x}\|Q_{\gamma,q})=\gamma D(W_x\|q)\), including infinite values, and minimizing over \(q\) gives \(\gamma C(M)\).  Finally, conditional on \(N_\gamma>0\), the observation can reveal at most which of the \(r\) distinct rows generated it, yielding the second upper bound.
\end{proof}

\subsection{Optimality conditions in prediction form}

For a prior $P$, write
\begin{equation*}
P_Y^{P,\gamma}=\sum_{x'}P(x')P_{\gamma,x'}
\end{equation*}
for the induced distribution of the count vector, and define
\begin{equation}
G_x(P,\gamma)
\triangleq
\KL{P_{\gamma,x}}{P_Y^{P,\gamma}}.
\end{equation}
Thus $G_x(P,\gamma)$ compares the output law generated by the fixed input $x$ with the overall output law induced by $P$. The usual capacity KKT condition \cite{cover2006elements} states that \(P^*\) is capacity achieving if and only if
\begin{equation}\label{eq:cap-kkt-standard}
G_x(P^*,\gamma)\le \mathsf C_M(\gamma)
\end{equation}
for every \(x\), with equality for every \(x\) in the support of \(P^*\).

\noindent The prediction identities turn \(G_x\) into an accumulated estimation loss.  Let \(\E_x\) mean expectation under the observation experiment with the input fixed to \(x\), even if \(P(x)=0\); the predictor continues to use the prior \(P\).  At positive budgets, whenever \(G_x(P,\gamma)\) is finite,
\begin{equation}\label{eq:cap-kkt-predictive}
\boxed{
G_x(P,\gamma)
=
\int_0^\gamma
\E_x\!\left[
\KL{W_x}{\widehat W_P^s}
\right]ds,
\text{ }
\widehat W_P^s=\E_P[W| Y_s].
}
\end{equation}
The required support condition is that some \(x'\) with \(P(x')>0\) satisfies \(\operatorname{supp}W_x\subseteq\operatorname{supp}W_{x'}\). It is equivalent to finiteness of \(G_x(P,\gamma)\) at any positive budget. If it fails, both the divergence and the integrated predictive loss are infinite at every positive budget. At zero, both sides are zero. Appendix~\ref{app:capacity-support} proves these claims and the differentiation bound, including inputs outside the prior support.

\noindent Consequently, a prior $P^*$ is capacity achieving if and only if
\begin{equation}\label{eq:cap-kkt-estimation}
\boxed{
\int_0^\gamma
\E_x\left[
\KL{W_x}{\widehat W_{P^*}^s}
\right]ds
\le \mathsf C_M(\gamma)
}
\end{equation}
for every $x\in\X$, with equality whenever $P^*(x)>0$.

\noindent Thus all channel rows actually used at capacity have equal \emph{accumulated predictive KL loss} against the Bayes predictor induced by the optimal prior.

\subsection{Derivative and endpoint behavior of capacity}

The prior maximizing mutual information may depend on the observation budget. Let
\begin{equation}
\mathcal P_\gamma^*
\triangleq
\operatorname*{argmax}_{P\in\Delta(\X)}I_P(X;Y_\gamma)
\end{equation}
denote the set of maximizing priors at budget $\gamma$. To account for this optimization, we first describe the one-sided derivatives of capacity. Compactness of the input simplex and the joint continuity established above give
\begin{align}
\mathsf C_{M,+}'(\gamma)
&=\max_{P\in\mathcal P_\gamma^*}\cR_{P,P}(\gamma),\text{ } \gamma\ge0,\label{eq:cap-right-derivative}\\
\mathsf C_{M,-}'(\gamma)
&=\min_{P\in\mathcal P_\gamma^*}\cR_{P,P}(\gamma),\text{ } \gamma>0.\label{eq:cap-left-derivative}
\end{align}
For the right derivative, bounding the capacity difference quotient below by a maximizer at $\gamma$ and above by a maximizer at $\gamma+h$ yields matching limits as $h\downarrow 0$ by compactness and joint continuity of the intensity derivative while the left derivative follows similarly at $\gamma-h$. In particular, whenever the maximizing prior $P_\gamma^*$ is unique at $\gamma>0$,
\begin{equation}\label{eq:cap-derivative}
\boxed{
\mathsf C_M'(\gamma)
=\E_{P_\gamma^*}\!\left[
\KL{W}{\E_{P_\gamma^*}[W| Y_\gamma]}
\right].
}
\end{equation}
Thus the slope of the capacity curve is the predictive KL risk for the optimizing prior at that budget.  The intensity derivative is evaluated with this prior held fixed; no derivative of the optimizing prior is taken. At zero intensity, no observation has yet arrived and
\begin{equation}
\widehat W_P^0=\E_P W.
\end{equation}
Hence
\begin{equation}
\cR_{P,P}(0)
=\sum_xP(x)\KL{W_x}{\sum_{x'}P(x')W_{x'}}
=I_P(X;Z).
\end{equation}
Applying the right-envelope formula at \(\gamma=0\) gives the particularly simple identity
\begin{equation}\label{eq:cap-zero-slope}
\boxed{
\mathsf C_M'(0+)=C(M).
}
\end{equation}
Therefore
\begin{equation}\label{eq:cap-small-gamma}
\mathsf C_M(\gamma)=\gamma C(M)+o(\gamma).
\end{equation}

\noindent To examine the next capacity term, write
\[
A(P)=I_P(X;Z_1),\text{ } B(P)=I_P(Z_1;Z_2),
\]
where \(Z_1,Z_2\) are conditionally independent outputs of \(M\) given \(X\sim P\).
Let
\[
\mathcal A=\operatorname*{argmax}_{P\in\Delta(\X)} A(P)
\]
be the set of priors achieving the ordinary one-use capacity \(C(M)\). Several priors can belong to \(\mathcal A\), while giving different amounts of redundant information in two observations. The next result identifies which of them determine the second-order capacity term.

\begin{theorem}[Minimum-redundancy selection]\label{thm:cap-selection}
Define
\[
B_*=\min_{P\in\mathcal A} B(P),\text{ }
\mathcal A_2=\operatorname*{argmin}_{P\in\mathcal A}B(P).
\]
Then
\begin{equation}\label{eq:cap-second-order}
\boxed{
\mathsf C_M(\gamma)
=\gamma C(M)-\frac{\gamma^2}{2}B_*+o(\gamma^2),
\text{ } \gamma\downarrow0.
}
\end{equation}
For any sequence of positive budgets tending to zero, every accumulation
point of corresponding capacity-achieving priors belongs to \(\mathcal A_2\).
\end{theorem}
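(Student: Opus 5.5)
Fix a prior \(P\) and expand \(I_P(X;Y_\gamma)=e^{-\gamma}\sum_n \gamma^n I_n(P)/n!\) to second order in \(\gamma\), with a remainder that is uniform in \(P\). Since \(I_0=0\), \(I_1=A(P)\), and \(I_2(P)=2A(P)-B(P)\) (chain rule plus exchangeability, as in the proof of Theorem~\ref{thm:second}), we get
\[
I_P(X;Y_\gamma)=\gamma A(P)+\frac{\gamma^2}{2}\bigl(I_2(P)-2I_1(P)\bigr)+R(P,\gamma)=\gamma A(P)-\frac{\gamma^2}{2}B(P)+R(P,\gamma).
\]
Equivalently, Lemma~\ref{lem:poisson-diff} gives \(I_P''(0)=\Delta^2 I_0=-B(P)\), consistent with \eqref{eq:second}. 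The remainder obeys \(|R(P,\gamma)|\le K\gamma^3\) uniformly in \(P\). Indeed, \(0\le I_n\le\log|\X|\) for every \(P\), and \(I_P'''(\gamma)=\E[\Delta^3 I_{N_\gamma}]\) is bounded by \(8\log|\X|\), so Taylor's theorem with a third-derivative remainder applies uniformly.

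For the lower bound, take \(P\in\mathcal A_2\) at every \(\gamma\). This gives \(\mathsf C_M(\gamma)\ge\gamma C(M)-\tfrac{\gamma^2}{2}B_*-K\gamma^3\).

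For the upper bound, let \(P_\gamma\) be a maximizer. Comparing it with the lower bound,
\[
\gamma\bigl(C(M)-A(P_\gamma)\bigr)\le \frac{\gamma^2}{2}\bigl(B_*-B(P_\gamma)\bigr)+2K\gamma^3\le \gamma^2\log|\X|+2K\gamma^3,
\]
so \(C(M)-A(P_\gamma)=O(\gamma)\to0\). Every accumulation point of \(P_\gamma\) therefore lies in \(\mathcal A\), by continuity of \(A\) on the compact simplex. Since \(A(P_\gamma)\le C(M)\), we also have
\[
\mathsf C_M(\gamma)\le \gamma C(M)-\frac{\gamma^2}{2}B(P_\gamma)+K\gamma^3 .
\]
It remains to show \(\liminf_{\gamma\downarrow0} B(P_\gamma)\ge B_*\). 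Along any subsequence with \(P_\gamma\to\bar P\in\mathcal A\), continuity of \(B\) gives \(B(P_\gamma)\to B(\bar P)\ge B_*\). This yields \(\mathsf C_M(\gamma)\le\gamma C(M)-\tfrac{\gamma^2}{2}B_*+o(\gamma^2)\), and with the lower bound proves \eqref{eq:cap-second-order}.

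For the selection claim, rearrange the comparison inequality as
\[
\frac{\gamma^2}{2}\bigl(B(P_\gamma)-B_*\bigr)\le \gamma\bigl(A(P_\gamma)-C(M)\bigr)+2K\gamma^3\le 2K\gamma^3,
\]
using \(A(P_\gamma)\le C(M)\). Hence \(B(P_\gamma)\le B_*+4K\gamma\). Any accumulation point \(\bar P\) is in \(\mathcal A\) and satisfies \(B(\bar P)\le B_*\), so \(\bar P\in\mathcal A_2\).

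The main point needing care is the \emph{uniformity in \(P\)} of the cubic remainder. Both the first-order gap \(C(M)-A(P_\gamma)\) and the second-order comparison rely on it, and it rests on the bound \(|\Delta^3 I_n|\le 8\log|\X|\) holding independently of \(P\). The remaining ingredients are continuity of \(A\) and \(B\), which are finite-alphabet mutual informations and hence continuous, and the identity \(I_2=2A-B\).
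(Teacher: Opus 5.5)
Your proposal is correct and follows essentially the same route as the paper. Both arguments use a cubic remainder bound uniform in \(P\) from \(|\Delta^3 I_n|\le 8\log|\X|\), and both compare a capacity-achieving \(P_\gamma\) with a fixed element of \(\mathcal A_2\) to get \(A(P_\gamma)\to C(M)\) and \(B(P_\gamma)\le B_*+4K\gamma\), with continuity and compactness placing accumulation points in \(\mathcal A_2\). The only cosmetic difference is in closing the upper bound: you use \(A(P_\gamma)\le C(M)\) together with \(\liminf B(P_\gamma)\ge B_*\), whereas the paper first shows \(B(P_\gamma)\to B_*\), deduces \(C(M)-A(P_\gamma)=o(\gamma)\), and substitutes into the expansion.
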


\begin{proof}
The curvature identity and Lemma~\ref{lem:poisson-diff} give
\begin{equation}\label{eq:cap-uniform-expansion}
I_P(X;Y_\gamma)
=\gamma A(P)-\frac{\gamma^2}{2}B(P)+R_P(\gamma),
\text{ } |R_P(\gamma)|\le K\gamma^3,
\end{equation}
uniformly over all priors, with \(K=\frac43\log|\X|\).
Indeed, \(0\le I_n(P)\le\log|\X|\) implies
\(|\Delta^3 I_n(P)|\le8\log|\X|\), so the third intensity
derivative has the same bound. The finite-alphabet mutual informations
\(A\) and \(B\) are continuous on the input simplex. Compactness therefore
ensures that \(\mathcal A\) and \(\mathcal A_2\) are nonempty.

\noindent Choose a capacity-achieving prior \(P_\gamma\) at each positive budget
and fix \(P_*\in\mathcal A_2\). Comparing their uniform expansions gives
\begin{equation}\label{eq:cap-selection-bound}
0\le C(M)-A(P_\gamma)
\le\frac{\gamma}{2}\bigl(B_*-B(P_\gamma)\bigr)+2K\gamma^2.
\end{equation}
Since \(B\) is bounded, \(A(P_\gamma)\to C(M)\). Thus every accumulation
point lies in \(\mathcal A\).
The same inequality gives \(B(P_\gamma)\le B_*+4K\gamma\).
Continuity now places every accumulation point in \(\mathcal A_2\);
in particular \(B(P_\gamma)\to B_*\).
Dividing \eqref{eq:cap-selection-bound} by \(\gamma\) then shows
\(C(M)-A(P_\gamma)=o(\gamma)\).
Substitution in \eqref{eq:cap-uniform-expansion} proves
\eqref{eq:cap-second-order}.
\end{proof}

\noindent If the one-use capacity-achieving prior is unique, say \(P^*\), then \(B_*=I_{P^*}(Z_1;Z_2)\), recovering the unique-prior formula. The minimization is restricted to \(\mathcal A\): reducing redundancy by choosing an uninformative prior does not improve capacity. The theorem describes the small-budget limit; it does not assert that every member of \(\mathcal A_2\) achieves capacity at a positive budget. The secondary minimization also has a convex formulation.

\begin{proposition}[Convex formulation of the selection rule]
\label{prop:cap-convex-selection}
All priors in \(\mathcal A\) induce the same one-output law \(q^*\).
Writing
\[
r_P(z,z')=\sum_xP(x)W_x(z)W_x(z'),
\]
the minimum redundancy is
\begin{equation}\label{eq:cap-convex-selection}
\boxed{
B_*=\min_{P\in\mathcal A}
\KL{r_P}{q^*\otimes q^*}.
}
\end{equation}
This is a convex optimization over \(P\).
The minimizing two-output law \(r_P\) is unique, although the prior
realizing it need not be.
\end{proposition}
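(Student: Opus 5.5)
The plan has three parts. First, show that all one-use capacity-achieving priors share the same output law. Second, rewrite $B(P)$ as a divergence of $r_P$ against that fixed law. Third, read off convexity and uniqueness from the linear dependence of $r_P$ on $P$ and strict convexity of relative entropy.

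\textbf{Common output law.} We use the standard uniqueness of the capacity-achieving output distribution. $A(P)=I_P(X;Z_1)$ is concave in $P$. The output law $q_P=\sum_xP(x)W_x$ is linear in $P$, and $H(q_P)$ is strictly concave in $q_P$. Take $P_0,P_1\in\mathcal A$ and their midpoint $P_{1/2}$. The decomposition $A(P)=H(q_P)-\sum_xP(x)H(W_x)$ gives
\[
A(P_{1/2})-\tfrac12A(P_0)-\tfrac12A(P_1)=H(q_{1/2})-\tfrac12H(q_{P_0})-\tfrac12H(q_{P_1}).
\]
The right side is strictly positive unless $q_{P_0}=q_{P_1}$. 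This contradicts maximality, so $q_{P_0}=q_{P_1}$. Denote this common law by $q^*$. The same argument shows $\mathcal A$ is convex: the midpoint attains value at least $C(M)$, hence exactly $C(M)$.

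\textbf{Rewriting $B$.} Under prior $P$, the pair $(Z_1,Z_2)$ has joint law $r_P$ and both marginals equal $q_P$. Therefore $B(P)=I_P(Z_1;Z_2)=\KL{r_P}{q_P\otimes q_P}$. On $\mathcal A$ we may replace $q_P$ by $q^*$, which gives \eqref{eq:cap-convex-selection}. For zero probabilities we use the standard conventions; $r_P\ll q^*\otimes q^*$ holds automatically because $r_P$ has marginals $q^*$.

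\textbf{Convexity and uniqueness.} The map $P\mapsto r_P$ is linear, and $q^*\otimes q^*$ is fixed on $\mathcal A$. Hence $P\mapsto\KL{r_P}{q^*\otimes q^*}$ is a convex function (relative entropy is convex in its first argument) composed with a linear map, and it is minimized over the convex compact set $\mathcal A$. By continuity the minimum is attained. For uniqueness of the minimizing $r$, note that the set $\{r_P:P\in\mathcal A\}$ is convex, being a linear image of a convex set. The function $r\mapsto\KL{r}{q^*\otimes q^*}$ is strictly convex on distributions with fixed support constraint. If two distinct minimizers $r_{P_0}\neq r_{P_1}$ existed, then $r_{P_{1/2}}=\tfrac12(r_{P_0}+r_{P_1})$ would achieve a strictly smaller value, a contradiction. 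The prior realizing this minimum need not be unique, because $P\mapsto r_P$ need not be injective; for example, duplicated rows make it non-injective.

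The main obstacle is the first step, namely justifying the strict concavity argument. Strict concavity of entropy in the output law must be applied carefully: it gives strictness only when the output laws differ, and that is exactly what we want to rule out. The remaining steps are routine convex analysis.
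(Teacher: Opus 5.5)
Your proposal is correct and takes essentially the same route as the paper. The paper also uses the decomposition $A(P)=H(q_P)-\sum_x P(x)H(W_x)$ with strict concavity of entropy to force a common output law $q^*$, reads $B(P)=D(r_P\|q^*\otimes q^*)$ off the marginals of $r_P$, and combines linearity of $P\mapsto r_P$ with strict convexity of relative entropy against a reference restricted to $\operatorname{supp}(q^*)\times\operatorname{supp}(q^*)$ to get convexity and a unique minimizing $r_P$.

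Your duplicated-row justification for the non-uniqueness of the prior covers a point the paper's proof leaves implicit. The paper's coupon-collector section gives another instance: the Fano prior and the uniform prior on all $35$ triples are distinct weighted $2$-designs with the same two-output law.
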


\begin{proof}
For \(q_P=\sum_xP(x)W_x\),
\[
A(P)=H(q_P)-\sum_xP(x)H(W_x).
\]
Concavity makes \(\mathcal A\) convex. If two maximizing priors had
different output laws, strict concavity of entropy would give a
strictly larger value at their mixture, a contradiction.
Thus their common output law is \(q^*\).
For \(P\in\mathcal A\), both marginals of \(r_P\) are \(q^*\), so
\(B(P)=D(r_P\|q^*\otimes q^*)\).
The map \(P\mapsto r_P\) is linear.
After omitting coordinates outside
\(\operatorname{supp}(q^*)\times\operatorname{supp}(q^*)\), which vanish
for every feasible \(r_P\), the reference law is strictly positive.
Relative entropy is convex in \(P\) through this linear map and
strictly convex in \(r_P\). Hence all minimizers induce the same
two-output law.
\end{proof}

\noindent Finally, the matched row-identifiability limit from Section~\ref{sec:high-intensity} determines the large-budget capacity.

\begin{corollary}[Capacity saturation]\label{cor:cap-saturation}
If \(r\) is the number of distinct rows of \(M\), then
\begin{equation}\label{eq:cap-infinity}
\boxed{
\lim_{\gamma\to\infty}\mathsf C_M(\gamma)=\log r.
}
\end{equation}
If all rows are distinct, the limit is \(\log|\X|\).
\end{corollary}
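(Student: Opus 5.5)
The plan is to sandwich \(\mathsf C_M(\gamma)\) between two quantities that both tend to \(\log r\), using the upper bound in \eqref{eq:cap-bounds} and the matched row-identifiability limit \eqref{eq:entropy-row}.

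\textbf{Upper bound.} The bound \eqref{eq:cap-bounds} gives \(\mathsf C_M(\gamma)\le(1-e^{-\gamma})\log r\le\log r\) for every \(\gamma\). A more direct route gives the same conclusion: since \(X\to W\to Y_\gamma\) is a Markov chain, \(I_P(X;Y_\gamma)=I_P(W;Y_\gamma)\le H(W)\le\log r\), because \(W\) takes at most \(r\) values. Hence \(\limsup_{\gamma\to\infty}\mathsf C_M(\gamma)\le\log r\).

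\textbf{Lower bound.} Choose one representative input \(x_v\) for each distinct row value \(v\). Let \(P_0\) be uniform on these \(r\) representatives, so that \(W\) is uniform on the \(r\) distinct rows and \(H(W)=\log r\). The matched high-intensity discussion after Proposition~\ref{prop:high-intensity} shows \(I_{P_0}(X;Y_\gamma)\to H(W)\). Its justification is that a nearest-row decoder of the empirical frequencies is consistent and that Fano's inequality on the finite row alphabet gives \(H(W| Y_\gamma)\to0\). Equivalently, \eqref{eq:entropy-row} expresses \(\log r\) as \(\int_0^\infty\cR_{P_0,P_0}\,d\gamma\), and the partial integrals converge to it. Therefore
\[
\liminf_{\gamma\to\infty}\mathsf C_M(\gamma)\ge\lim_{\gamma\to\infty}I_{P_0}(X;Y_\gamma)=\log r.
\]
Combining the two bounds proves \eqref{eq:cap-infinity}. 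If all rows are distinct, then \(r=|\X|\) and the limit is \(\log|\X|\).

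\textbf{Main obstacle.} The only nontrivial step is the consistency claim behind \(H(W| Y_\gamma)\to0\). Since the paper has already established it, I would cite it. If it needed verification, I would argue as follows. Conditional on \(W=v\), the empirical frequencies converge in probability to \(v\) as \(N_\gamma\to\infty\) in probability. The distinct rows are finitely many and separated by a positive distance, so the decoding error probability tends to zero. Fano's inequality with alphabet size \(r\) then gives \(H(W| Y_\gamma)\le h(p_e)+p_e\log r\to0\).
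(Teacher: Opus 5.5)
Your proposal is correct and follows the paper's proof: a prior uniform on representatives of the \(r\) distinct rows has information tending to \(\log r\) by the row-identifiability limit \eqref{eq:entropy-row}. The bound \(I_P(X;Y_\gamma)=I_P(W;Y_\gamma)\le H_P(W)\le\log r\), valid for every prior, gives the matching upper bound.
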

\begin{proof}
Choose a fixed prior inducing the uniform law on the \(r\) distinct rows.  By \eqref{eq:entropy-row}, its information tends to \(\log r\), giving a lower bound on the capacity limit.  For every prior \(P\) and every \(\gamma\),
\[
I_P(X;Y_\gamma)=I_P(W;Y_\gamma)\le H_P(W)\le\log r.
\]
These bounds give \eqref{eq:cap-infinity}.
\end{proof}

\subsection{Coupon-collector channels and balanced input designs}
\label{sec:coupon-design}
The selection rule above becomes explicit for a channel whose input
is a subset of symbols and whose output is a uniformly selected member of that subset. We ask how much information is lost when the input distribution uses only a smaller family of subsets. The answer depends on which smaller subsets appear in the observations, and how evenly the chosen input family represents them.

\subsubsection{Model and exact information loss}
Fix integers \(2\le k<m\). The input alphabet consists of all
\(k\)-element subsets of \([m]=\{1,\ldots,m\}\), and
\[
W_S(z)=\frac1k\ind\{z\in S\}.
\]
This is the interference-free coupon-collector channel
of~\cite{sokolovskii2025coupon}, here observed a Poisson number of times. An input \(S\sim P\) is selected once; conditional on \(S\), all reads sample independently and uniformly from \(S\). Write \(I_P(\gamma)=I_P(S;Y_\gamma)\) and \(\mathsf C_{m,k}(\gamma)=\max_P I_P(\gamma)\). Let \(U\) be uniform on all \(\binom mk\) possible inputs. Permuting the \(m\) symbols leaves the channel unchanged.
Averaging any input prior over these permutations yields $U$, and using concavity of mutual information in the prior shows that $U$ achieves capacity at every budget.

\noindent Let \(T\) be the set of distinct symbols observed and let \(R=|T|\). The total number of reads \(N_\gamma\) counts repetitions; \(R\) does not. For a given input \(S\), Poisson splitting gives independent \(\Pois(\gamma/k)\) counts for its \(k\) members. Thus, with
\[
p=1-e^{-\gamma/k},
\]
each member is observed with probability \(p\), independently of the
others, and \(R\sim\operatorname{Bin}(k,p)\).
This distribution does not depend on \(S\) or on its prior.

\noindent The set \(T\) is sufficient for the hidden input.
A record of length \(n\) has likelihood zero if it contains
a symbol outside \(S\), and otherwise has likelihood
\(e^{-\gamma}\gamma^n/(n!k^n)\), the same for every compatible input. Under \(U\), the posterior is therefore uniform over the
\(\binom{m-R}{k-R}\) subsets containing \(T\). It follows that
\begin{equation}\label{eq:coupon-capacity}
\mathsf C_{m,k}(\gamma)
=\log\binom mk-
\sum_{r=0}^{k}\binom kr p^r(1-p)^{k-r}
\log\binom{m-r}{k-r}.
\end{equation}
This is the Poisson average of the fixed-read capacity
in~\cite{sokolovskii2025coupon}. The same prior \(U\) maximizes
information for every fixed number of reads, so averaging those
capacities is valid for this channel.

\noindent To describe a general prior, define its distribution on \(r\)-subsets by
\begin{equation}\label{eq:coupon-marginal}
\mu_{P,r}(T)=\frac{\Pr_P(T\subseteq S)}{\binom kr},
\text{ } |T|=r,\text{ } 0\le r\le k.
\end{equation}
Equivalently, first draw \(S\sim P\), then choose an \(r\)-subset of
\(S\) uniformly. Let \(u_r\) be uniform on all \(\binom mr\)
\(r\)-subsets and put
\[
d_r(P)=D(\mu_{P,r}\|u_r).
\]
Here \(d_0(P)=0\) and \(d_k(P)=D(P\|U)\).
The next identity separates the probability of observing \(r\)
distinct symbols from the imbalance in their distribution.

\begin{theorem}[Exact information deficit]\label{thm:coupon-deficit}
For every input prior \(P\) and every \(\gamma\ge0\),
\begin{equation}\label{eq:coupon-deficit}
\boxed{
\mathsf C_{m,k}(\gamma)-I_P(\gamma)
=\sum_{r=1}^{k}\binom kr p^r(1-p)^{k-r}d_r(P).
}
\end{equation}
\end{theorem}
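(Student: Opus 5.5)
Since $T$ is a sufficient statistic for $S$, we have $I_P(\gamma)=I_P(S;T)=H_P(T)-H_P(T| S)$. First condition on $R=|T|$. Its law $\operatorname{Bin}(k,p)$ is the same for every input and every prior. Given $S$ and $R=r$, the observed set $T$ is uniform on the $r$-subsets of $S$. This is because the members of $S$ are observed independently with common probability $p$, so conditional on exactly $r$ being observed, every $r$-subset is equally likely. Hence $H(T| S,R=r)=\log\binom kr$ for every $S$, which does not depend on $P$. Because $R$ is a function of $T$, we also get $H_P(T)=H(R)+\sum_r \Pr(R=r)H_P(T| R=r)$. The conditional law of $T$ given $R=r$ under prior $P$ is exactly $\mu_{P,r}$ from \eqref{eq:coupon-marginal}: draw $S\sim P$, then a uniform $r$-subset of $S$.

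Combining these facts,
\[
I_P(\gamma)=H(R)+\sum_{r=0}^k\binom kr p^r(1-p)^{k-r}\Bigl[H(\mu_{P,r})-\log\binom kr\Bigr].
\]
Next write $H(\mu_{P,r})=\log\binom mr-D(\mu_{P,r}\|u_r)=\log\binom mr-d_r(P)$, which holds since $u_r$ is uniform on $\binom mr$ points. For the uniform prior $U$, $\mu_{U,r}=u_r$ by symmetry, so $d_r(U)=0$. Therefore $I_U(\gamma)-I_P(\gamma)=\sum_r\binom kr p^r(1-p)^{k-r}d_r(P)$. The $r=0$ term vanishes because $d_0=0$, which leaves the sum from $r=1$.

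It remains to identify $I_U(\gamma)$ with $\mathsf C_{m,k}(\gamma)$. This follows from the symmetry argument stated before \eqref{eq:coupon-capacity}. It can also be checked directly: the identity $\log\binom mr-\log\binom kr=\log\binom mk-\log\binom{m-r}{k-r}$ combined with $H(R)=$ const recovers \eqref{eq:coupon-capacity}, although the argument only needs $I_U=\mathsf C_{m,k}$. Alternatively, since each $d_r(P)\ge0$, the displayed identity itself shows $I_P\le I_U$, so $U$ achieves capacity independently of the earlier argument.

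The main point needing care is the claim that $T$ given $(S,R=r)$ is uniform on the $r$-subsets of $S$, and that this is valid under the Poissonized model. It follows from Poisson splitting into independent $\Pois(\gamma/k)$ counts with exchangeable indicators. The case $\gamma=0$ is trivial, since then $p=0$ and both sides vanish. Everything else is finite-alphabet entropy bookkeeping, with the alphabet of $T$ finite, so no convergence issues arise.
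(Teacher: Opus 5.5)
Your route is the paper's: condition on $R$, use that $T$ given $(S,R=r)$ is uniform on the $\binom kr$ subsets of $S$, identify the law of $T$ given $R=r$ as $\mu_{P,r}$, and subtract the expression for $U$. However, your displayed formula for $I_P(\gamma)$ contains an entropy-bookkeeping error: the term $H(R)$ should not be there. You used $H_P(T)=H(R)+\sum_r\Pr(R=r)H(\mu_{P,r})$, but then took $H(T\mid S)=\sum_r\Pr(R=r)\log\binom kr$. Since $R$ is a function of $T$ and is independent of $S$, the correct conditional entropy is
\[
H(T\mid S)=H(R\mid S)+H(T\mid S,R)=H(R)+\sum_{r}\Pr(R=r)\log\binom kr .
\]
So $H(R)$ cancels, and $I_P(\gamma)=\sum_r\Pr(R=r)\bigl[H(\mu_{P,r})-\log\binom kr\bigr]$. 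Equivalently, $I(S;T)=I(S;R)+I(S;T\mid R)=I(S;T\mid R)$.

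A point-mass prior shows your version is wrong. It gives $H(\mu_{P,r})=\log\binom kr$ for every $r$, so your formula would return $I_P(\gamma)=H(R)>0$ for every $\gamma>0$, whereas the true value is $0$. For the same reason, your ``direct check'' does not recover \eqref{eq:coupon-capacity}. It produces that formula plus $H(R)$, and a nonzero prior-independent constant does not disappear just because it is constant.

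The spurious term is the same for every prior, so it cancels in $I_U(\gamma)-I_P(\gamma)$. Your deficit identity, and the consequence $I_P\le I_U$, are therefore still correct. Correcting that one line makes your argument coincide with the paper's proof.
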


\begin{proof}
At zero budget both sides vanish. For \(\gamma>0\), every
\(r\)-subset of a fixed \(S\) is the observed set with probability
\(p^r(1-p)^{k-r}\). Consequently \(T\), conditional on \(R=r\),
has distribution \(\mu_{P,r}\), while \(T\), conditional on \(S\)
and \(R=r\), is uniform over the \(\binom kr\) compatible subsets.
Sufficiency of \(T\) and independence of \(R\) from \(S\) gives
\[
I_P(\gamma)=
\sum_{r=0}^{k}\Pr(R=r)
\left[H(\mu_{P,r})-\log\binom kr\right].
\]
For \(P=U\), each \(\mu_{U,r}=u_r\).
Subtracting the two expressions and using
\(\log\binom mr-H(\mu_{P,r})=d_r(P)\) proves the result.
\end{proof}
\noindent Every term is nonnegative. Moreover, for each \(\gamma>0\), the
\(r=k\) term is \(p^kD(P\|U)\). Thus \(U\) is the unique
capacity-achieving prior at every positive Poisson budget.
A smaller input family can nevertheless preserve several leading
terms of the capacity expansion, as we now show.

\subsubsection{Balance conditions, fixed reads, and support size}
Balancing individual symbols means \(\Pr_P(i\in S)=k/m\) for every \(i\). Balancing pairs imposes the corresponding uniformity on two-element subsets. More generally, for \(1\le t\le k\), call \(P\) a
\emph{weighted \(t\)-design} if \(\mu_{P,r}=u_r\) for every
\(1\le r\le t\).
The weights are the input probabilities \(P(S)\).
It suffices to check \(\mu_{P,t}=u_t\): selecting a smaller subset
uniformly from a uniform \(t\)-subset gives the uniform smaller-subset
law. If \(P\) is uniform on a family of \(k\)-subsets, this is the
usual design condition that each \(t\)-subset belongs to the same
number of members of that family.

\begin{corollary}[Order of the information deficit]\label{cor:coupon-order}
Fix an input prior \(P\), independent of \(\gamma\).
For \(1\le t<k\),
\[
\mathsf C_{m,k}(\gamma)-I_P(\gamma)=O(\gamma^{t+1})
\text{ }\text{as }\gamma\downarrow0
\]
if and only if \(P\) is a weighted \(t\)-design.
If \(P\ne U\) and \(r_*\) is the smallest index with \(d_{r_*}(P)>0\),
then
\begin{equation}\label{eq:coupon-leading-deficit}
\mathsf C_{m,k}(\gamma)-I_P(\gamma)
=\frac{\binom{k}{r_*}}{k^{r_*}}d_{r_*}(P)\gamma^{r_*}
+O(\gamma^{r_*+1}).
\end{equation}
In particular, every weighted \((k-1)\)-design satisfies
\begin{equation}\label{eq:coupon-top-deficit}
\mathsf C_{m,k}(\gamma)-I_P(\gamma)
=(1-e^{-\gamma/k})^kD(P\|U)
\end{equation}
at every budget.
\end{corollary}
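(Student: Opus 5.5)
The plan is to work directly from the exact deficit formula of Theorem~\ref{thm:coupon-deficit}. Each summand there is a product of a nonnegative constant $d_r(P)$ and the function $\binom kr p^r(1-p)^{k-r}$, with $p=1-e^{-\gamma/k}$. Since $p=\gamma/k+O(\gamma^2)$ and $1-p\to1$, each such function is analytic in $\gamma$ and satisfies
\[
\binom kr p^r(1-p)^{k-r}=\frac{\binom kr}{k^r}\gamma^r+O(\gamma^{r+1}),
\]
with a strictly positive leading coefficient. The whole argument reduces to reading off the order of vanishing of a finite nonnegative combination of such terms.

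First the leading-deficit formula \eqref{eq:coupon-leading-deficit}. If $P\ne U$, then $d_k(P)=D(P\|U)>0$, so the index $r_*$ exists. All terms with $r<r_*$ vanish because $d_r(P)=0$ for those $r$. The $r_*$ term contributes $\binom k{r_*}k^{-r_*}d_{r_*}(P)\gamma^{r_*}+O(\gamma^{r_*+1})$. The terms with $r>r_*$ are $O(\gamma^{r_*+1})$, since there are finitely many of them and each $d_r(P)$ is a fixed finite constant; finiteness holds because these are divergences from uniform laws on finite sets.

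Next the design characterization. If $P$ is a weighted $t$-design, then $\mu_{P,r}=u_r$ and hence $d_r(P)=0$ for all $1\le r\le t$. Only terms with $r\ge t+1$ survive, each of which is $O(\gamma^{t+1})$. Conversely, suppose $P$ is not a weighted $t$-design. Then some $r\le t$ has $d_r(P)>0$, so $P\ne U$ and $r_*\le t$. By \eqref{eq:coupon-leading-deficit}, the deficit is then asymptotic to a positive multiple of $\gamma^{r_*}$ with $r_*\le t$, which is not $O(\gamma^{t+1})$. For $P=U$ the deficit is identically zero, and $U$ is trivially a $t$-design, so that case is consistent.

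Finally, for a weighted $(k-1)$-design, $d_r(P)=0$ for all $r\le k-1$. Only the $r=k$ term of \eqref{eq:coupon-deficit} remains, which is $p^kD(P\|U)$. This gives \eqref{eq:coupon-top-deficit} exactly at every $\gamma$, with no asymptotics needed.

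The only delicate point is the converse direction. There I must make sure that the positivity of the leading coefficient $\binom k{r_*}k^{-r_*}$, together with the nonnegativity of every term, rules out cancellation. Nonnegativity of all terms makes this immediate.
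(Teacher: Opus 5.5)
Your proposal is correct and follows the paper's proof. Both arguments expand each binomial weight in \eqref{eq:coupon-deficit} as $\binom kr\gamma^r/k^r+O(\gamma^{r+1})$, take the first nonzero $d_r(P)$ as the strictly positive leading term (which gives both directions of the design criterion), and note that for a weighted $(k-1)$-design only the $r=k$ term $p^kD(P\|U)$ survives. Your explicit treatment of the case $P=U$ and of the finiteness of each $d_r(P)$ fills in details the paper leaves implicit.
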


\begin{proof}
Since \(p=\gamma/k+O(\gamma^2)\), the \(r\)th weight in
\eqref{eq:coupon-deficit} is
\(\binom kr\gamma^r/k^r+O(\gamma^{r+1})\).
The first nonzero divergence therefore gives
\eqref{eq:coupon-leading-deficit}; its coefficient is strictly
positive and cannot be canceled by higher-order terms.
The order criterion follows. For a weighted \((k-1)\)-design
only the \(r=k\) term remains, giving \eqref{eq:coupon-top-deficit}.
\end{proof}
\noindent The same balance condition has a fixed-read interpretation.
Let \(\mathsf C_{m,k}^{[n]}\) denote the capacity with exactly \(n\ge1\)
reads. Conditional on observing \(r\) distinct members, their set
again has distribution \(\mu_{P,r}\). The preceding entropy argument gives
\begin{equation}\label{eq:coupon-fixed-deficit}
\mathsf C_{m,k}^{[n]}-I_P(S;Z_1,\ldots,Z_n)
=\sum_{r=1}^{\min(n,k)}w_{n,r}d_r(P),
\end{equation}
where \(w_{n,r}>0\) is the probability that \(n\) independent
uniform draws from a \(k\)-set contain exactly \(r\) distinct members.
Consequently, for \(1\le t\le k\), \(P\) achieves full-input
capacity with exactly \(t\) reads if and only if it is a weighted
\(t\)-design. It then also achieves capacity for every smaller
positive read count. At \(t=k\), this condition forces \(P=U\).
Poissonization gives the explicit binomial weights in
\eqref{eq:coupon-deficit}, translating these fixed-read equalities
into an order of information loss at small mean budgets.

The connection to Theorem~\ref{thm:cap-selection} can be made explicit.
Here \(A(P)=H(Z_1)-\log k\), so \(C(M)=\log(m/k)\), and
\(\mathcal A\) consists exactly of the balanced one-symbol priors.
For any such prior, direct evaluation of the two-output law gives
\begin{equation}\label{eq:coupon-redundancy}
B(P)=B_*+\frac{k-1}{k}d_2(P),
\end{equation}
where
\[
B_*=\frac1k\log\frac mk+
\frac{k-1}{k}\log\frac{m(k-1)}{k(m-1)}.
\]
Indeed, the diagonal probabilities are \(1/(mk)\), and for \(i\ne j\)
the probability of the ordered pair \((i,j)\) is
\((k-1)\mu_{P,2}(\{i,j\})/(2k)\).
Substituting these probabilities into
\(D(r_P\|u_1\otimes u_1)\) yields \eqref{eq:coupon-redundancy}.
Thus the minimum-redundancy priors are precisely the weighted
\(2\)-designs. For \(k=2\), this already forces \(P=U\).
For larger \(k\), higher subset marginals can still distinguish
priors with the same optimal first two information terms.

\begin{corollary}[Minimum support for second-order agreement]
\label{cor:coupon-support}
Any fixed prior satisfying
\[
\mathsf C_{m,k}(\gamma)-I_P(\gamma)=O(\gamma^3)
\text{ }\text{as }\gamma\downarrow0
\]
has at least \(m\) inputs in its support.
\end{corollary}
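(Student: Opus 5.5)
By Corollary~\ref{cor:coupon-order} with \(t=2\) (note \(2<k\) is needed; if \(k=2\) the condition forces \(d_1=d_2=0\), so \(P=U\), whose support \(\binom m2\ge m\) for \(m\ge3\), and the claim is immediate), a fixed prior with deficit \(O(\gamma^3)\) is exactly a weighted \(2\)-design. That is, \(\mu_{P,1}=u_1\) and \(\mu_{P,2}=u_2\). Writing these in terms of \(P\), we get \(\Pr_P(i\in S)=k/m\) for every symbol \(i\) and \(\Pr_P(\{i,j\}\subseteq S)=\lambda\triangleq\binom k2/\binom m2=k(k-1)/(m(m-1))\) for every pair \(i\ne j\). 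The task is therefore a weighted Fisher inequality: a weighted pairwise-balanced family of \(k\)-subsets of \([m]\) with \(k<m\) has at least \(m\) blocks of positive weight.

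The plan is the standard linear-algebra proof. Let \(\mathcal F=\operatorname{supp}P\), and let \(A\) be the \(m\times|\mathcal F|\) incidence matrix with \(A_{i,S}=\ind\{i\in S\}\). Let \(D=\operatorname{diag}(P(S))_{S\in\mathcal F}\), which is positive definite. Then
\[
A D A^{\top} = (r-\lambda) I_m + \lambda J_m, \qquad r=k/m,
\]
because the diagonal entries are \(\Pr_P(i\in S)=r\) and the off-diagonal entries are \(\Pr_P(\{i,j\}\subseteq S)=\lambda\).

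The next step is to show this \(m\times m\) matrix is nonsingular. Its eigenvalues are \(r-\lambda\), with multiplicity \(m-1\), and \(r-\lambda+m\lambda\). We have
\[
r-\lambda=\frac km\Bigl(1-\frac{k-1}{m-1}\Bigr)>0
\]
precisely because \(k<m\), and \(r+(m-1)\lambda>0\) trivially. So \(ADA^\top\) has rank \(m\). Since its rank is at most \(\operatorname{rank}A\le|\mathcal F|\), we conclude \(|\mathcal F|\ge m\).

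The only points needing care are translating the \(O(\gamma^3)\) hypothesis into the design condition, and handling \(k=2\). The translation is direct from the stated corollary: the deficit is \(O(\gamma^3)\) iff \(d_1(P)=d_2(P)=0\), since the leading term \(\binom{k}{r_*}k^{-r_*}d_{r_*}\gamma^{r_*}\) would otherwise be of order \(\gamma\) or \(\gamma^2\). For \(k=2\) the same argument goes through without invoking the corollary's restriction \(t<k\), since \(d_1=d_2=0\) follows from \eqref{eq:coupon-leading-deficit} directly. I expect the main obstacle to be only the positivity \(r>\lambda\), which uses \(k<m\) and is the weighted analogue of the condition \(v>k\) in Fisher's inequality.
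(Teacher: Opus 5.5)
Your proof is correct and follows the paper's argument. The paper likewise forces $d_1(P)=d_2(P)=0$ and writes $\sum_{S}P(S)V_SV_S^{\mathsf T}=\frac{k(m-k)}{m(m-1)}I_m+\frac{k(k-1)}{m(m-1)}\mathbf{1}\mathbf{1}^{\mathsf T}$, which is your $ADA^{\top}$ since $r-\lambda=k(m-k)/(m(m-1))$; it is nonsingular because $k<m$ and is a sum of $|\operatorname{supp}P|$ rank-one terms. Your explicit eigenvalue computation and separate $k=2$ count only spell out details the paper states briefly.
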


\begin{proof}
The exact deficit forces \(d_1(P)=d_2(P)=0\).
For \(k=2\), the second condition is \(P=U\); the same argument
below applies.
Let \(V_S\in\{0,1\}^m\) be the incidence vector of \(S\).
The singleton and pair balance conditions give
\[
\E_P[V_SV_S^{\mathsf T}]
=\frac{k(m-k)}{m(m-1)}I_m+
\frac{k(k-1)}{m(m-1)}\boldsymbol1\boldsymbol1^{\mathsf T}.
\]
This matrix has rank \(m\), since \(k<m\).
Its representation
\(\sum_{S:P(S)>0}P(S)V_SV_S^{\mathsf T}\) is a sum of
\(|\operatorname{supp}P|\) rank-one matrices, so the support
must have size at least \(m\).
\end{proof}

\noindent This is a weighted form of Fisher's design
inequality~\cite{fisher1940blocks}.
Here it bounds the number of allowed inputs needed to preserve
the linear and quadratic capacity terms, even when those inputs
may be assigned unequal probabilities.

\subsubsection{Matching and Fano examples}
First take \(m=4\) and \(k=2\).
The full input alphabet consists of six pairs.
Compare its uniform prior with the prior assigning probability
\(1/2\) to each of \(\{1,2\}\) and \(\{3,4\}\).
Each symbol occurs with probability \(1/2\) under both priors,
so both achieve one-use capacity \(\log2\).
Their pair distributions differ: for the two-input prior,
\(d_1=0\) and \(d_2=\log3\). Hence
\begin{equation}\label{eq:coupon-matching-gap}
\mathsf C_{4,2}(\gamma)-I_{\mathrm{match}}(\gamma)
=\log3\,(1-e^{-\gamma/2})^2
=\frac{\log3}{4}\gamma^2+O(\gamma^3).
\end{equation}
One read identifies the input under the two-input prior, giving
\[
I_{\mathrm{match}}(\gamma)=(1-e^{-\gamma})\log2.
\]
Under the full prior, one distinct observed member leaves three
possible input pairs; two distinct members determine the pair.
The conditional information gains agree when there is one distinct
member and differ by \(\log3\) when there are two.
The probability of the latter event is \((1-e^{-\gamma/2})^2\),
which explains the exact deficit.

\noindent More explicitly, every one-use capacity-achieving prior is a mixture of the three perfect matchings
\[
\{\{1,2\},\{3,4\}\},\text{ }
\{\{1,3\},\{2,4\}\},\text{ }
\{\{1,4\},\{2,3\}\}.
\]
The one-symbol balance equations force equal masses on the two
pairs in each matching. If \(v\) is the distribution over matchings
and the pair is then chosen uniformly within the selected matching,
\(H(P)=\log2+H(v)\) and
\begin{equation}\label{eq:coupon-matching-family}
I_v(\gamma)
=(1-e^{-\gamma})\log2+
(1-e^{-\gamma/2})^2H(v).
\end{equation}
To obtain this expression, use
\(D(P\|U)=\log6-H(P)=\log3-H(v)\) in
\eqref{eq:coupon-top-deficit} and subtract the matching case.
The uniform mixture is therefore the unique optimizer at every
positive budget. In this example the second-order selection rule
already identifies the full optimal prior.
\noindent Next take \(m=7\) and \(k=3\). The full alphabet contains 35 triples. Use instead the uniform prior on the seven triples
\[
123,\ 145,\ 167,\ 246,\ 257,\ 347,\ 356,
\]
where \(123\) denotes \(\{1,2,3\}\), and similarly for the other triples.
Every symbol appears in three triples, and every pair appears in
exactly one. This is the Fano \(2\)-design.\footnote{The Fano family is also used in~\cite{barakpelleg2026fano} to study the time needed to cover all seven points by repeatedly drawing whole triples. Here one hidden triple is selected once, and we measure the information obtained by sampling its individual members.}
Its singleton and pair marginals agree with those of the uniform
35-input prior, so \(d_1=d_2=0\), while \(d_3=\log5\).
Consequently
\begin{equation}\label{eq:coupon-fano-gap}
\boxed{
\mathsf C_{7,3}(\gamma)-I_{\mathrm{Fano}}(\gamma)
=\log5\,(1-e^{-\gamma/3})^3.
}
\end{equation}
The deficit is \((\log5)\gamma^3/27+O(\gamma^4)\):
the seven-input prior matches both the linear and quadratic terms.
It also achieves full-input capacity with exactly one or two reads.
Corollary~\ref{cor:coupon-support} shows that seven is the smallest
possible support with this property, even if unequal input
probabilities are allowed. The exact gap can also be read directly from the posterior. With one distinct observed symbol, the full and Fano priors leave 15 and three possible inputs, respectively. Both information gains are \(\log(7/3)\). With two distinct symbols, they leave five and one possible inputs, respectively; both information gains are \(\log7\). When all three members are observed, the gains are \(\log35\) and \(\log7\), a difference of \(\log5\). This event has probability \((1-e^{-\gamma/3})^3\).
It is the event of seeing three distinct members, not merely of
taking at least three reads. The different posterior uncertainties in the first two cases are consistent with equal information gains because the two priors start with different entropies.

\noindent Figure~\ref{fig:coupon-designs} evaluates these exact formulas.
At \(\gamma=1\), the Fano prior gives approximately \(0.7504\)
nats, compared with capacity \(0.7870\) nats, retaining \(95.3\%\)
of the full-input information with seven possible inputs.
The restriction changes the messages that can be represented;
it does not encode all 35 input labels into seven labels.
As the budget grows, the two information curves tend to
\(\log7\) and \(\log35\).
The Fano prior thus also illustrates the qualification in
Theorem~\ref{thm:cap-selection}: matching the minimum-redundancy
coefficient need not make a prior optimal at any positive budget.

\begin{figure}[htbp]
\centering
\includegraphics[width=\linewidth]{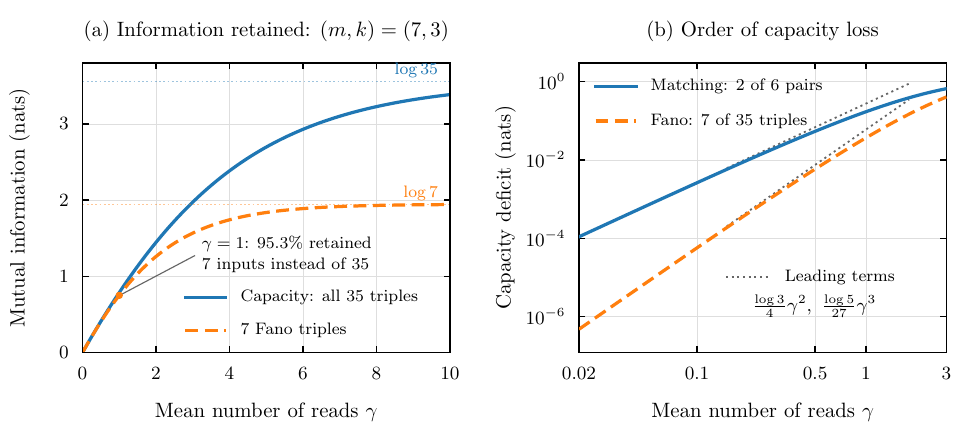}
\caption{Exact information curves for balanced input families in the
coupon-collector channel. \\
(a) For \(m=7,k=3\), the uniform prior on seven Fano triples is
compared with capacity, achieved by the uniform prior on all
35 triples. Their limiting values are \(\log7\) and \(\log35\). 
\\
(b) Information deficits for the two-pair matching
\((m,k)=(4,2)\) and the seven-triple Fano family \((m,k)=(7,3)\).
Their respective leading orders are quadratic and cubic in
\(\gamma\); dotted curves show the leading terms
\((\log3)\gamma^2/4\) and \((\log5)\gamma^3/27\).
Each deficit is relative to the full-input capacity
of its own channel. The horizontal axis is the mean total
number of reads, including repetitions.}
\label{fig:coupon-designs}
\end{figure}

The same calculation applies to any family of \(k\)-subsets in which
each \((k-1)\)-subset occurs exactly once, whenever such a family
exists. Counting these incidences gives
\(b=\binom{m}{k-1}/k\) members.
The uniform prior is a weighted \((k-1)\)-design and satisfies
\(D(P\|U)=\log(\binom mk/b)=\log(m-k+1)\).
Thus
\[
\mathsf C_{m,k}(\gamma)-I_P(\gamma)
=\log(m-k+1)(1-e^{-\gamma/k})^k.
\]
It matches the first \(k-1\) derivatives of the full-input
capacity at zero using fewer possible inputs.
The matching and Fano constructions are the \(k=2\) and \(k=3\)
instances of this statement.

\FloatBarrier

\section{Continuous-time marked-Poisson DMC model}\label{sec:ct-model}
We now replace the single hidden input by a process that evolves in physical time, while keeping the channel \(M\) fixed. The estimation target is the channel-output distribution associated with the current state. We compare estimation from observations already available with retrospective estimation from the full observation record.
Fix \(T>0\), and let
\begin{equation}
X^T=\{X_t:0\le t\le T\},\text{ } X_t\in\X,
\end{equation}
be a finite-state process with right-continuous paths having left limits, viewed as a random element of the standard Borel path space \(D([0,T],\X)\).  The true path law \(P\) and any assumed path law \(Q\) remain fixed as \(\gamma\) varies, and there is no feedback from the observations into the input.
For \(\gamma\ge0\), generate a homogeneous Poisson clock of rate \(\gamma\), independently of the entire input path.  Conditional on the path and clock times, generate independent marks, with the mark at time \(t\) having law \(M(\cdot| X_t)\).  The complete observation \(Y_\gamma^T\) retains both timestamps and marks; its expected total count is \(\gamma T\).
Write \(W_t=W_{X_t}=M(\cdot|X_t)\) for the channel row selected by the state at time \(t\), and let \(Y_z(t)\) count the observations with mark \(z\) up to time \(t\). Conditional on \(X^T\), these counting processes are independent inhomogeneous Poisson processes with intensity densities
\begin{equation}\label{eq:ct-intensity}
\boxed{
\lambda_z(t)=\gamma W_t(z)=\gamma M(z| X_t).
}
\end{equation}
Because
\begin{equation}
\sum_z\lambda_z(t)=\gamma,
\end{equation}
the superposed event clock has the same law under every input path.  Thus this is continuous-time random sampling of the input process through the DMC.  When the input varies in time, timestamps generally cannot be discarded in favor of only aggregate symbol counts.

Let \(\F_t^{Y,\gamma}\) be the observation filtration and \(\F_{t-}^{Y,\gamma}\) its strict past.  Under a possibly mismatched path law \(Q\), define
\begin{align}
\widehat W_{Q,t}^{\mathrm c,\gamma}
&\triangleq \E_Q[W_{t-}|\F_{t-}^{Y,\gamma}],\label{eq:ct-cpred}\\
\widehat W_{Q,t}^{\mathrm{nc},\gamma}
&\triangleq \E_Q[W_t|\F_T^{Y,\gamma}].\label{eq:ct-ncpred}
\end{align}
For the first conditional mean we use a predictable version, with \(W_{0-}=W_0\).  It is the causal predictor; the second is a smoother that exploits all observations, inclusive of those occurring after time \(t\).  Since \(W_{t-}=W_t\) for Lebesgue-almost every \(t\) on each path, using the left limit does not change the integrated losses.  The independent clock almost surely hits no input jump.  At an arrival, the strict-past convention ensures that the mark being scored is not already used by its own predictor.
Define the integrated causal and noncausal KL losses
\begin{align}
\cC_{P,Q}(\gamma)
&\triangleq
\E_P\int_0^T
\KL{W_t}{\widehat W_{Q,t}^{\mathrm c,\gamma}}\,dt,
\label{eq:ct-C}\\
\cS_{P,Q}(\gamma)
&\triangleq
\E_P\int_0^T
\KL{W_t}{\widehat W_{Q,t}^{\mathrm{nc},\gamma}}\,dt.
\label{eq:ct-S}
\end{align}
These are expected cumulative losses over \([0,T]\), not per-unit-time averages. The predictors are constructed under the path law \(Q\), while their performance is evaluated under the true law \(P\). When \(Q=P\), they are the minimum causal and noncausal Bayes risks under KL loss. For mismatched laws, the support and finiteness conditions needed to evaluate these risks under \(P\) are specified in Section~\ref{sec:ct-mismatch}.

\section{Causal and noncausal information--estimation identities}\label{sec:ct-identities}

\begin{theorem}[Matched continuous-time identities]\label{thm:ct-matched}
For the marked-Poisson DMC observation model of Section~\ref{sec:ct-model}, the mutual information is finite and continuously differentiable on \([0,\infty)\), with a right derivative at zero.  For \(\gamma\ge0\),
\begin{equation}\label{eq:ct-duncan}
\boxed{
I_P(X^T;Y_\gamma^T)
=
\gamma\cC_{P,P}(\gamma),
}
\end{equation}
and
\begin{equation}\label{eq:ct-derivative}
\boxed{
\frac{d}{d\gamma}I_P(X^T;Y_\gamma^T)
=
\cS_{P,P}(\gamma).
}
\end{equation}
Consequently,
\begin{equation}\label{eq:ct-average}
\boxed{
\cC_{P,P}(\gamma)
=
\frac1\gamma\int_0^\gamma\cS_{P,P}(\alpha)\,d\alpha,
\text{ } \gamma>0.
}
\end{equation}
\end{theorem}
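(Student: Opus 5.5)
Our plan is to prove the causal identity \eqref{eq:ct-duncan} by a likelihood-ratio (Girsanov-type) representation of the marked point process. We will then obtain \eqref{eq:ct-derivative} by a Poisson thinning/superposition argument that reduces differentiation in \(\gamma\) to adding one independent extra observation, as in Theorem~\ref{thm:first}. The average formula \eqref{eq:ct-average} then follows by integrating \eqref{eq:ct-derivative} from \(I_P(X^T;Y_0^T)=0\) and dividing by \(\gamma\).

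For \eqref{eq:ct-duncan}, we write the conditional law of \(Y_\gamma^T\) given \(X^T\) relative to the reference law of a rate-\(\gamma\) clock with marks i.i.d.\ uniform on \(\Z\). Since \(\sum_z\lambda_z(t)=\gamma\) under every path, the compensator terms cancel. The conditional log-likelihood is therefore \(\sum_{\text{arrivals }t_i}\log W_{t_i}(z_i)\) plus a path-independent term. The marginal likelihood has the innovation form \(\sum_i\log\widehat W_{P,t_i}^{\mathrm c,\gamma}(z_i)\) plus the same term, by the standard filtering formula for marked point processes with predictable intensity \(\gamma\widehat W^{\mathrm c}\). Hence
\[
I_P(X^T;Y_\gamma^T)=\E_P\sum_i\log\frac{W_{t_i}(z_i)}{\widehat W^{\mathrm c,\gamma}_{P,t_i}(z_i)}.
\]
We then take the compensator of this sum, which is \(\gamma\int_0^T\sum_zW_t(z)\log(W_t(z)/\widehat W^{\mathrm c}_t(z))\,dt\), giving \(\gamma\,\cC_{P,P}(\gamma)\). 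Justifying the compensation requires integrability. The zero-probability issue is handled as in Theorem~\ref{thm:first}: on positive-probability histories the predictable posterior mixture charges every mark that a posterior-possible row charges. The log-ratio is bounded below by \(\log\pi_{t-}(X_{t-})\). We will control it by noting that \(\E_P\) of the summed \(\log(1/\pi)\) terms is bounded by a finite entropy-type quantity, and we will truncate and use monotone convergence.

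For \eqref{eq:ct-derivative}, we use that a rate-\((\gamma+h)\) observation is a rate-\(\gamma\) record superposed with an independent rate-\(h\) record. The chain rule then gives
\[
I(X^T;Y^T_{\gamma+h})-I(X^T;Y^T_\gamma)=I(X^T;Y^{T,h}\mid Y^T_\gamma).
\]
With probability \(hT+o(h)\) the extra record has exactly one point, at a uniform time \(U\in[0,T]\) with mark \(Z'\sim W_U\). Given \(Y_\gamma^T\) and \(U=t\), the information it carries is \(\E\,\KL{W_t}{\widehat W^{\mathrm{nc},\gamma}_{P,t}}\), exactly as in the last step of Theorem~\ref{thm:first}. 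Integrating over \(t\) gives \(h\,\cS_{P,P}(\gamma)\). Events with two or more extra points have probability \(O(h^2)\). Their information contribution is bounded via \(I\le\E[\#\text{points}]\cdot\max\) per-point information, which is at most \(\log|\X|\)-type bounded only if the state space of paths is controlled. We will therefore bound it by \(\log|\Z|\) per extra mark, since each mark contributes at most \(H(Z')\). This gives the right derivative. Continuity of \(\cS_{P,P}\) in \(\gamma\), together with the same estimate for \(h<0\), gives a two-sided \(C^1\) statement.

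We expect the main obstacle to be the rigorous justification of the innovation representation and the exchange of expectation with the compensator in \eqref{eq:ct-duncan}. This concerns measurability of the predictable posterior on path space and finiteness of \(\E\sum_i\log(W/\widehat W^{\mathrm c})\). Finiteness of mutual information follows from \(I\le\E[N]\log|\Z|=\gamma T\log|\Z|\). Continuity of \(\cS_{P,P}\) in \(\gamma\) needs a coupling argument via thinning; we would try to prove it by showing the \(O(h)\) remainder bounds above are uniform on compact \(\gamma\)-intervals.
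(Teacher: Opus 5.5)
Your proposal is correct. The causal half follows the paper's own argument: chronological likelihood factorization, cancellation of the common clock factor, then compensation. The smoothing derivative is reached by a different route. Note also that the paper does not argue the matched case directly; it specializes the fully mismatched Theorem~\ref{cm:thm:ct}.

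\emph{The paper's route to the smoothing derivative.} It realizes the record as $N\sim\Pois(\gamma T)$ i.i.d.\ uniformly timed pairs $A_i=(U_i,Z_i)$. Then $I_P(X^T;Y_\gamma^T)=\E[G_N]$, with $G_n=I(X^T;A^n)\le n\log|\Z|$ and exact increment $G_{n+1}-G_n=\frac1T\int_0^T\E\,\KL{W_t}{\E[W_t\mid A^n]}\,dt$. Lemma~\ref{lem:poisson-diff} then gives differentiability, the formula, and continuity of the derivative all at once.

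\emph{Your route.} Your superposition argument computes the difference quotient directly and avoids the lemma. Equating $I(X^T;Y^T_{\gamma+h})$ with $I(X^T;Y^T_\gamma,Y^{T,h})$ needs one remark: labeling each point of the merged record independently with probability $\gamma/(\gamma+h)$ recovers the pair in law, with labels independent of $X^T$. The cost is that continuity of $\cS_{P,P}$ becomes a separate step. That step needs no coupling, because your bounds do not depend on $\gamma$:
\begin{itemize}
\item the one-point term equals $he^{-hT}\cS_{P,P}(\gamma)$, with $\cS_{P,P}\le T\log|\Z|$;
\item multi-point events add at most $\log|\Z|\,hT(1-e^{-hT})$.
\end{itemize}
Hence $|I(\gamma+h)-I(\gamma)-h\cS_{P,P}(\gamma)|\le 2T^2\log|\Z|\,h^2$ for all $\gamma\ge0$ and $h>0$. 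Comparing the increments over $[\gamma,\gamma+\delta]$, $[\gamma+\delta,\gamma+2\delta]$ and $[\gamma,\gamma+2\delta]$ shows that $\cS_{P,P}$ is Lipschitz. This gives the two-sided $C^1$ claim and the left derivative.

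\emph{A reversed inequality in the causal half.} The correct direction is $\widehat W^{\mathrm c,\gamma}_{P,t}(z)\ge\pi_{t-}(X_{t-})W_{t-}(z)$. So the log score $\log(W/\widehat W^{\mathrm c})$ is bounded \emph{above}, not below, by $\log(1/\pi_{t-}(X_{t-}))$, and that is the bound you actually need:
\begin{itemize}
\item compensating it controls the positive part by $\gamma\int_0^T H_P(X_t\mid\F^{Y,\gamma}_{t-})\,dt\le\gamma T\log|\X|$;
\item $u\log(v/u)\le v/e$ caps the negative part at $\gamma T/e$.
\end{itemize}
This filtering-entropy bound only works for matched posteriors. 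The paper instead obtains integrability of the total score from $\E|L_n|\le G_n+2/e$, which is what lets one proof cover path-law and channel mismatch as well.
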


\begin{proof}
The underlying argument uses two descriptions of the same record: adding a uniformly timed observation gives the smoothing derivative, while factoring the record likelihood in chronological order and applying Poisson compensation gives the causal identity.
Take \(Q=P\) in Theorem~\ref{thm:ct-mismatch}. Its finite path-relative-entropy hypothesis then holds with value zero, and the output relative entropy vanishes. Equations~\eqref{eq:ct-duncan}--\eqref{eq:ct-average} follow, including the stated endpoint convention. The general proof in Appendix~\ref{app:ct-proof} permits zero channel entries and does not require finite Shannon entropy of the entire path.
\end{proof}

\noindent The first identity expresses the information in the complete record as the sampling rate times the causal prediction loss accumulated over \([0,T]\). The second identifies the marginal information gained by increasing the sampling rate with the integrated smoothing loss. Combining them shows that the filtering loss at rate \(\gamma\) equals the average smoothing loss over rates \(\alpha\in[0,\gamma]\), with the horizon \(T\), path law \(P\), and channel \(M\) held fixed. This average is over sampling rates, not physical time; each loss already includes integration over \([0,T]\).

\section{Continuous-time with mismatch}\label{sec:ct-mismatch}

We extend the matched filtering and smoothing identities of Section~\ref{sec:ct-identities} in the same order as the static results in Section~\ref{sec:static-mismatch}: first change the law assumed for the hidden input, then allow an incorrect channel. Throughout, the true path and observation experiment remain fixed as the estimator's model changes.

\subsection{Path-law mismatch with a fixed channel}\label{sec:ct-prior-mismatch}

Let \(Q\) be the path law used by the estimator. It may misdescribe the initial state distribution, the temporal evolution of the process, or both; the channel and independent sampling clock remain unchanged. For finite-valued identities, impose the sufficient condition
\[
D_0\triangleq D(P_{X^T}\|Q_{X^T})<\infty.
\]
This implies \(P\ll Q\), but is stronger than absolute continuity alone on path space.  It is automatic when \(Q=P\), permits zero channel entries, and does not assume finite Shannon entropy of the path. As in Theorem~\ref{thm:mismatch}, we combine the true mutual information with the divergence between the true and nominal observation laws. Here those laws describe the complete time-mark record, and we define
\begin{equation}
\cJ_{P,Q}(\gamma)
\triangleq
I_P(X^T;Y_\gamma^T)
+
\KL{P_{Y_\gamma^T}}{Q_{Y_\gamma^T}}.
\end{equation}

\begin{theorem}[Mismatched causal and noncausal bridge]\label{thm:ct-mismatch}
The function \(\cJ_{P,Q}\) is finite and continuously differentiable on \([0,\infty)\), with a right derivative at zero.  For \(\gamma\ge0\),
\begin{equation}\label{eq:ct-mismatch-master}
\boxed{
\cJ_{P,Q}(\gamma)
=
\gamma\cC_{P,Q}(\gamma)
=
\int_0^\gamma\cS_{P,Q}(\alpha)\,d\alpha.
}
\end{equation}
Equivalently,
\begin{equation}\label{eq:ct-mismatch-derivative}
\boxed{
\frac{d}{d\gamma}\cJ_{P,Q}(\gamma)
=
\cS_{P,Q}(\gamma),
}
\end{equation}
and
\begin{equation}\label{eq:ct-mismatch-average}
\boxed{
\cC_{P,Q}(\gamma)
=
\frac1\gamma\int_0^\gamma\cS_{P,Q}(\alpha)\,d\alpha,\text{ }\gamma>0.
}
\end{equation}
\end{theorem}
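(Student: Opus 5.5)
We will establish the two descriptions of \(\cJ_{P,Q}\) separately and then combine them. The starting point is the likelihood representation already used in Theorem~\ref{thm:mismatch}: writing \(L_P(Y)\) and \(L_Q(Y)\) for the densities of the complete time--mark record under the two path laws (relative to the reference law in which marks are drawn from a fixed full-support distribution on \(\Z\)), and \(L(Y| X^T)\) for the conditional likelihood,
\[
\cJ_{P,Q}(\gamma)=\E_P\log\frac{L(Y_\gamma^T| X^T)}{L_Q(Y_\gamma^T)}.
\]
Since \(\sum_z\lambda_z(t)=\gamma\), the clock factor \(e^{-\gamma T}\gamma^{n}\) cancels, and \(L(Y| X^T)=\prod_i W_{\tau_i}(z_i)\) over arrival times \(\tau_i\). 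Finiteness comes from the chain rule on \((X^T,Y)\): \(0\le\cJ_{P,Q}(\gamma)\le D(P_{X^T,Y}\|Q_{X^T,Y})=D_0\), because the conditional observation laws coincide.

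\emph{Causal identity.} Factor \(L_Q(Y)\) chronologically: by the innovation (filtering) representation, the nominal intensity of mark \(z\) at time \(t\) is \(\gamma\widehat W_{Q,t}^{\mathrm c,\gamma}(z)\), so \(\log L_Q=\sum_i\log\widehat W_{Q,\tau_i}^{\mathrm c,\gamma}(z_i)\) plus the common clock term. Then
\[
\cJ_{P,Q}(\gamma)=\E_P\sum_i\log\frac{W_{\tau_i}(z_i)}{\widehat W_{Q,\tau_i}^{\mathrm c,\gamma}(z_i)}.
\]
Under \(P\), conditional on \(X^T\), the marks form Poisson processes with intensities \(\gamma W_t(z)\); the integrand is \(\F^{Y}_{t-}\vee\sigma(X^T)\)-predictable, so compensation turns the sum into \(\gamma\,\E_P\int_0^T\sum_z W_t(z)\log(W_t(z)/\widehat W^{\mathrm c}_{Q,t}(z))\,dt=\gamma\cC_{P,Q}(\gamma)\). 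Integrability requires care, since log-ratios are unbounded. I will split the logarithm into nonnegative and negative parts, use support facts (\(\widehat W^{\mathrm c}_Q\) is positive wherever the true row is, \(P\)-a.s., because \(D_0<\infty\)), apply compensation to each part through monotone convergence with localization, and use the finite bound \(D_0\) to rule out \(\infty-\infty\).

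\emph{Noncausal identity.} To obtain the derivative in \(\gamma\), I view rate \(\gamma+h\) as the superposition of the rate-\(\gamma\) record with an independent rate-\(h\) record. This is the continuous-time analogue of Lemma~\ref{lem:poisson-diff}. With probability \(hT+o(h)\), exactly one extra point arrives; given that, its time \(U\) is uniform on \([0,T]\) and its mark \(Z'\sim W_U\). Conditioning on the extra-record count \(K\sim\Pois(hT)\) expresses \(\cJ(\gamma+h)\) as a Poisson average in \(hT\). The one-extra-point increment equals
\[
\E_P\log\frac{W_U(Z')}{\widehat W^{\mathrm{nc},\gamma}_{Q,U}(Z')}=\frac1T\cS_{P,Q}(\gamma),
\]
by the same factoring as in \eqref{eq:mismatch-increment}: the nominal predictive law of a mark at time \(U\), given the full rate-\(\gamma\) record, is the smoother. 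Hence the right derivative equals \(\cS_{P,Q}(\gamma)\). A symmetric thinning argument (retain each point with probability \(\gamma/(\gamma+h)\)) gives the left derivative. Continuity of \(\cS_{P,Q}\) in \(\gamma\) then gives \(C^1\), and integrating from \(\cJ(0)=0\) yields \eqref{eq:ct-mismatch-master}. The average formula \eqref{eq:ct-mismatch-average} follows by dividing by \(\gamma\).

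\emph{Main obstacle.} The hard part is the analytic justification of the noncausal step. The \(O(h^2)\) terms with two or more extra points must be bounded, and \(\cS_{P,Q}\) must be shown continuous in \(\gamma\), while channel entries may be zero and the path entropy may be infinite. I will first bound the \(k\)-extra-point increments linearly in \(k\), by an analogue of \eqref{cm:eq:Jn-bound} in which the nominal denominator is bounded below through the chain rule with \(D_0\) replacing \(-\log Q(X)\). Continuity will come from coupling all rates on one marked Poisson process and applying dominated convergence to the posterior predictors, which are martingale-type objects in the added points. The details will be deferred to Appendix~\ref{app:ct-proof}.
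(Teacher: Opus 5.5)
Your plan follows the same two descriptions of the record as the paper: adding one uniformly timed observation gives the smoothing derivative, and chronological factorization plus compensation gives the causal identity. However, the estimate on which every analytic step rests is false. The bound $0\le\cJ_{P,Q}(\gamma)\le D(P_{X^T,Y}\|Q_{X^T,Y})=D_0$ fails already for $Q=P$. Then $D_0=0$, while $\cJ_{P,P}(\gamma)=I_P(X^T;Y_\gamma^T)>0$, for instance for a constant path with nondegenerate initial law, distinct rows, and any $\gamma>0$. The chain rule only gives $D(P_{Y}\|Q_{Y})\le D_0$; it controls the output-divergence part of $\cJ_{P,Q}$, not the mutual-information part.

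For the same reason there is no analogue of \eqref{cm:eq:Jn-bound} ``with $D_0$ replacing $-\log Q(X)$''. In the static case $\E_P[-\log Q(X)]=H(P)+D(P\|Q)$, and it is the $H(P)$ piece that bounds the information. The lower bound on the nominal denominator also uses an atom of $Q$ at the true input. Path laws are typically atomless, and the path entropy may be infinite, which the theorem allows.

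The missing ingredient is an information bound that does not use path entropy. Represent the record by $N\sim\Pois(\gamma T)$ i.i.d.\ pairs $A_i=(U_i,Z_i)$, whose uniform times are independent of the path. Then $I(X^T;A^n)=I(X^T;Z^n| U^n)\le n\log|\Z|$. Hence $G_n\triangleq I(X^T;A^n)+D(P_{A^n}\|Q_{A^n})$ satisfies $0\le G_n\le D_0+n\log|\Z|$, and $\cJ_{P,Q}(\gamma)\le D_0+\gamma T\log|\Z|$. Without this linear growth you cannot bound the terms with two or more extra points in your superposition argument. Nor do you get integrability of the causal log score; the paper uses $G_n=D(P_{X^T,A^n}\|P_{X^T}\otimes Q_{A^n})$ to obtain $\E|L_n|\le G_n+2/e$.

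Two further points. In the causal step, integrability of the total score does not by itself exclude $\infty-\infty$ when you compensate the positive and negative parts of the event sum separately, since cancellation can occur across events. You need one part finite a priori. The paper gets this from $a\log(b/a)\le b/e$, which bounds the compensated negative part by $\gamma T/e$.

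In the noncausal step, your superposition/thinning argument is sound once the growth bound is corrected. It does, however, leave you to prove continuity of $\cS_{P,Q}$ separately in order to obtain the left derivative. The paper avoids this entirely by using the uniform-pair representation at every rate. Then $\cJ_{P,Q}(\gamma)=\E[G_N]$ with $G_n$ independent of $\gamma$. Lemma~\ref{lem:poisson-diff} then gives $\cJ_{P,Q}'(\gamma)=T\,\E[G_{N+1}-G_N]=\cS_{P,Q}(\gamma)$ together with its continuity on $[0,\infty)$ in one stroke.
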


\begin{proof}
Set \(\widetilde M=M\) in Theorem~\ref{cm:thm:ct}, proved in Appendix~\ref{app:ct-proof}. Then \(K_T=0\), so its hypotheses reduce to \(D_0<\infty\), and its predictors and risks reduce to \eqref{eq:ct-cpred}--\eqref{eq:ct-S}. With \(\mathcal J_T=\cJ_{P,Q}\), the identities and regularity assertions follow, including the right derivative at zero.
\end{proof}

Subtracting the matched formulas gives two complementary regret identities:
\begin{equation}\label{eq:ct-regret-s}
\boxed{
\KL{P_{Y_\gamma^T}}{Q_{Y_\gamma^T}}
=
\int_0^\gamma
\left[
\cS_{P,Q}(\alpha)-\cS_{P,P}(\alpha)
\right]d\alpha,
}
\end{equation}
and
\begin{equation}\label{eq:ct-regret-c}
\boxed{
\KL{P_{Y_\gamma^T}}{Q_{Y_\gamma^T}}
=
\gamma\left[
\cC_{P,Q}(\gamma)-\cC_{P,P}(\gamma)
\right].
}
\end{equation}
Thus the excess causal loss is the running average of the excess smoothing loss:
\begin{equation}\label{eq:ct-regret-average}
\boxed{
\cC_{P,Q}(\gamma)-\cC_{P,P}(\gamma)
=
\frac1\gamma\int_0^\gamma
\left[
\cS_{P,Q}(\alpha)-\cS_{P,P}(\alpha)
\right]d\alpha,\text{ }\gamma>0.
}
\end{equation}

These formulas measure the additional prediction cost of using \(Q\). In each difference, the nominal and matched predictors use the same observation history and are evaluated under \(P\); the conditional KL decomposition \eqref{eq:conditional-kl-decomposition} makes this excess cost nonnegative. Equation~\eqref{eq:ct-regret-s} accumulates excess smoothing loss over sampling rates, while \eqref{eq:ct-regret-c} expresses the same output divergence as the current excess causal loss multiplied by \(\gamma\). Their combination gives \eqref{eq:ct-regret-average}. The horizon \(T\) is fixed throughout: both losses already integrate over physical time.

\subsection{Simultaneous path-law and channel mismatch}\label{sec:ct-joint-mismatch}

We now allow the estimator to use a nominal channel \(\widetilde M\) under the nominal path law \(Q\). As in Section~\ref{sec:joint-mismatch}, the channel changes the nominal likelihood and the rows averaged by its predictors. The true marks still use \(W_{X_t}\), the nominal marks use \(\widetilde W_{X_t}\), and both retain the same independent rate-\(\gamma\) clock.

\noindent The true and nominal conditional intensity densities are
\begin{equation}
\lambda_z(t)=\gamma W_{X_t}(z),
\text{ }
\widetilde\lambda_z(t)=\gamma\widetilde W_{X_t}(z).
\end{equation}
Both total intensities equal \(\gamma\); thus the unmarked clock has the same law under every path and under both models.  We impose the sufficient conditions
\begin{equation}\label{cm:eq:ct-assumptions}
\begin{aligned}
P&\ll Q,\text{ } D_0\triangleq D(P\|Q)<\infty,\\
K_T&\triangleq\int_0^T\E_P\KL{W_{X_t}}{\widetilde W_{X_t}}\,dt<\infty.
\end{aligned}
\end{equation}
Here \(D_0\) compares the laws of the entire hidden path. The quantity \(K_T\) is the cumulative channel discrepancy along the true path: even if that path were known, using the nominal row at each time would incur this integrated KL loss. It is a cumulative quantity over \([0,T]\), not a time average.

\noindent The first absolute-continuity condition follows already from \(D_0<\infty\), but is displayed to emphasize the common path space.  The condition on \(K_T\) permits zero channel entries while ensuring the required same-label channel support for \(P\)-almost every path and almost every time.  These conditions imply absolute continuity of the true joint path--observation law with respect to its nominal counterpart, as Appendix~\ref{app:ct-proof} shows.  In particular, nominal conditional estimates can be evaluated on true observation records without ambiguity on sets of positive true probability.  No finite Shannon entropy assumption on the entire path is needed.

\noindent Write \(\F_t^{Y,\gamma}=\sigma(Y_\gamma^t)\) for the observation filtration and \(\F_{t-}^{Y,\gamma}\) for its strict past.  Define the nominal causal and smoothing predictors by
\begin{align}
\widehat{\widetilde W}_{Q,t}^{\mathrm c,\gamma}
&\triangleq
\E_{Q,\widetilde M}[\widetilde W_{X_{t-}}|\F_{t-}^{Y,\gamma}],\\
\widehat{\widetilde W}_{Q,t}^{\mathrm{nc},\gamma}
&\triangleq
\E_{Q,\widetilde M}[\widetilde W_{X_t}|\F_T^{Y,\gamma}].
\end{align}
Use a predictable version of the first predictor, with \(X_{0-}=X_0\), and a jointly measurable version of the second obtained from a regular conditional path law. The left-limit and strict-past conventions are those of Section~\ref{sec:ct-model}, following \eqref{eq:ct-cpred}--\eqref{eq:ct-ncpred}: they leave the integrated losses unchanged and ensure that an arriving mark is not used by its own predictor. The fully mismatched risks are
\begin{align}
\Cmis(\gamma)
&\triangleq
\int_0^T\E_{P,M}
\KL{W_{X_t}}{\widehat{\widetilde W}_{Q,t}^{\mathrm c,\gamma}}\,dt,
\label{cm:eq:Cmis}\\
\Smis(\gamma)
&\triangleq
\int_0^T\E_{P,M}
\KL{W_{X_t}}{\widehat{\widetilde W}_{Q,t}^{\mathrm{nc},\gamma}}\,dt.
\label{cm:eq:Smis}
\end{align}

\noindent The expectation under \((Q,\widetilde M)\) constructs each predictor from the nominal posterior and nominal rows. The expectation under \((P,M)\) evaluates its loss against the true row, so the estimation target remains \(W_{X_t}\).

\noindent With \(\widetilde M=M\), these predictors and risks reduce to \eqref{eq:ct-cpred}--\eqref{eq:ct-S}. The causal and noncausal bridge below retains the form of \eqref{eq:ct-mismatch-master}; only the nominal observation law and predictors change.

\begin{theorem}[Continuous-time fully mismatched identities]\label{cm:thm:ct}
Under \eqref{cm:eq:ct-assumptions}, let
\begin{equation}
\mathcal J_T(\gamma)
\triangleq
I_{P,M}(X^T;Y_\gamma^T)
+\KL{P_{Y_\gamma^T}^M}{Q_{Y_\gamma^T}^{\widetilde M}}.
\end{equation}
This function is finite and continuously differentiable on \([0,\infty)\), with a right derivative at zero.  The risks are finite and continuous, and for \(\gamma\ge0\),
\begin{align}
\boxed{
\mathcal J_T(\gamma)
=\gamma\Cmis(\gamma)
=\int_0^\gamma\Smis(\alpha)\,d\alpha
},\label{cm:eq:ct-causal}\\[1mm]
\boxed{\mathcal J_T'(\gamma)=\Smis(\gamma)}.
\label{cm:eq:ct-smooth}
\end{align}
In particular,
\begin{equation}\label{cm:eq:ct-average}
\boxed{
\Cmis(\gamma)
=\frac1\gamma\int_0^\gamma\Smis(\alpha)\,d\alpha,
\text{ }\gamma>0.
}
\end{equation}
At zero intensity, both risks equal
\begin{equation}\label{cm:eq:ct-zero}
\Cmis(0)=\Smis(0)
=\int_0^T\E_P\KL{W_{X_t}}{\E_Q\widetilde W_{X_t}}\,dt<\infty.
\end{equation}
\end{theorem}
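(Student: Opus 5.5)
Two representations of the same record give the two identities: adding one uniformly timed mark to the record gives the smoothing derivative, and factoring the record likelihood in chronological order gives the causal identity. In both, the combined quantity \(\mathcal J_T\) becomes a single log-likelihood ratio. Writing \(\Lambda\) for the conditional likelihood of the record given the path under the true channel and \(\widetilde\Lambda\) for the nominal marginal likelihood of the record, we have
\[
\mathcal J_T(\gamma)=\E_{P,M}\log\frac{dP^M_{Y|X^T}}{dQ^{\widetilde M}_{Y}}(Y_\gamma^T),
\]
exactly as in \eqref{eq:mismatch-fixed-count}. The clock has the same law under both models, so with \(N\) arrivals at times \(\tau_1<\dots<\tau_N\) and marks \(z_i\), the ratio is
\[
\frac{\prod_i W_{X_{\tau_i}}(z_i)}{\E_{Q}\big[\prod_i\widetilde W_{X_{\tau_i}}(z_i)\big]} .
\]
The first task is finiteness and a growth bound. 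Mirroring \eqref{cm:eq:Jn-bound}, I would bound the nominal denominator from below by inserting the true path via a change of measure \(dQ/dP\) and applying Jensen's inequality. This gives
\[
0\le \mathcal J_T(\gamma)\le D_0+\gamma K_T ,
\]
since conditional on the path the expected sum of \(\log W/\widetilde W\) over arrivals is \(\gamma K_T\). The same argument shows that the true joint law is absolutely continuous with respect to the nominal one.

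For the causal identity I would factor the nominal marginal likelihood chronologically. Its value at arrival \(i\) is the nominal predictive probability \(\widehat{\widetilde W}^{\mathrm c,\gamma}_{Q,\tau_i}(z_i)\). This uses the innovations theorem: the nominal \(\F^{Y}\)-intensity of mark \(z\) is \(\gamma\widehat{\widetilde W}^{\mathrm c}_{Q,t}(z)\), and the unmarked compensator terms cancel because the total intensity equals \(\gamma\) under both models. Hence
\[
\log\frac{\Lambda}{\widetilde\Lambda}=\sum_i\log\frac{W_{X_{\tau_i}}(z_i)}{\widehat{\widetilde W}^{\mathrm c}_{Q,\tau_i}(z_i)} .
\]
Under the true model, the marked point process has \((X^T,\F^Y)\)-predictable intensity \(\gamma W_{X_{t-}}(z)\). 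The integrand is predictable, so compensation gives
\[
\mathcal J_T=\gamma\int_0^T\E\sum_z W_{X_t}(z)\log\frac{W_{X_t}(z)}{\widehat{\widetilde W}^{\mathrm c}(z)}\,dt=\gamma\Cmis(\gamma).
\]
The growth bound above justifies this; for the possibly unbounded logarithms, I would truncate and pass to the limit by monotone convergence, splitting into positive and negative parts.

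For the smoothing derivative I would use Poisson differentiation in the form of Lemma~\ref{lem:poisson-diff}. Conditionally on the path, the record is a Poisson process of rate \(\gamma\) on \([0,T]\). Writing \(\mathcal J_T(\gamma)=\E[G_{N}]\) with \(N\sim\Pois(\gamma T)\), where \(G_n\) is the fixed-count quantity with \(n\) i.i.d.\ uniform times, the derivative is \(T\,\E[G_{N+1}-G_N]\). The one-extra-point increment, as in \eqref{cm:eq:Jn-increment}, is
\[
\E\,\KL{W_{X_U}}{\E_{Q,\widetilde M}[\widetilde W_{X_U}\mid \text{record}]} ,
\]
where \(U\) is an independent uniform time. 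Multiplying by \(T\) turns the average over \(U\) into \(\int_0^T dt\), which gives \(\Smis(\gamma)\). The bound \(G_n\le D_0+nK_T/T\) supplies the linear-growth hypothesis, so the derivative exists, is continuous, and is a right derivative at zero. Because \(\mathcal J_T(0)=0\), integrating yields the second equality in \eqref{cm:eq:ct-causal}, and dividing by \(\gamma\) gives \eqref{cm:eq:ct-average}. At \(\gamma=0\) both predictors reduce to \(\E_Q\widetilde W_{X_t}\), which gives \eqref{cm:eq:ct-zero}; this is finite because \(\mathcal J_T'(0)\le\) the linear bound.

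The main obstacle is the measure-theoretic justification. One part is the innovations/filtering step under \(Q\) evaluated on \(P\)-records, which needs predictable versions and the absolute continuity above. The other is integrability when channel entries are zero, since the logarithms are then unbounded. I would address both with the Jensen lower bound on the nominal denominator and truncation arguments.
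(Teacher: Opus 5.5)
Your overall route is the paper's: i.i.d.\ uniformly timed pairs with a $\Pois(\gamma T)$ count and Lemma~\ref{lem:poisson-diff} for the smoothing derivative, and chronological factorization with compensation for the causal identity. The gap is the growth bound that everything rests on, which is false.

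You claim $0\le\mathcal J_T(\gamma)\le D_0+\gamma K_T$ and $G_n\le D_0+nK_T/T$. Take $Q=P$ and $\widetilde M=M$. Then $D_0=K_T=0$, so your bounds force $I_P(X^T;Y_\gamma^T)=0$ and $\Smis(0)=TG_1=0$. But here $\Smis(0)=\int_0^T\E\KL{W_{X_t}}{\E W_{X_t}}\,dt$, which is generally positive.

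The static bound you are mirroring, \eqref{cm:eq:Jn-bound}, reads $\E_P[-\log Q(X)]+n\kappa=H(P)+D(P\|Q)+n\kappa$. Its $H(P)$ term pays for the mutual information, and it comes from isolating the atom $Q(X)$ of a finite mixture. A path law has no such atom, and the path entropy may be infinite.

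A change of measure followed by Jensen also cannot insert the \emph{true} path into the nominal denominator. It bounds $\log q(y)$ below by the $P$-average, over an \emph{independent} copy $X'$ of the path, of the nominal log likelihood, minus $D_0$. Compensation then gives only
\[
\mathcal J_T(\gamma)\le D_0+\gamma\int_0^T\E\KL{W_{X_t}}{\widetilde W_{X'_t}}\,dt .
\]
This is a cross-path divergence, not $\gamma K_T$. It is infinite whenever $W_{X_t}\not\ll\widetilde W_{X'_t}$ with positive probability, for example for the erasure channel with $\widetilde M=M$ and a nondegenerate path law. The theorem is meant to cover exactly that case.

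The repair is to bound the two parts of $G_n$ separately.
\begin{itemize}
\item The output divergence is at most the exact joint divergence $D(P^M_{X^T,A^n}\|Q^{\widetilde M}_{X^T,A^n})=D_0+nK_T/T$, by data processing.
\item The times $U^n$ are independent of the path and the marks are finite-valued, so $I(X^T;A^n)=I(X^T;Z^n| U^n)\le n\log|\Z|$.
\end{itemize}
Together these give $G_n\le D_0+n(\log|\Z|+K_T/T)$. This is still linear in $n$, so finiteness of $\mathcal J_T$, the Poisson differentiation, and the bound $\Smis(0)=TG_1\le TD_0+T\log|\Z|+K_T$ all go through.

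The causal step has a smaller gap of the same kind. Truncation and monotone convergence, applied separately to the positive and negative parts of the event scores, give two compensated expectations that may both be infinite. You need one of them to be finite before you can subtract. The paper uses $u\log(v/u)\le v/e$ to bound the expected sum of negative scores by $\gamma T/e$. Integrability of the total log score then makes the positive part integrable as well.
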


\begin{proof}
Appendix~\ref{app:ct-proof} gives the full argument. The joint-law identity \eqref{cm:eq:ct-joint-fixed} gives the growth estimate \eqref{eq:ct-proof-growth}; Poisson differentiation of the increment \eqref{eq:ct-proof-increment} then gives the smoothing derivative. Chronological likelihood factorization and the compensation formula \eqref{eq:ct-proof-compensation} give the causal identity. The endpoint argument following \eqref{eq:ct-proof-zero} establishes continuity at zero, including zero channel entries.
\end{proof}
Define the true predictors using the same information sets:
\[
\widehat W_{P,t}^{\mathrm c,\gamma}
=\E_{P,M}[W_{X_{t-}}|\F_{t-}^{Y,\gamma}],
\text{ }
\widehat W_{P,t}^{\mathrm{nc},\gamma}
=\E_{P,M}[W_{X_t}|\F_T^{Y,\gamma}].
\]
Let \(\cC_{P,P}(\gamma)\) and \(\cS_{P,P}(\gamma)\) denote the matched risks of \eqref{eq:ct-C}--\eqref{eq:ct-S}, obtained here by setting \((Q,\widetilde M)=(P,M)\). Subtract the matched identities \eqref{eq:ct-duncan}--\eqref{eq:ct-derivative} from \eqref{cm:eq:ct-causal}--\eqref{cm:eq:ct-smooth}, and apply the conditional KL decomposition \eqref{eq:conditional-kl-decomposition} at each time. This gives
\begin{align}
\KL{P_{Y_\gamma^T}^M}{Q_{Y_\gamma^T}^{\widetilde M}}
&=\gamma\big[\Cmis(\gamma)-\Cmat(\gamma)\big],
\label{cm:eq:ct-output-causal}\\
\frac{d}{d\gamma}
\KL{P_{Y_\gamma^T}^M}{Q_{Y_\gamma^T}^{\widetilde M}}
&=\Smis(\gamma)-\Smat(\gamma)
=\int_0^T\E_{P,M}
\KL{\widehat W_{P,t}^{\mathrm{nc},\gamma}}
{\widehat{\widetilde W}_{Q,t}^{\mathrm{nc},\gamma}}\,dt.
\label{cm:eq:ct-output-smooth}
\end{align}
All quantities are finite, so these subtractions are legitimate; the derivative at zero is again from the right. Subtracting the averaging formulas \eqref{cm:eq:ct-average} and \eqref{eq:ct-average} also gives the same relation between excess causal and smoothing losses as \eqref{eq:ct-regret-average}.

\noindent As a consequence of the same joint-law calculation, the prediction discrepancy can be separated from posterior path-law discrepancy. The path-space counterpart of \eqref{cm:eq:conservation} is
\begin{equation}\label{cm:eq:ct-conservation}
\boxed{
\begin{aligned}
D_0+\gamma K_T
&=\KL{P_{Y_\gamma^T}^M}{Q_{Y_\gamma^T}^{\widetilde M}}\\
&\text{ }+\E_{P_{Y_\gamma^T}^M}
\KL{P^M_{X^T| Y_\gamma^T}}
{Q^{\widetilde M}_{X^T| Y_\gamma^T}}.
\end{aligned}
}
\end{equation}
\noindent Apply the KL chain rule in the opposite order to \eqref{cm:eq:ct-joint-divergence}. This gives \eqref{cm:eq:ct-conservation}, with finite expected posterior divergence.  At zero intensity that posterior divergence is \(D_0\).  Thus mismatch in the process law contributes an initial path-space discrepancy, while channel mismatch contributes \(\gamma K_T\) along the path.

\noindent The output term concerns the law of the observed time-mark record; the posterior term compares two conditional distributions over entire hidden trajectories given that same record. As in the label-swapping Example~\ref{cm:ex:label-swap}, agreement about observable predictions need not imply agreement about the hidden labels. Equation~\eqref{cm:eq:ct-conservation} expresses this distinction on path space, under the same-label support conditions \eqref{cm:eq:ct-assumptions}.

\section{Overview of results}\label{sec:dictionary}

The tables collect the main identities; the cited results give their support and finiteness conditions. All derivatives are with respect to \(\gamma\), with right derivatives at zero. In the static experiment, \(I_P(\gamma)=I_P(X;Y_\gamma)\). Prior-only mismatch follows by setting \(\widetilde M=M\).

\begin{center}
\renewcommand{\arraystretch}{1.3}
\small
\begin{tabular}{>{\raggedright\arraybackslash}p{0.34\textwidth} >{\raggedright\arraybackslash}p{0.56\textwidth}}
\toprule
\multicolumn{2}{l}{\textbf{Static prediction}}\\
\midrule
\textbf{Information quantity} & \textbf{Prediction interpretation}\\
\midrule
\(\displaystyle I_P'(\gamma)\)
& Matched predictive KL risk \(\cR_{P,P}(\gamma)\), \eqref{eq:first}.\\
\(\displaystyle -I_P''(\gamma)\)
& Dependence between two fresh outputs, \(I_P(Z_1';Z_2'|Y_\gamma)\), \eqref{eq:second}; equivalently, predictive-update KL, \eqref{eq:predictive-update}.\\
\(\displaystyle I_P^{(k)}(\gamma),\ k\ge2\)
& Signed conditional \(k\)-way co-information of fresh outputs, \eqref{eq:higher}.\\
\(\displaystyle I_{P,M}'(\gamma)+\Dout'(\gamma)\)
& Total mismatched risk \(\Rmis(\gamma)\), \eqref{cm:eq:main-identity}. The excess risk is \(\Dout'=\Rmis-\Rmat\), equal to the expected KL divergence between the true and nominal predictors, \eqref{cm:eq:Dout-predictive}.\\
\(\displaystyle D(P\|Q)+\gamma\kappa\)
& \(\Dout+\Dpost\): output discrepancy plus label-posterior discrepancy, under same-label support, \eqref{cm:eq:conservation}.\\
\bottomrule
\end{tabular}
\end{center}

For static capacity, each candidate prior uses its own matched predictor. Write \(C(M)\) for one-use DMC capacity and \(r\) for the number of distinct channel rows.

\begin{center}
\renewcommand{\arraystretch}{1.3}
\small
\begin{tabular}{>{\raggedright\arraybackslash}p{0.34\textwidth} >{\raggedright\arraybackslash}p{0.56\textwidth}}
\toprule
\multicolumn{2}{l}{\textbf{Static capacity}}\\
\midrule
\textbf{Information quantity} & \textbf{Prediction interpretation}\\
\midrule
\(\displaystyle \mathsf C_M(\gamma)\)
& \(\max_P\int_0^\gamma\cR_{P,P}(s)\,ds\): largest accumulated matched risk, \eqref{eq:cap-envelope}. Its large-budget limit is \(\log r\), \eqref{eq:cap-infinity}.\\
\(\displaystyle \mathsf C_M'(\gamma)\)
& With a unique optimizer at \(\gamma>0\), its matched risk at that budget, \eqref{eq:cap-derivative}. At zero, \(\mathsf C_M'(0+)=C(M)\), without uniqueness, \eqref{eq:cap-zero-slope}.\\
\(\displaystyle
\lim_{\gamma\downarrow0}
\frac{2[\gamma C(M)-\mathsf C_M(\gamma)]}{\gamma^2}
\)
& Minimum two-output redundancy \(B_*\) among one-use capacity-achieving priors; a convex secondary optimization, \eqref{eq:cap-second-order}--\eqref{eq:cap-convex-selection}.\\
\(\displaystyle \mathsf C_{m,k}(\gamma)-I_P(\gamma)\)
& Coupon-collector deficit \(\sum_{r=1}^k\Pr(R=r)D(\mu_{P,r}\|u_r)\), with \(R\) the number of distinct observed symbols, \eqref{eq:coupon-deficit}. Balance through order \(t\) preserves the first \(t\) capacity terms, Corollary~\ref{cor:coupon-order}.\\
\bottomrule
\end{tabular}
\end{center}

For processes, \(\gamma\) is the sampling rate and \(T\) is fixed. Both filtering and smoothing losses include integration over physical time. The quantity \(\mathcal J_T\) is mutual information plus output relative entropy under path-law and channel mismatch; its common-channel specialization is \(\cJ_{P,Q}\).

\begin{center}
\renewcommand{\arraystretch}{1.3}
\small
\begin{tabular}{>{\raggedright\arraybackslash}p{0.34\textwidth} >{\raggedright\arraybackslash}p{0.56\textwidth}}
\toprule
\multicolumn{2}{l}{\textbf{Continuous-time prediction}}\\
\midrule
\textbf{Information quantity} & \textbf{Prediction interpretation}\\
\midrule
\(\displaystyle I_P(X^T;Y_\gamma^T)\)\newline
and its rate derivative
& \(\gamma\cC_{P,P}(\gamma)\) and \(\cS_{P,P}(\gamma)\), respectively: cumulative causal loss times the rate, and cumulative smoothing loss, \eqref{eq:ct-duncan}--\eqref{eq:ct-derivative}.\\
\(\displaystyle \mathcal J_T(\gamma)\)
& \(\gamma\Cmis(\gamma)=\int_0^\gamma\Smis(\alpha)\,d\alpha\): causal/noncausal averaging under mismatch, \eqref{cm:eq:ct-causal}. For a common channel, this is \eqref{eq:ct-mismatch-master}.\\
\bottomrule
\end{tabular}
\end{center}

\section{Examples}\label{sec:examples}
The preceding sections developed the static and continuous-time
identities and applied the capacity analysis to balanced input
designs (Section~\ref{sec:coupon-design}). We now work out further
examples of prediction, mismatch, and Poisson sampling. The BSC illustrates gradual learning in the static model; the erasure channel makes the static matched and mismatched identities explicit and turns the dynamic KL loss into conditional entropy, and a binary symmetric Markov input yields closed-form filtering, smoothing, and information-rate expressions. A numerical illustration uses finite image catalogues to connect information with reconstruction and observation prediction under channel mismatch.

\subsection{Binary input through a BSC}\label{sec:bsc-example}

For a binary symmetric channel, symmetry makes the uniform prior capacity achieving for every \(\gamma\).  Hence in this section \(I(\gamma)=\mathsf C_M(\gamma)\).  Consider \(X\sim\Ber(1/2)\) and a binary symmetric channel with crossover \(\varepsilon\in(0,1/2)\):
\begin{equation}
W_0=(1-\varepsilon,\varepsilon),
\text{ }
W_1=(\varepsilon,1-\varepsilon),
\end{equation}
where coordinates correspond to outputs \(0,1\).  Let \(C_1\) and \(C_0\) be the Poissonized counts of ones and zeros, and define
\begin{equation}
D=C_1-C_0,
\text{ }
a=\log\frac{1-\varepsilon}{\varepsilon}.
\end{equation}
Conditional on \(X=1\),
\begin{equation}
C_1\sim\Pois(\gamma(1-\varepsilon)),
\text{ }
C_0\sim\Pois(\gamma\varepsilon),
\end{equation}
independently, while the two Poisson means are interchanged under \(X=0\). Hence \(D\) has a Skellam distribution conditional on \(X\). Each observed one adds \(a\) to the log likelihood ratio favoring \(X=1\), and each zero subtracts \(a\); thus \(D\) measures the net evidence for the hidden bit. With the uniform prior, the posterior log odds are \(aD\), and Bayes' rule gives
\begin{equation}\label{eq:bsc-posterior}
\pi(D)
\triangleq
P(X=1| Y_\gamma)
=
\frac{1}{1+e^{-aD}}.
\end{equation}
The predictive probability of observing a one next is
\begin{equation}\label{eq:bsc-q}
q(D)
=
P(Z_{\rm new}=1| Y_\gamma)
=
\varepsilon+(1-2\varepsilon)\pi(D).
\end{equation}
All expectations below average over the true observation law with the uniform input prior. Equivalently, \(D\) is averaged over the equally weighted mixture of its two conditional Skellam laws. Thus
\begin{equation}\label{eq:bsc-I}
I(\gamma)
=
\log2-\E[h_2(\pi(D))],
\end{equation}
where \(h_2(u)=-u\log u-(1-u)\log(1-u)\), and
\begin{equation}\label{eq:bsc-risk}
\begin{aligned}
I'(\gamma)
&=\E\Big[
\pi(D)D(\Ber(1-\varepsilon)\|\Ber(q(D)))\\
&\hspace{3.5em}+(1-\pi(D))D(\Ber(\varepsilon)\|\Ber(q(D)))
\Big].
\end{aligned}
\end{equation}

For the curvature, conditioned on the current posterior \(\pi\), the joint law \(r_\pi\) of two fresh output bits is
\begin{align}
r_\pi(1,1)&=\pi(1-\varepsilon)^2+(1-\pi)\varepsilon^2,\\
r_\pi(1,0)&=r_\pi(0,1)=\varepsilon(1-\varepsilon),\\
r_\pi(0,0)&=\pi\varepsilon^2+(1-\pi)(1-\varepsilon)^2.
\end{align}
With \(q=q(D)\), Theorem~\ref{thm:second} becomes
\begin{equation}\label{eq:bsc-curvature}
-I''(\gamma)
=
\E\left[
D\big(r_{\pi(D)}\|\Ber(q(D))^{\otimes2}\big)
\right].
\end{equation}

Table~\ref{tab:bsc} gives representative values for \(\varepsilon=0.15\), computed by summing the corresponding Skellam probabilities.  The identities show up numerically as the increasing, concave behavior of \(I(\gamma)\), with \(I'(\gamma)\downarrow0\) and \(-I''(\gamma)\to0\) as the hidden bit becomes essentially known.

\begin{table}[t]
\centering
\caption{Capacity of the Poissonized BSC with crossover \(\varepsilon=0.15\), together with its first two derivatives.  Values are in nats.}
\label{tab:bsc}
\begin{tabular}{rrrr}
\toprule
\(\gamma\) & \(I(\gamma)\) & \(I'(\gamma)=\cR_{P,P}(\gamma)\) & \(-I''(\gamma)\)\\
\midrule
0    & 0.00000 & 0.27044 & 0.12539\\
0.5  & 0.12097 & 0.21604 & 0.09418\\
1    & 0.21820 & 0.17469 & 0.07248\\
2    & 0.36188 & 0.11718 & 0.04524\\
4    & 0.52683 & 0.05605 & 0.01991\\
8    & 0.64834 & 0.01438 & 0.00473\\
\bottomrule
\end{tabular}
\end{table}

\subsection{The q-ary erasure channel}\label{sec:erasure}

The erasure channel makes the preceding identities fully explicit and, in the continuous-time model, reduces the predictive KL loss to ordinary conditional entropy.  Let $|\X|=q$ and let the output alphabet be $\Z=\X\cup\{\mathtt e\}$.  For erasure probability $\varepsilon\in[0,1)$, write
\begin{equation}
 a\triangleq 1-\varepsilon,
\end{equation}
and define
\begin{equation}\label{eq:erasure-row}
M(z| x)=
\begin{cases}
a, & z=x,\\
\varepsilon, & z=\mathtt e,\\
0, & \text{otherwise}.
\end{cases}
\end{equation}

\subsubsection{Static model: effective erasure probability}

By Poisson thinning, the number of non-erased observations is Poisson with mean $a\gamma$. Every non-erased observation equals $X$. Consequently the entire repeated-observation experiment is statistically equivalent to a single $q$-ary erasure channel whose effective erasure probability is
\begin{equation}\label{eq:effective-erasure}
\boxed{
\delta_\gamma=e^{-a\gamma}.
}
\end{equation}
If all observations are erased, an event of probability $e^{-a\gamma}$, the posterior remains equal to the prior; otherwise the first non-erased symbol reveals $X$ exactly.

For an arbitrary input prior $P$,
\begin{equation}\label{eq:erasure-mutual-info}
\boxed{
I_P(X;Y_\gamma)=\left(1-e^{-a\gamma}\right)H_P(X).
}
\end{equation}
Thus a uniform prior achieves capacity, and
\begin{equation}\label{eq:erasure-capacity}
\boxed{
\mathsf C_M(\gamma)=\left(1-e^{-a\gamma}\right)\log q.
}
\end{equation}
In particular,
\begin{equation}
\mathsf C_M'(0+)=a\log q=C(M),
\text{ }
\lim_{\gamma\to\infty}\mathsf C_M(\gamma)=\log q,
\end{equation}
which realizes both endpoint statements of Section~\ref{sec:capacity} with equality.  More generally,
\begin{align}
\mathsf C_M'(\gamma)&=a e^{-a\gamma}\log q,\\
\mathsf C_M''(\gamma)&=-a^2e^{-a\gamma}\log q.
\end{align}
\noindent The information--estimation identity is equally explicit.  Prior to the first non-erased observation the posterior remains $P$, while afterward $X$ is known.  If no informative observation has arrived, the predictive row is
\begin{equation}
\widehat W_P
=\varepsilon\delta_{\mathtt e}
+a\sum_x P(x)\delta_x,
\end{equation}
and for $X=x$,
\begin{equation}
D(W_x\|\widehat W_P)=a\log\frac1{P(x)}.
\end{equation}
Hence the matched predictive risk is
\begin{equation}\label{eq:erasure-risk}
\boxed{
\cR_{P,P}(\gamma)
=a e^{-a\gamma}H_P(X)
=a H_P(X| Y_\gamma).
}
\end{equation}
Equation~\eqref{eq:erasure-mutual-info} therefore gives directly
\begin{equation}
\frac{d}{d\gamma}I_P(X;Y_\gamma)=\cR_{P,P}(\gamma).
\end{equation}
All higher derivatives are explicit:
\begin{equation}\label{eq:erasure-higher}
\boxed{
\frac{d^k}{d\gamma^k}I_P(X;Y_\gamma)
=(-1)^{k-1}a^k e^{-a\gamma}H_P(X),
\text{ } k\ge1.
}
\end{equation}
In particular,
\begin{equation}\label{eq:erasure-second}
-I_P''(\gamma)=a^2e^{-a\gamma}H_P(X)
=I_P(Z_1';Z_2'| Y_\gamma).
\end{equation}
The last equality has a direct operational explanation: residual dependence between two fresh outputs remains only if $X$ has not yet been revealed, and both fresh outputs must then avoid erasure.
For the causal interpretation of Figures~\ref{fig:erasure-losses} and \ref{fig:erasure-average}, embed the static experiment in the continuous-time model by taking \(X_t=X\) on \([0,1]\).  The integrated losses are then
\begin{equation*}
\cC_{P,P}(\gamma)=\frac{1-e^{-a\gamma}}{\gamma}H_P(X),
\text{ }
\cS_{P,P}(\gamma)=a e^{-a\gamma}H_P(X),
\text{ } \gamma>0.
\end{equation*}
For \(H_P(X)>0\), writing \(\eta=a\gamma\) gives
\begin{equation*}
\frac{\cC_{P,P}(\eta/a)}{aH_P(X)}=\frac{1-e^{-\eta}}{\eta},
\text{ }
\frac{\cS_{P,P}(\eta/a)}{aH_P(X)}=e^{-\eta}.
\end{equation*}
Both normalized curves have continuous value \(1\) at \(\eta=0\).  If \(H_P(X)=0\), the unnormalized losses vanish and these ratios are not defined.  The figures visualize the distinction between endpoint and running-average losses and the exact averaging relation.

\begin{figure}[t]
\centering
\includegraphics[width=0.70\linewidth]{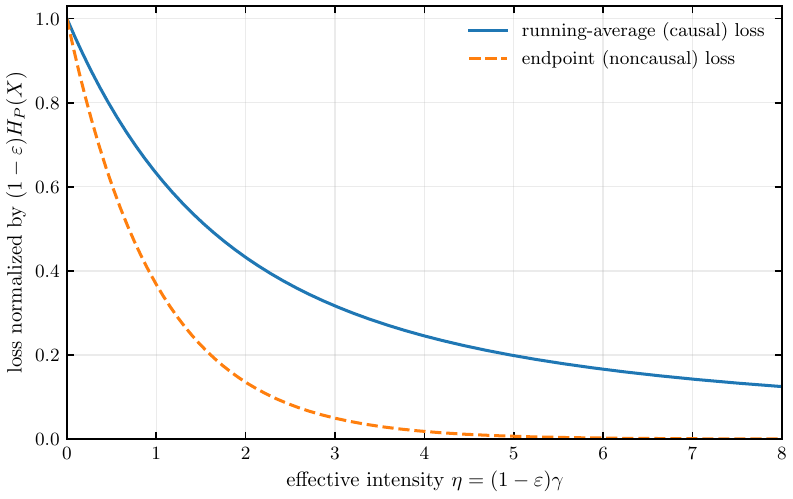}
\caption{Static erasure channel embedded as a constant input path on $[0,1]$, in the effective-intensity variable $\eta=(1-\varepsilon)\gamma$.  The endpoint predictive loss, equal to the integrated smoothing loss in this embedding, decays exponentially, while the integrated causal loss decreases more slowly.  Both curves are normalized by $(1-\varepsilon)H_P(X)$, with $H_P(X)>0$.}
\label{fig:erasure-losses}
\end{figure}

\begin{figure}[t]
\centering
\includegraphics[width=0.70\linewidth]{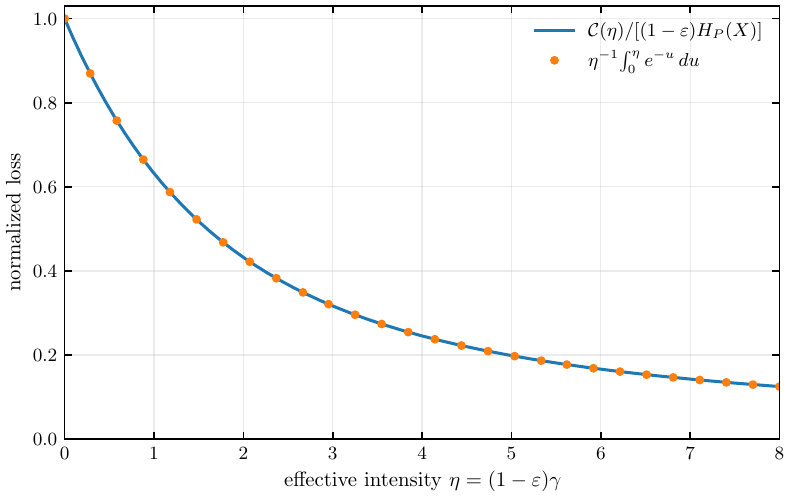}
\caption{Constant-path erasure model on $[0,1]$: the running-average identity.  With $a=1-\varepsilon$ and $H_P(X)>0$, the solid curve is $\cC_{P,P}(\eta/a)/(aH_P(X))$; the markers evaluate $\eta^{-1}\int_0^\eta e^{-u}\,du$.  In the legend, $\cC(\eta)$ abbreviates $\cC_{P,P}(\eta/a)$.  The two coincide exactly, with continuous value $1$ at zero.}
\label{fig:erasure-average}
\end{figure}

\subsubsection{Mismatch and relative entropy}

Returning to the static experiment with mean observation count $\gamma$, we first allow an incorrect prior while keeping the erasure probability fixed. Let $P$ be the true prior and $Q$ the prior used by the estimator, with $P\ll Q$.  Before the first informative observation the mismatched predictive row is
\begin{equation}
\widehat W_Q
=\varepsilon\delta_{\mathtt e}
+a\sum_xQ(x)\delta_x,
\end{equation}
and therefore
\begin{equation}
D(W_x\|\widehat W_Q)=a\log\frac1{Q(x)}.
\end{equation}
After a non-erased observation, both matched and mismatched posteriors collapse to the revealed symbol.  It follows that
\begin{equation}\label{eq:erasure-mismatch-risk}
\boxed{
\cR_{P,Q}(\gamma)
=a e^{-a\gamma}\big[H(P)+D(P\|Q)\big].
}
\end{equation}
Consequently the excess risk is
\begin{equation}\label{eq:erasure-excess}
\boxed{
\cR_{P,Q}(\gamma)-\cR_{P,P}(\gamma)
=a e^{-a\gamma}D(P\|Q).
}
\end{equation}
The output relative entropy is also elementary:
\begin{equation}\label{eq:erasure-output-kl}
\boxed{
D(P_{Y_\gamma}\|Q_{Y_\gamma})
=\left(1-e^{-a\gamma}\right)D(P\|Q).
}
\end{equation}
Thus
\begin{equation}
\frac{d}{d\gamma}D(P_{Y_\gamma}\|Q_{Y_\gamma})
=a e^{-a\gamma}D(P\|Q),
\end{equation}
exactly matching \eqref{eq:erasure-excess}.  The posterior mismatch that remains hidden after observing $Y_\gamma$ is
\begin{equation}\label{eq:erasure-posterior-kl}
\boxed{
\E_P D(P_{X| Y_\gamma}\|Q_{X| Y_\gamma})
=e^{-a\gamma}D(P\|Q),
}
\end{equation}
so the KL conservation identity simply partitions the initial prior mismatch according to whether or not an informative observation has occurred.

\subsubsection{Wrong prior and wrong erasure probability}

The erasure channel makes simultaneous prior and channel mismatch completely explicit. Let $\X=\{1,\ldots,q\}$ and let the true channel have erasure probability $\varepsilon\in(0,1)$,
\begin{equation}
W_x=\varepsilon\,\delta_{\mathtt e}+(1-\varepsilon)\,\delta_x.
\end{equation}
The nominal channel uses erasure probability $\widetilde\varepsilon\in(0,1)$,
\begin{equation}
\widetilde W_x=\widetilde\varepsilon\,\delta_{\mathtt e}+(1-\widetilde\varepsilon)\,\delta_x,
\end{equation}
and the nominal prior is $Q$, with $P\ll Q$. Write
\begin{equation}
a=1-\varepsilon,
\text{ }
\widetilde a=1-\widetilde\varepsilon,
\end{equation}
and define the Bernoulli channel discrepancy
\begin{equation}\label{cm:eq:erasure-d}
d_{\varepsilon,\widetilde\varepsilon}
\triangleq
\varepsilon\log\frac{\varepsilon}{\widetilde\varepsilon}
+a\log\frac{a}{\widetilde a}.
\end{equation}

The effective-erasure reduction above describes learning about $X$ with a fixed channel. When the erasure probability is misspecified, the full observation record also carries evidence about the channel through its erased and non-erased counts, even after $X$ is known. This discrepancy accounts for the term $\gamma d_{\varepsilon,\widetilde\varepsilon}$ in the output divergence below.

\noindent A non-erased true observation reveals $X$ exactly, and the number of such observations is Poisson with mean $a\gamma$. Before the first reveal, the true and nominal posteriors remain $P$ and $Q$; after a reveal, both true and nominal posteriors collapse to the revealed label. The only permanent mismatch is therefore the wrong erasure probability.

\begin{proposition}[Fully explicit erasure formulas]\label{cm:prop:erasure}
For the model above,
\begin{align}
I_{P,M}(X;Y_\gamma)
&=(1-e^{-a\gamma})H(P),\label{cm:eq:er-I}\\[2mm]
\Dout(\gamma)
&=
\gamma d_{\varepsilon,\widetilde\varepsilon}
+
(1-e^{-a\gamma})D(P\|Q),\label{cm:eq:er-Dout}\\[2mm]
\Rmis(\gamma)
&=
 d_{\varepsilon,\widetilde\varepsilon}
+
 a e^{-a\gamma}\big[H(P)+D(P\|Q)\big],\label{cm:eq:er-Rmis}\\[2mm]
\Rmat(\gamma)
&=a e^{-a\gamma}H(P),\label{cm:eq:er-Rmat}\\[2mm]
\Dpost(\gamma)
&=e^{-a\gamma}D(P\|Q).\label{cm:eq:er-Dpost}
\end{align}
\end{proposition}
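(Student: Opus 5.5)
The plan is to compute two of the five formulas directly and derive the other three from identities already established. The mutual information \eqref{cm:eq:er-I} and the matched risk \eqref{cm:eq:er-Rmat} involve only the true model, so they are exactly \eqref{eq:erasure-mutual-info} and \eqref{eq:erasure-risk}. The same-label discrepancy is $\kappa=\sum_xP(x)D(W_x\|\widetilde W_x)=d_{\varepsilon,\widetilde\varepsilon}$, because every row pair $(W_x,\widetilde W_x)$ is, up to relabeling of the non-erased symbol, the Bernoulli pair $(\varepsilon,\widetilde\varepsilon)$. Since $\varepsilon,\widetilde\varepsilon\in(0,1)$ and $P\ll Q$, the support condition \eqref{cm:eq:standing-support} holds. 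Hence Theorem~\ref{cm:thm:main}, Corollary~\ref{cm:cor:Dout} and Theorem~\ref{cm:thm:conservation} all apply.

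I would compute $\Dpost$ directly, using the structure described just before the proposition. Let $R$ be the event that some true observation is non-erased; it has probability $1-e^{-a\gamma}$. On $R$, the first non-erased symbol equals $X$. Both the true posterior and the nominal posterior then collapse to this label, because the nominal rows also put zero mass off the diagonal and the erasure symbol. On the complement of $R$, the record consists only of erasures. Its likelihood is $e^{-\gamma}\gamma^n\varepsilon^n/n!$ under every input for the true model, and $e^{-\gamma}\gamma^n\widetilde\varepsilon^n/n!$ under every input for the nominal model. So the posteriors are $P$ and $Q$, and $\Dpost(\gamma)=e^{-a\gamma}D(P\|Q)$, which is \eqref{cm:eq:er-Dpost}.

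The conservation law \eqref{cm:eq:conservation} then gives
\[
\Dout(\gamma)=D(P\|Q)+\gamma d_{\varepsilon,\widetilde\varepsilon}-e^{-a\gamma}D(P\|Q),
\]
which is \eqref{cm:eq:er-Dout}. Differentiating, Corollary~\ref{cm:cor:Dout} gives $\Rmis=\Rmat+\Dout'$. Substituting yields
\[
\Rmis(\gamma)=ae^{-a\gamma}H(P)+d_{\varepsilon,\widetilde\varepsilon}+ae^{-a\gamma}D(P\|Q),
\]
which is \eqref{cm:eq:er-Rmis}.

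The main obstacle is making sure the conditioning argument for $\Dpost$ is airtight. The point is that, on true histories, the nominal likelihood depends on $x$ only through consistency with the revealed symbols. As a sanity check I would verify \eqref{cm:eq:er-Rmis} directly. Before any reveal, the nominal predictor is $\widetilde\varepsilon\delta_{\mathtt e}+\widetilde a\sum_xQ(x)\delta_x$, giving $D(W_x\|\cdot)=\varepsilon\log(\varepsilon/\widetilde\varepsilon)+a\log(a/(\widetilde aQ(x)))$. After a reveal, the nominal predictor is $\widetilde W_x$, giving loss $d_{\varepsilon,\widetilde\varepsilon}$. Averaging these with weights $e^{-a\gamma}$ and $1-e^{-a\gamma}$ reproduces \eqref{cm:eq:er-Rmis}.
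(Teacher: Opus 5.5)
Your proposal is correct and follows essentially the same route as the paper. In both, the true and nominal posteriors stay at $P$ and $Q$ until the first non-erased mark, which never occurs with probability $e^{-a\gamma}$, and then collapse to the same point mass; this gives $I_{P,M}$ and $\Dpost$, and $\kappa=d_{\varepsilon,\widetilde\varepsilon}$ with Theorem~\ref{cm:thm:conservation} then gives $\Dout$. The only difference is that you obtain $\Rmis$ from $\Dout'=\Rmis-\Rmat$ (Corollary~\ref{cm:cor:Dout}), whereas the paper derives it, and $\Rmat$, by the direct before/after-reveal averaging you use as your sanity check.
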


\begin{proof}
The probability of no non-erased mark is $e^{-a\gamma}$. On this event the true and nominal label posteriors remain $P$ and $Q$; otherwise both are the same point mass. This proves the information and posterior-divergence formulas. Before a reveal, the true-row loss against the nominal predictor is
\begin{equation}
D(W_x\|\overline{\widetilde W}_Q)
=d_{\varepsilon,\widetilde\varepsilon}+a\log\frac1{Q(x)};
\end{equation}
after a reveal it is $D(W_x\|\widetilde W_x)=d_{\varepsilon,\widetilde\varepsilon}$. Averaging gives \eqref{cm:eq:er-Rmis}; the same calculation with the matched predictor gives \eqref{cm:eq:er-Rmat}. Finally, $\kappa=d_{\varepsilon,\widetilde\varepsilon}$ and Theorem~\ref{cm:thm:conservation} yield \eqref{cm:eq:er-Dout}.
\end{proof}

\noindent The additive separation here comes from the special structure of the erasure channel: erasure probabilities do not depend on the input, and an informative symbol reveals the same label under both models. Section~\ref{sec:posterior-mismatch} explains why prior and channel errors can interact in general. In this example, prior mismatch contributes only until the first informative observation, while channel mismatch contributes the permanent floor $d_{\varepsilon,\widetilde\varepsilon}$. In particular,
\begin{align}
\Dout'(\gamma)
&=
 d_{\varepsilon,\widetilde\varepsilon}
+
 a e^{-a\gamma}D(P\|Q),\\[1mm]
\lim_{\gamma\to\infty}\Rmis(\gamma)
&=
 d_{\varepsilon,\widetilde\varepsilon},
\end{align}
exactly as predicted by Proposition~\ref{cm:prop:high}.\\
The conservation law reads
\begin{equation}
D(P\|Q)+\gamma d_{\varepsilon,\widetilde\varepsilon}
=
\left[\gamma d_{\varepsilon,\widetilde\varepsilon}+(1-e^{-a\gamma})D(P\|Q)\right]
+
e^{-a\gamma}D(P\|Q),
\end{equation}
which makes its interpretation transparent.

\subsubsection{Continuous-time process: random noiseless sampling}

We now return to matched estimation with the true erasure channel \eqref{eq:erasure-row} and allow the hidden input to evolve in physical time, as in Section~\ref{sec:ct-model}. Let $X_t\in\X$ be a stochastic process. Informative, non-erased events form a thinned Poisson process of rate $a\gamma$, independent of the input path, and at each such event time the mark reveals the current state $X_t$ exactly. Erasure events form an independent Poisson process of rate $\varepsilon\gamma$ and carry no information. Thus the model is exactly \emph{random noiseless sampling} of $X_t$ at rate $a\gamma$. Each informative event reveals the state at its own sampling time; subsequent evolution can restore uncertainty before the next reveal. \\
More generally, let $\mathcal G$ be any observation sigma-field and put
\begin{equation}
\pi_t(x)=P(X_t=x|\mathcal G).
\end{equation}
The posterior predictive row is
\begin{equation}
\E[W_t|\mathcal G]
=\varepsilon\delta_{\mathtt e}
+a\sum_x\pi_t(x)\delta_x,
\end{equation}
and hence
\begin{equation}\label{eq:erasure-entropy-loss}
\boxed{
\E D\!\left(W_t\middle\|\E[W_t|\mathcal G]\right)
=a H(X_t|\mathcal G).
}
\end{equation}
Accordingly, the matched causal and noncausal risks of Section~\ref{sec:ct-identities} reduce to integrated filtering and smoothing entropies.  With $Y_\gamma^{t-}$ denoting the observation history strictly before $t$,
\begin{align}
\cC_{P,P}(\gamma)
&=a\int_0^T H_P(X_t| Y_\gamma^{t-})\,dt,\label{eq:erasure-causal}\\
\cS_{P,P}(\gamma)
&=a\int_0^T H_P(X_t| Y_\gamma^T)\,dt.\label{eq:erasure-noncausal}
\end{align}
The general causal and noncausal identities therefore become
\begin{align}
I_P(X^T;Y_\gamma^T)
&=\gamma a\int_0^T H_P(X_t| Y_\gamma^{t-})\,dt,\label{eq:erasure-duncan}\\
\frac{d}{d\gamma}I_P(X^T;Y_\gamma^T)
&=a\int_0^T H_P(X_t| Y_\gamma^T)\,dt.\label{eq:erasure-smoothing-derivative}
\end{align}
Eliminating mutual information yields, for \(\gamma>0\), the especially simple entropy relation
\begin{equation}\label{eq:erasure-causal-noncausal}
\boxed{
\int_0^T H_P(X_t| Y_\gamma^{t-})\,dt
=
\frac1\gamma\int_0^\gamma
\int_0^T H_P(X_t| Y_\alpha^T)\,dt\,d\alpha.
}
\end{equation}
At \(\gamma=0\), the relation is understood by continuous extension; both sides equal \(\int_0^T H_P(X_t)\,dt\).
Thus, for random noiseless Poisson sampling of a finite-state process, the time-integrated filtering entropy at sampling intensity $a\gamma$ equals the uniform average, over lower intensities, of the time-integrated smoothing entropy.  The erasure channel therefore provides a closed-form bridge between the predictive-KL formulation and conditional entropy.

\FloatBarrier

\subsection{Binary symmetric Markov process through an erasure channel}\label{sec:markov-erasure}

The dynamic erasure model becomes unusually explicit when the hidden process is a stationary binary symmetric continuous-time Markov chain.  Let
\begin{equation}
X_t\in\{-1,+1\},
\text{ }
-1 \xrightleftharpoons[\nu]{\nu} +1,
\end{equation}
where \(\nu>0\) is the flip rate in either state. The stationary law is uniform, and for \(s\ge0\),
\begin{equation}\label{eq:telegraph-corr}
\Pp(X_{t+s}=x'| X_t=x)
=\frac{1+xx'e^{-2\nu s}}{2},
\text{ }
\E[X_tX_{t+s}]=e^{-2\nu s}.
\end{equation}
Pass \(X_t\) through the binary erasure channel of Section~\ref{sec:erasure}, with erasure probability \(\varepsilon\), and write
\begin{equation}
a=1-\varepsilon,
\text{ }
\lambda=a\gamma.
\end{equation}
By thinning, erased events carry no information and the informative events form a Poisson process of rate \(\lambda\), independent of the input path, at whose event times the current state \(X_t\) is observed perfectly.  Thus the problem is exactly Poisson sampling of the binary ``telegraph'' process at rate \(\lambda\).

To distinguish stationary quantities from finite-horizon losses, consider the bi-infinite stationary process and observation record. The nearest-sample formulas below are for \(\lambda>0\); at zero rate, both entropies have continuous value \(\log2\). Let
\begin{align}
c(\lambda)&\triangleq H(X_0| Y^{0-}_\lambda),\label{eq:telegraph-c-def}\\
s(\lambda)&\triangleq H(X_0| Y^{(-\infty,\infty)}_\lambda),\label{eq:telegraph-s-def}
\end{align}
where the first conditions on all informative samples strictly before time zero and the second on the full two-sided record.  These are, respectively, the stationary filtering and smoothing entropies. Both are entropies of the state at a typical stationary time, measured in nats. Multiplying by \(a\) gives the corresponding predictive KL risks per unit physical time; these are denoted by overlined quantities below. The mutual-information rate \(\mathcal I(\gamma)\) is measured in nats per unit physical time. Here \(\gamma\) is the total observation rate, \(\lambda=a\gamma\) is the informative observation rate, and \(\nu\) is the state flip rate. Their ratio \(\lambda/\nu\) compares how frequently the state is revealed with how frequently it changes.

\subsubsection{Causal filtering and a closed-form filtering entropy}

Let \(A\) be the age of the most recent informative sample before time zero and let \(S\in\{-1,+1\}\) be the state revealed there.  Since the sampling process is Poisson and independent of \(X\),
\begin{equation}
A\sim\operatorname{Exp}(\lambda).
\end{equation}
By the Markov transition law \eqref{eq:telegraph-corr},
\begin{align}
\Pp(X_0=S| A,S)&=\frac{1+e^{-2\nu A}}{2},\\
\E[X_0| A,S]&=S e^{-2\nu A}.
\end{align}
Hence the last revealed state and its age are a sufficient statistic for causal filtering. Once that state is known, earlier observations provide no additional information about the subsequent evolution. As the latest sample ages, possible intervening flips restore uncertainty about the current state. With
\begin{equation}
h_2(p)=-p\log p-(1-p)\log(1-p),
\end{equation}
the stationary filtering entropy is
\begin{equation}\label{eq:telegraph-c-integral}
\boxed{
c(\lambda)
=\int_0^\infty
\lambda e^{-\lambda u}
h_2\!\left(\frac{1-e^{-2\nu u}}{2}\right)du.
}
\end{equation}
This integral can be evaluated in terms of the digamma function.  Writing \(\psi\) for the digamma function and \(\gamma_E\) for Euler's constant,
\begin{equation}\label{eq:telegraph-c-closed}
\boxed{
c(\lambda)
=
\frac{\nu}{\lambda+2\nu}
\left[
\psi\!\left(1+\frac{\lambda}{4\nu}\right)
+\gamma_E+2\log2
\right].
}
\end{equation}
Appendix~\ref{app:digamma} evaluates \eqref{eq:telegraph-c-integral} by a convergent entropy series and digamma identities.
For the original erasure-channel loss, equation~\eqref{eq:erasure-entropy-loss} shows that the stationary causal KL risk per unit time is
\begin{equation}\label{eq:telegraph-causal-risk}
\boxed{
\overline{\cC}(\gamma)
=a\,c(a\gamma).
}
\end{equation}
Here and below the overline denotes the infinite-horizon per-unit-time limit of the corresponding integrated quantity.  To justify these limits for \(\lambda>0\), couple a finite observation record on \([0,T]\) to the two-sided stationary record.  The causal posteriors agree once an informative sample has occurred after zero.  By the nearest-neighbor Markov property used below, the smoothing posteriors agree whenever there is an informative sample on each side of the estimation time within \([0,T]\).  At time \(t\), the respective exceptional-event probabilities are \(e^{-\lambda t}\) and at most \(e^{-\lambda t}+e^{-\lambda(T-t)}\).  The conditional entropies lie between \(0\) and \(\log2\); averaging these bounds over \([0,T]\) makes the boundary contributions vanish as \(T\to\infty\).

\subsubsection{Mutual-information rate}

Define the stationary mutual-information rate
\begin{equation}
\mathcal I(\gamma)
\triangleq
\lim_{T\to\infty}\frac1T I(X^T;Y_\gamma^T).
\end{equation}
The causal identity \eqref{eq:ct-duncan} and \eqref{eq:telegraph-causal-risk} give
\begin{equation}\label{eq:telegraph-I-rate-basic}
\boxed{
\mathcal I(\gamma)
=\gamma a\,c(a\gamma)
=\lambda c(\lambda).
}
\end{equation}
Thus
\begin{equation}\label{eq:telegraph-I-rate}
\boxed{
\mathcal I(\gamma)
=
\frac{\lambda\nu}{\lambda+2\nu}
\left[
\psi\!\left(1+\frac{\lambda}{4\nu}\right)
+\gamma_E+2\log2
\right],
\text{ } \lambda=a\gamma.
}
\end{equation}
There is also a direct renewal interpretation.  Consecutive informative samples are separated by \(\Delta\sim\operatorname{Exp}(\lambda)\).  Given the previous revealed state, the entropy of the next revealed state is
\(
 h_2((1-e^{-2\nu\Delta})/2)
\).
Averaging gives \(c(\lambda)\) nats per informative arrival, and informative arrivals occur at rate \(\lambda\), recovering \eqref{eq:telegraph-I-rate-basic}.

\subsubsection{Noncausal smoothing}

For a direct description of the smoother, let the nearest informative observations on the left and right of time zero occur at distances
\begin{equation}
A,B\stackrel{\mathrm{iid}}{\sim}\operatorname{Exp}(\lambda),
\end{equation}
with revealed states \(L,R\in\{-1,+1\}\). By the Markov property, these two nearest observations are sufficient for smoothing. Given their states and times, samples farther into the past or future provide no additional information about \(X_0\); the smoother only needs to infer the state between these two exact reveals. Put
\begin{equation}
r=e^{-2\nu A},
\text{ }
q=e^{-2\nu B}.
\end{equation}
Then
\begin{equation}\label{eq:telegraph-smooth-mean}
\boxed{
\E[X_0| L,R,A,B]
=
\frac{Lr+Rq}{1+LRrq}.
}
\end{equation}
Moreover,
\begin{equation}
\Pp(LR=d| A,B)=\frac{1+d\,rq}{2},
\text{ } d\in\{-1,+1\}.
\end{equation}
Equations above provide a direct two-dimensional integral for \(s(\lambda)\).  A conditional-entropy identity reduces it to one dimension and justifies the derivative of the stationary information rate directly.  Put
\begin{equation*}
f(u)=h_2\!\left(\frac{1-e^{-2\nu u}}{2}\right).
\end{equation*}
For fixed gaps \(u,v>0\), the entropy chain rule and the Markov property give
\begin{equation*}
H(X_0| L,R,A=u,B=v)=f(u)+f(v)-f(u+v),
\end{equation*}
where the conditional entropy averages over the revealed endpoint states \(L,R\).  Averaging over the independent exponential gaps, whose sum has density \(\lambda^2u e^{-\lambda u}\), yields
\begin{equation*}
s(\lambda)=2c(\lambda)-\lambda^2\int_0^\infty u e^{-\lambda u}f(u)\,du
=c(\lambda)+\lambda c'(\lambda).
\end{equation*}
Here differentiation of \eqref{eq:telegraph-c-integral} is justified locally for every \(\lambda>0\) by \(0\le f\le\log2\) and exponential domination.  Consequently,
\begin{equation}\label{eq:telegraph-s-derivative}
\boxed{
s(\lambda)
=\frac{d}{d\lambda}\big[\lambda c(\lambda)\big].
}
\end{equation}
Let
\begin{equation}
A_\lambda
\triangleq
\psi\!\left(1+\frac{\lambda}{4\nu}\right)
+\gamma_E+2\log2,
\end{equation}
and write \(\psi_1\) for the trigamma function.  Differentiating \eqref{eq:telegraph-I-rate} yields
\begin{equation}\label{eq:telegraph-s-closed}
\boxed{
s(\lambda)
=
\frac{2\nu^2}{(\lambda+2\nu)^2}A_\lambda
+
\frac{\lambda}{4(\lambda+2\nu)}
\psi_1\!\left(1+\frac{\lambda}{4\nu}\right).
}
\end{equation}
Thus a two-sided hidden-Markov smoothing calculation collapses to a digamma-like expression by differentiating the causal information formula. Rewriting in terms of the original sampling-rate parameter \(\gamma\), the stationary noncausal KL risk per unit time is
\begin{equation}\label{eq:telegraph-smoothing-risk}
\boxed{
\overline{\cS}(\gamma)
=a\,s(a\gamma).
}
\end{equation}
The general matched continuous-time identities specialize to
\begin{align}
\mathcal I(\gamma)&=\gamma\overline{\cC}(\gamma),\label{eq:telegraph-ct-causal}\\
\mathcal I'(\gamma)&=\overline{\cS}(\gamma),\label{eq:telegraph-ct-smooth}\\
\overline{\cC}(\gamma)
&=\frac1\gamma\int_0^\gamma\overline{\cS}(\alpha)\,d\alpha,\text{ }\gamma>0.\label{eq:telegraph-ct-average}
\end{align}
Equivalently, in the effective sampling-rate parameter,
\begin{equation}\label{eq:telegraph-cs-average}
\boxed{
c(\lambda)
=\frac1\lambda\int_0^\lambda s(u)\,du,\text{ }\lambda>0.
}
\end{equation}
Both averaging formulas are understood at zero by continuous extension, with \(c(0)=s(0)=\log2\) and \(\overline{\cC}(0)=\overline{\cS}(0)=a\log2\).

The filtering-smoothing gap and the two information--estimation identities are shown in Figures~\ref{fig:telegraph-filter-smooth}--\ref{fig:telegraph-info-rate}.  The horizontal normalization $\lambda/\nu$ removes the arbitrary time scale of the Markov chain.

\begin{figure}[ht]
\centering
\includegraphics[width=0.70\linewidth]{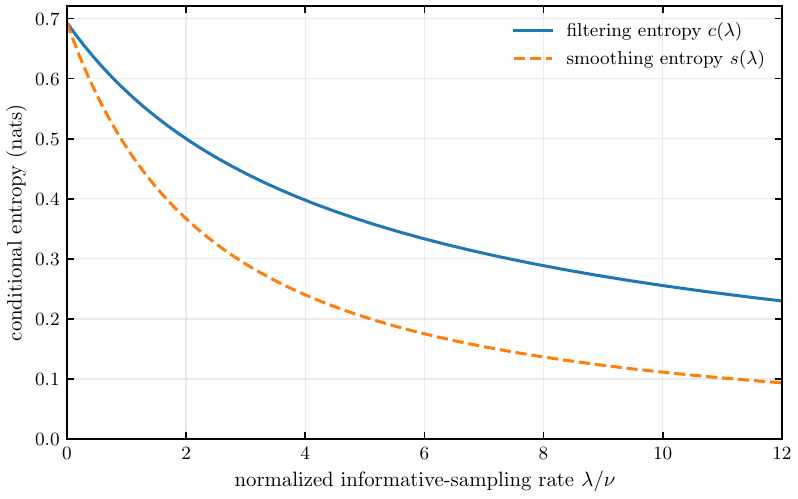}
\caption{Binary symmetric continuous-time Markov process observed through an erasure channel.  The filtering entropy $c(\lambda)$ exceeds the smoothing entropy $s(\lambda)$ because the smoother uses informative samples on both sides of the estimation time.}
\label{fig:telegraph-filter-smooth}
\end{figure}

\begin{figure}[ht]
\centering
\includegraphics[width=0.70\linewidth]{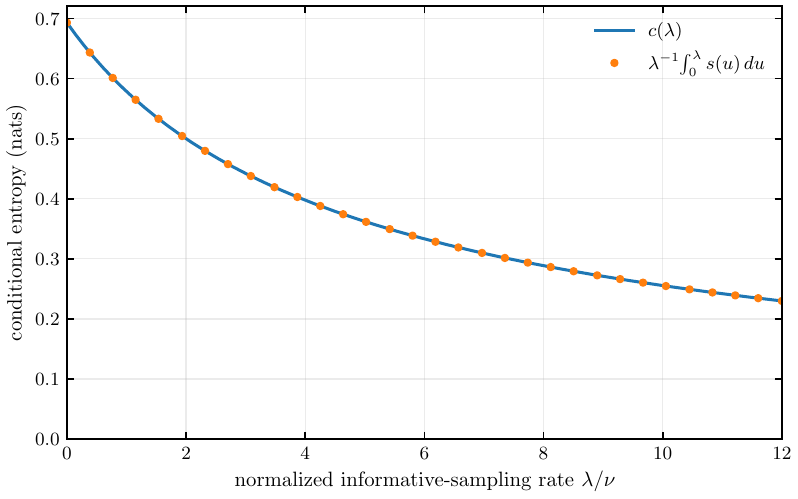}
\caption{Causal/noncausal averaging identity for the Markov/erasure model.  The solid curve is $c(\lambda)$; the markers evaluate $\lambda^{-1}\int_0^\lambda s(u)\,du$.}
\label{fig:telegraph-average}
\end{figure}

\begin{figure}[ht]
\centering
\includegraphics[width=0.70\linewidth]{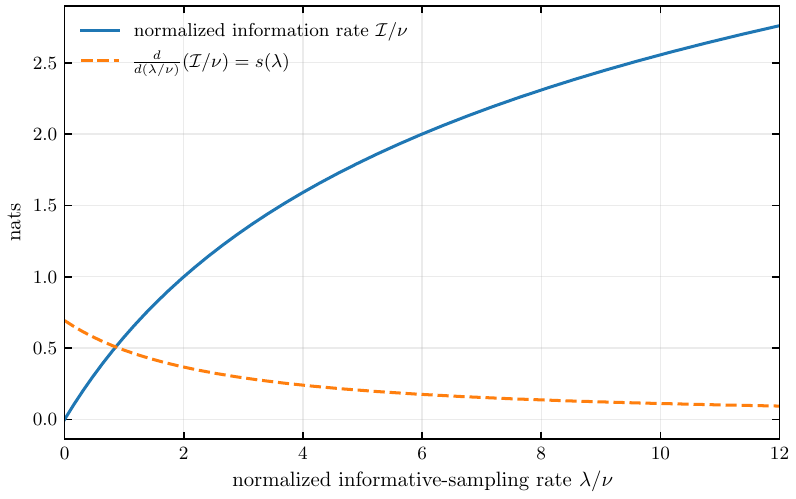}
\caption{Normalized mutual-information rate for the Markov/erasure model.  With $\rho=\lambda/\nu$, $\mathcal I/\nu=\rho c(\lambda)$ and $d(\mathcal I/\nu)/d\rho=s(\lambda)$.}
\label{fig:telegraph-info-rate}
\end{figure}

\FloatBarrier

\subsubsection{Finite-horizon form and asymptotic regimes}
The stationary formulas arise as limits of an exact finite-horizon expression.  Suppose \(X_0\) is stationary but there are no observations before time zero.  At time \(t\), either no informative sample has yet occurred, or the age of the latest informative sample is \(u<t\).  Therefore
\begin{equation}\label{eq:telegraph-finite-filter}
H(X_t| Y_\lambda^{t-})
=
e^{-\lambda t}\log2
+
\int_0^t \lambda e^{-\lambda u}
h_2\!\left(\frac{1-e^{-2\nu u}}{2}\right)du.
\end{equation}
For the equivalent noiseless Poisson-sampling model,
\begin{equation}\label{eq:telegraph-finite-I}
\boxed{
\begin{aligned}
I(X^T;Y_\lambda^T)
={}&(1-e^{-\lambda T})\log2+\lambda^2\int_0^T
(T-u)e^{-\lambda u}
h_2\!\left(\frac{1-e^{-2\nu u}}{2}\right)du.
\end{aligned}
}
\end{equation}
In \eqref{eq:telegraph-finite-I}, the first term is the entropy of the first revealed state, weighted by the probability that an informative sample occurs. Later samples contribute the conditional entropy of each revealed state given the preceding reveal, producing the integral term. The first term is bounded by \(\log2\) and vanishes after division by \(T\) in the information-rate limit. Differentiation with respect to \(\lambda\) gives the integrated finite-horizon smoothing entropy by Theorem~\ref{thm:ct-matched}. \\
The closed forms also reveal two informative limits.  At low sampling rate,
\begin{equation}\label{eq:telegraph-low}
\lambda\ll\nu:
\text{ }
c(\lambda),s(\lambda)\longrightarrow\log2,
\text{ }
\mathcal I(\gamma)\sim\lambda\log2.
\end{equation}
Each rare informative sample therefore contributes essentially one bit, or \(\log2\) nats. At high sampling rate, with \(\rho=\lambda/\nu\to\infty\),
\begin{align}
 c(\lambda)
 &=
 \frac{\log\rho+\gamma_E}{\rho}
 +O\!\left(\frac{\log\rho}{\rho^2}\right),\label{eq:telegraph-high-c}\\
 s(\lambda)
 &=
 \frac1\rho
 +O\!\left(\frac{\log\rho}{\rho^2}\right),\label{eq:telegraph-high-s}\\
 \mathcal I(\gamma)
 &=
 \nu\left[\log\rho+\gamma_E
 +O\!\left(\frac{\log\rho}{\rho}\right)\right].\label{eq:telegraph-high-I}
\end{align}
The extra logarithmic factor in the filtering entropy reflects uncertainty about a rare unobserved flip since the latest sample: the entropy of a rare alternative contains such a factor. The future reveal helps the smoother determine whether a flip occurred between the neighboring samples, reducing its leading entropy to order \(1/\rho\), as in \eqref{eq:telegraph-high-s}. The information rate does not saturate despite the binary state alphabet: increasingly dense observations localize the continuously valued jump times more accurately, producing logarithmically increasing timing information. \\
For comparison with more conventional estimation losses, the causal posterior mean is \(S e^{-2\nu A}\). Under the \(\{-1,+1\}\) encoding, the causal minimum mean-square error and minimum classification error are, respectively,
\begin{align}
\operatorname{mmse}_{\rm c}(\lambda)
&=\frac{4\nu}{\lambda+4\nu},\label{eq:telegraph-mmse}\\
P_{\rm e,c}(\lambda)
&=\frac{\nu}{\lambda+2\nu}.\label{eq:telegraph-map}
\end{align}
These quantities are not the losses appearing in the Poisson information identity, but they provide useful checks on the filtering posterior.

\FloatBarrier
\subsection{An image example: reconstruction and prediction}\label{sec:catalogue-example}
We return to the static model with an image selected uniformly from a
collection of candidate images known to the estimator. From sparse,
noisy pixel measurements, we compare how accurately the image can be
reconstructed and how well another measurement can be predicted. The
example shows that accurate reconstruction can coexist with persistent
prediction error when the assumed measurement noise is incorrect.

\noindent We use two test catalogues of 4-level quantized images: the first contains 1,000 CIFAR-10 images, with 100 images from each class, at native $32\times32$ RGB resolution~\cite{krizhevsky2009tiny} (so each image has $d=3072$ scalar color coordinates) and the second contains all 400 images of the AT\&T face database, with ten photographs of each of 40 subjects, at native $92\times112$ grayscale resolution, provided by AT\&T Laboratories Cambridge~\cite{att_faces,samaria1994parameterisation} so here $d=10304$. 
Each 8-bit intensity $u_j$ is quantized to $x_j=\lfloor u_j/64\rfloor\in\{0,1,2,3\}$.

\noindent An elementary observation selects a coordinate $J$ uniformly with replacement and returns a noisy category $Z$ through a fixed, strictly
positive four-symbol channel $A$. Thus the whole image is the hidden input, and $(J,Z)$ is one DMC output, with row
\begin{equation}\label{eq:catalogue-row}
W_x(j,z)=\frac{1}{d}A(z|x_j).
\end{equation}
Conditional on the image, successive readings are independent, and their
number is $N_\gamma\sim\Pois(\gamma)$, independently of the input.
The budget $\gamma$ is therefore the expected total number of scalar
readings: one color component for CIFAR, or one grayscale intensity for
the faces. The channel and computation details appear in
Appendix~\ref{app:catalogue-methods}.

\noindent All estimators receive the same record generated through $A$ under the uniform prior. For channel mismatch, retain that prior but let the estimator assume $\widetilde A=0.6A+0.4I_4$, where $I_4$ is the identity matrix. Reconstruction computes the posterior mean over the candidate images in the catalogue, whereas prediction averages future noisy output probabilities under the model's assumed channel.

Figure~\ref{fig:catalogue-montage} illustrates three fixed examples from each catalogue, at $\gamma=32$ for CIFAR and $\gamma=64$ for faces. In the second column, unmeasured coordinates are rendered in gray. The matched (column 3) and mismatched (column 5) catalogue estimators update posterior probabilities over complete images. A reading at one coordinate can therefore change estimates at coordinates that have not been observed. In contrast, the coordinate-factorized baseline (column 4) updates each coordinate independently from its marginal prior, using the true channel $A$. Lacking whole-image coupling, its unmeasured coordinates remain blurred at their marginal means, illustrating how prior structure allows a sparse record to inform the entire image.

\begin{figure}[htbp]
\centering
\includegraphics[width=\linewidth]{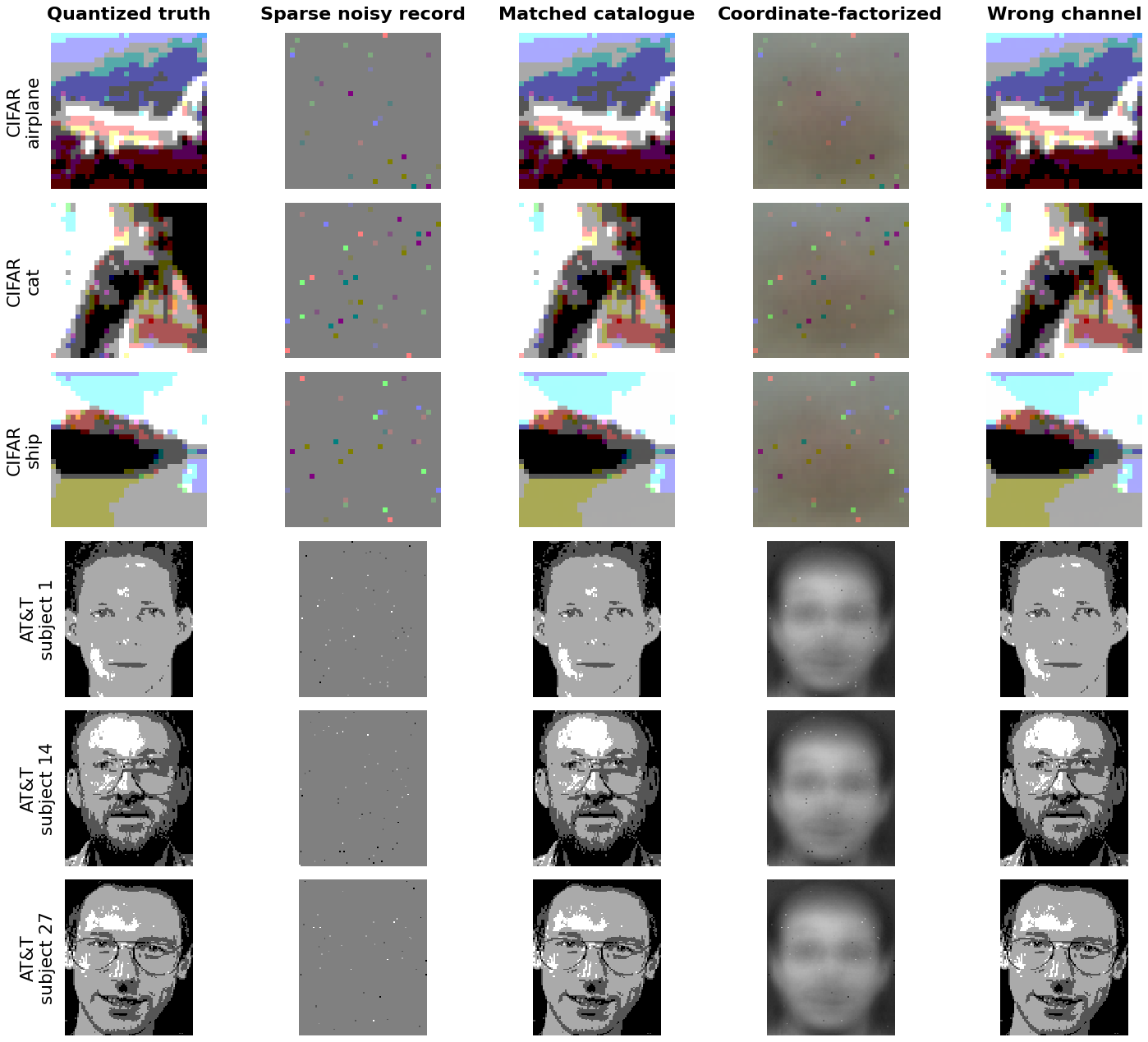}
\caption{Reconstruction from sparse, noisy pixel measurements of an image
selected from a collection of candidates known to the estimator. The upper
three rows are CIFAR-10 images~\cite{krizhevsky2009tiny} at $\gamma=32$;
the lower three are faces from AT\&T Laboratories Cambridge~\cite{att_faces}
at $\gamma=64$. Columns show the quantized reference, the noisy-reading
display, matched catalogue posterior mean, coordinate-factorized
reconstruction, and incorrect-channel catalogue posterior mean. Gray
placeholders fill unobserved scalar components. Realized reading counts
are 27, 39, and 31 for CIFAR, and 58, 55, and 44 for faces. All reconstructions in a row use the same record.}
\label{fig:catalogue-montage}
\end{figure}

Figure~\ref{fig:catalogue-prediction} summarizes 6,000 CIFAR trajectories. In panel~(a), normalized mutual information and accumulated matched predictive KL loss~\eqref{eq:integrated-matched} agree closely. The interpolated information curve reaches $95\%$ of the catalogue entropy at $\gamma\simeq 21.4$ readings. This threshold concerns uncertainty about which catalogue image was selected.

Panel~(b) shows both matched and mismatched reconstruction errors approaching zero. Predictive KL, rather than reconstruction MSE, is the loss appearing in the information identity. For each image, the nominal channel row closest to its true row in KL divergence corresponds to that same image. Consequently, both posteriors concentrate on the true candidate.

Panel~(c) exposes the contrast between state identification and output prediction. While matched predictive risk decays to zero, mismatched predictive risk plateaus at the theoretical floor $\kappa \simeq 0.02903$ nats/reading (Proposition~\ref{cm:prop:high}). Even with the image identified, postulating $\widetilde A$ instead of $A$ permanently impairs future observation prediction. This complements Example~\ref{cm:ex:label-swap}: there, identical predictions coexist with different beliefs about the hidden input. Here, both posteriors concentrate on the same image while their predictions remain different. Because the catalogue is known in advance, this experiment illustrates candidate identification rather than arbitrary image restoration.

\begin{figure}[htbp]
\centering
\includegraphics[width=\linewidth]{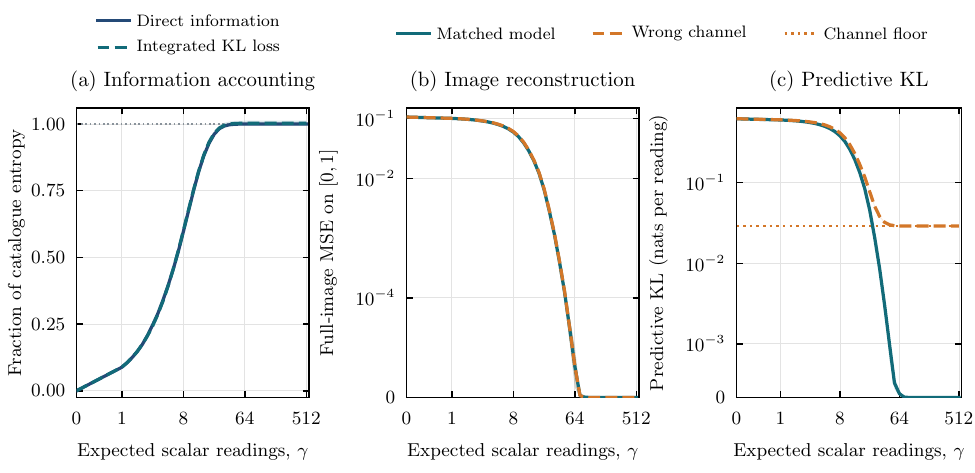}
\caption{Information, reconstruction, prediction for the 1,000-image
CIFAR catalogue. (a) Normalized mutual information and integrated matched predictive
KL loss. (b) Matched and incorrect-channel
posterior-mean reconstruction MSE over all coordinates of the quantized
image scaled to $[0,1]$. (c) Predictive KL loss and the calculated
incorrect-channel floor $\kappa$.
The horizontal axis gives the expected total number of scalar readings. Budget axes are linear up to one reading and log. thereafter; the MSE and KL axes have corresponding linear regions below $10^{-5}$ and $10^{-3}$.}
\label{fig:catalogue-prediction}
\end{figure}
\FloatBarrier

\section{Discussion \& Conclusion}\label{sec:discussion}
\begingroup
\clubpenalties 1 10000
\widowpenalties 1 10000

Poissonization turns the number of observations through a fixed DMC into a continuous budget, where the posterior-mean channel row predicts the next output and the minimum expected KL loss equals the rate of mutual-information gain. This finite-alphabet formulation of established Poisson information--estimation relations yields results on curvature, higher-order dependence, row identifiability, and capacity: dependence between fresh outputs explains diminishing information gains, conditional co-information gives the higher derivatives, and maximizing accumulated matched prediction loss over the input prior gives the static capacity.

Among priors achieving one-use capacity, the second-order term selects those with the least redundancy between two outputs. For the coupon-collector channel, the exact decomposition through observed-subset distributions extends the comparison to every budget. Balanced input families preserve the leading capacity terms with fewer possible inputs, as shown by the explicit losses for the matching and Fano constructions; at the same time, the Fano prior illustrates the limit of second-order agreement, as its information deficit remains positive at every positive budget.

Under the stated support and finiteness conditions, output relative entropy equals the accumulated excess prediction loss, while the joint-law decomposition separates this observable discrepancy from disagreement between the latent posteriors. In the label-swap and erasure examples, input posteriors can disagree even when output predictions agree, or agree even when output predictions differ. Total prediction regret remains finite precisely when every true row that can occur appears among the nominal rows with positive prior weight. Otherwise, regret grows linearly at a rate given by the average minimum KL approximation error; under the stated uniqueness condition, that error is also the limiting expected predictive loss.

For independently Poisson-sampled processes, filtering and smoothing estimate the same changing channel row from different observation histories. Their losses, integrated over time, satisfy an exact averaging relation across sampling rates, both in the matched model and under simultaneous process-law and channel mismatch. The explicit Markov and erasure formulas distinguish static from dynamic information acquisition. While static capacity is bounded by the logarithm of the number of distinct rows, information per unit time in the binary Markov example grows logarithmically with the sampling rate, because denser samples continue to localize the jump times.

The image example uses collections of CIFAR-10 images and AT\&T faces observed through sparse, noisy pixel measurements. A posterior over complete images lets observed coordinates inform unobserved ones; removing dependence leaves unobserved coordinates at their marginal means (Figure~\ref{fig:catalogue-montage}). The aggregate CIFAR results show agreement between information computed from posterior entropy and accumulated matched predictive KL loss (Figure~\ref{fig:catalogue-prediction}). Although both the matched and mismatched estimators recover the true image as the budget grows, the mismatched predictor hits a nonzero KL-loss floor. Knowing the image does not fix the predictor when the assumed measurement channel is incorrect.

\par\endgroup

\section*{Acknowledgments}
\addcontentsline{toc}{section}{Acknowledgments}
OpenAI’s GPT 5.6 Sol and 6 Astra were used to assist with editing and checking mathematical proofs and arguments. We thank Aayush Rajesh for his comments.
\phantomsection
\addcontentsline{toc}{section}{References}
\nocite{cover2006elements}
\bibliographystyle{IEEEtran-linked}
\begingroup
\bibliography{references}
\endgroup
\clearpage
\appendix

\section{High-intensity channel-mismatch limits}\label{app:channel-high}

This is the full proof of Proposition~\ref{cm:prop:high}. The notation \(J_n\) is as in the proof of Theorem~\ref{cm:thm:main}.

\begin{proof}
First fix \(x\) with \(P(x)>0\) and condition on \(X=x\).  Put
\[
S_x=\{z:W_x(z)>0\},\text{ }
L_{j,n}=\prod_{i=1}^n\widetilde W_j(Z_i),\text{ }
L^{\mathrm{true}}_{x,n}=\prod_{i=1}^n W_x(Z_i).
\]
If \(\KL{W_x}{\widetilde W_j}<\infty\), the log likelihood ratio has bounded increments on \(S_x\), so the strong law gives
\[
\frac1n\log\frac{L_{j,n}}{L^{\mathrm{true}}_{x,n}}
\longrightarrow-\KL{W_x}{\widetilde W_j}
\text{ }\text{almost surely}.
\]
If this divergence is infinite, some \(z\in S_x\) has \(\widetilde W_j(z)=0\).  That symbol is eventually observed almost surely, after which \(L_{j,n}=0\).  Comparing the logarithm of the finite sum
\[
Q_n^{\widetilde M}(Z^n)=\sum_{j\in\mathcal A_Q}Q(j)L_{j,n}
\]
with its largest summand therefore yields
\[
\frac1n\log\frac{L^{\mathrm{true}}_{x,n}}{Q_n^{\widetilde M}(Z^n)}
\longrightarrow d(x)
\text{ }\text{almost surely}.
\]
This convergence also holds in expectation.  Choose an active \(j_0\) with finite divergence and set
\[
a_x=\min_{z\in S_x}W_x(z)>0,\text{ }
b_x=\min_{z\in S_x}\widetilde W_{j_0}(z)>0.
\]
On every history of positive true probability,
\begin{equation}\label{cm:eq:high-logratio-bound}
n\log a_x
\le
\log\frac{L^{\mathrm{true}}_{x,n}}{Q_n^{\widetilde M}(Z^n)}
\le
-\log Q(j_0)+n\log(1/b_x).
\end{equation}
The lower bound uses \(Q_n^{\widetilde M}(Z^n)\le1\), and the upper bound uses \(Q_n^{\widetilde M}(Z^n)\ge Q(j_0)b_x^n\).  Dominated convergence and the finiteness of \(\X\) now give \(J_n/n\to\bar d\).  Since
\[
J_n=I_{P,M}(X;Z^n)+\KL{P_n^M}{Q_n^{\widetilde M}},
\text{ } 0\le I_{P,M}(X;Z^n)\le H(P),
\]
the fixed-count output divergence divided by \(n\) also tends to \(\bar d\).  Its growth is at most linear by \eqref{cm:eq:high-logratio-bound}.  Poisson averaging preserves this slope: for every \(\epsilon>0\), the fixed-count error is bounded by \(\epsilon n+C_\epsilon\), and hence its averaged error is bounded by \(\epsilon\gamma+C_\epsilon\).  Dividing by \(\gamma\) proves \eqref{cm:eq:high-Dout}. \\
For the risk limit, fix a unique minimizer \(j^*=j^*(x)\).  For any other active \(j\) with finite divergence, the bounded log likelihood-ratio increments
\[
\log\frac{\widetilde W_j(Z_i)}{\widetilde W_{j^*}(Z_i)}
\]
have strictly negative mean
\(-[\KL{W_x}{\widetilde W_j}-d(x)]\).
Writing this positive divergence gap as \(\Delta_j\), Hoeffding's inequality gives
\[
\Pp_{W_x}\left\{
\log\frac{L_{j,n}}{L_{j^*,n}}\ge-\frac{n\Delta_j}{2}
\right\}\le e^{-c_jn}
\]
for some \(c_j>0\).  If the divergence for \(j\) is infinite, the probability that its likelihood has not yet become zero is at most \((1-W_x(z))^n\) for a symbol \(z\in S_x\) with \(\widetilde W_j(z)=0\).  A union bound over the finite active family shows that, for constants \(C_x,c_x>0\), there are events \(E_{x,n}\) such that
\[
\Pp_{W_x}(E_{x,n}^c)\le C_xe^{-c_xn},
\text{ }
1-Q_{\widetilde M}(j^*| Z^n)\le C_xe^{-c_xn}
\text{ }\text{on }E_{x,n}.
\]
The posterior includes fixed factors like \(Q(j)/Q(j^*)\), which are absorbed into \(C_x\). \\
Expected KL loss requires control outside these events. Let
\[
\ell_{x,n}(Z^n)=
\KL{W_x}{\widehat{\widetilde W}_{Q,n}(\cdot| Z^n)}.
\]
Using the same \(j_0,b_x\) as above, the nominal posterior weight of \(j_0\) is at least \(Q(j_0)b_x^n\).  Therefore its predictive mass at every \(z\in S_x\) is at least \(Q(j_0)b_x^{n+1}\), and
\begin{equation}\label{cm:eq:high-risk-domination}
0\le\ell_{x,n}(Z^n)
\le-\log Q(j_0)+(n+1)\log(1/b_x).
\end{equation}
On \(E_{x,n}\), the predictor is exponentially close to \(\widetilde W_{j^*}\), which is positive on \(S_x\); the logarithm is Lipschitz on a sufficiently small neighborhood of its positive coordinates.  Hence, after enlarging constants to cover small \(n\),
\[
\E_{W_x}\big|\ell_{x,n}(Z^n)-d(x)\big|
\le C_x(1+n)e^{-c_xn}.
\]
This proves convergence in expected loss, not only posterior convergence.  Averaging over \(X\) and the independent Poisson count proves \eqref{cm:eq:high-risk}.  The same concentration bound proves the asserted convergence in probability.
\end{proof}

\section{Continuous-time sampling and compensation proof}\label{app:ct-proof}

We prove Theorem~\ref{cm:thm:ct} under \eqref{cm:eq:ct-assumptions}. The common-channel and matched identities follow by specialization. Adding a uniformly timed observation gives the smoothing derivative; factoring the record likelihood in chronological order gives the causal identity.

\begin{proof}
Let \(A=(U,Z)\), where \(U\) is uniform on \([0,T]\), independently of the path. Conditional on \(X^T\) and \(U=t\), the true mark law is \(W_t\) and the nominal mark law is \(\widetilde W_{X_t}\). The corresponding kernels are
\begin{equation}\label{eq:ct-proof-kernel}
K_{X^T}(dt,z)=\frac1T W_t(z)\,dt,
\text{ }
\widetilde K_{X^T}(dt,z)=\frac1T\widetilde W_{X_t}(z)\,dt.
\end{equation}
Under each model, generate \(A_1,A_2,\ldots\) independently conditional on \(X^T\), and independently generate \(N\sim\Pois(\gamma T)\). Sorting the \(N\) pairs by time gives the actual observation record. Conversely, a uniformly random permutation of a sorted record restores the artificial labels with a law independent of the path and common to both models. These operations preserve information, output and joint relative entropy, and the posterior on the path. \\
A uniformly timed observation contributes \(K_T/T\) to the expected conditional channel divergence. Conditional independence and the KL chain rule therefore give
\begin{equation}\label{cm:eq:ct-joint-fixed}
D(P^M_{X^T,A^n}\|Q^{\widetilde M}_{X^T,A^n})
=D_0+\frac nT K_T.
\end{equation}
Mixing over the common Poisson count, whose mean is \(\gamma T\), gives
\begin{equation}\label{cm:eq:ct-joint-divergence}
D(P^M_{X^T,Y_\gamma^T}\|Q^{\widetilde M}_{X^T,Y_\gamma^T})
=D_0+\gamma K_T.
\end{equation}
These finite divergences imply absolute continuity of the true joint and output laws with respect to their nominal counterparts. In particular, nominal posterior predictors are well defined on true records up to the null sets relevant to the integrated losses. \\
For a fixed number of pairs, we have
\[
G_n=I_{P,M}(X^T;A^n)+D(P^M_{A^n}\|Q^{\widetilde M}_{A^n}).
\]
The times \(U^n\) are independent of the path, so
\[
I_{P,M}(X^T;A^n)=I_{P,M}(X^T;Z^n| U^n)\le n\log|\Z|.
\]
Data processing in \eqref{cm:eq:ct-joint-fixed} bounds the output divergence and yields
\begin{equation}\label{eq:ct-proof-growth}
0\le G_n\le D_0+n\left(\log|\Z|+\frac{K_T}{T}\right).
\end{equation}
Also, \(G_n=D(P^M_{X^T,A^n}\|P_{X^T}\otimes Q^{\widetilde M}_{A^n})\). Its log likelihood ratio \(L_n\) is therefore integrable: the expected negative part of a probability log density ratio is at most \(1/e\), and hence \(\E_{P,M}|L_n|\le G_n+2/e\). Relative to uniform-time measure and counting measure on marks, write \(q_n\) for the density of \(Q^{\widetilde M}_{A^n}\); then
\begin{equation}\label{eq:ct-proof-likelihood}
G_n=\E_{P,M}\log\frac{\prod_{i=1}^n W_{U_i}(Z_i)}{q_n(A^n)}.
\end{equation}
Factoring off one additional pair gives
\begin{equation}\label{eq:ct-proof-increment}
G_{n+1}-G_n
=\frac1T\int_0^T
\E_{P,M}\KL{W_t}{\E_{Q,\widetilde M}[\widetilde W_{X_t}| A^n]}\,dt.
\end{equation}
The new time is independent of the path and the existing pairs under both models, so conditioning on its value does not change the existing path posterior. By the observation equivalence,
\[
\mathcal J_T(\gamma)=\E[G_N]
\le D_0+\gamma\bigl(T\log|\Z|+K_T\bigr).
\]
Lemma~\ref{lem:poisson-diff} applies by \eqref{eq:ct-proof-growth}. Differentiating the mean parameter \(\gamma T\) contributes a factor \(T\), which cancels the \(1/T\) in the increment:
\begin{equation}\label{eq:ct-proof-smoothing-derivative}
\mathcal J_T'(\gamma)
=T\E[G_{N+1}-G_N]
=\Smis(\gamma).
\end{equation}
The nominal posterior after the random number of pairs is exactly the full-record smoother. Uniform convergence of the Poisson derivative series gives continuity, including the right derivative at zero. Since \(G_0=0\), integration proves
\(\mathcal J_T(\gamma)=\int_0^\gamma\Smis(\alpha)\,d\alpha\). \\
For the causal identity, let \(\gamma>0\) and consider a record
\(y=((\tau_i,z_i))_{i=1}^n\), with \(0<\tau_1<\cdots<\tau_n<T\).
Relative to ordered-time Lebesgue measure and counting measure on marks, its true conditional density and its nominal marginal density are
\begin{equation}\label{eq:ct-proof-record-density}
\begin{aligned}
p(y| X^T)&=e^{-\gamma T}\gamma^n\prod_{i=1}^n W_{\tau_i}(z_i),\\
q(y)&=e^{-\gamma T}\gamma^n\prod_{i=1}^n
\widehat{\widetilde W}_{Q,\tau_i}^{\mathrm c,\gamma}(z_i).
\end{aligned}
\end{equation}
The clock is independent of the path under both models. Thus neither the absence of arrivals nor knowledge of future unmarked times changes a path posterior based on preceding marks; sequential averaging gives precisely the strict-past predictors in the second product. The independent clock hits no input jump, so left-limit and current-row values agree at its arrivals. The common clock factor cancels, yielding
\begin{equation}\label{eq:ct-proof-causal-score}
\log\frac{p(Y_\gamma^T| X^T)}{q(Y_\gamma^T)}
=\sum_{\tau_i\le T}
\log\frac{W_{\tau_i}(Z_i)}{\widehat{\widetilde W}_{Q,\tau_i}^{\mathrm c,\gamma}(Z_i)}.
\end{equation}
Its expectation is \(\mathcal J_T(\gamma)\), and its absolute expectation is finite by the preceding bound on \(L_n\), averaged over the Poisson count. \\
Include the input path in the initial sigma-field and use predictable versions throughout. For any nonnegative predictable \(g(t,z)\), conditional Poisson increments and monotone approximation give the compensation formula
\begin{equation}\label{eq:ct-proof-compensation}
\E_{P,M}\sum_{\tau_i\le T}g(\tau_i,Z_i)
=\gamma\E_{P,M}\int_0^T\sum_zW_{t-}(z)g(t,z)\,dt.
\end{equation}
Apply this first to the negative part of the log score; scores where \(W_{t-}(z)=0\) may be set to zero. For probability vectors \(u,v\),
\begin{equation}\label{eq:ct-proof-negative-score}
\sum_{z:u(z)>0}u(z)\left[\log\frac{v(z)}{u(z)}\right]_+
\le\frac1e\sum_zv(z)=\frac1e,
\end{equation}
because \(a\log(b/a)\le b/e\) for \(0<a<b\). The expected sum of negative event scores is consequently at most \(\gamma T/e\). Since their total log score is already integrable, the sum of positive event scores is integrable as well. Compensation may therefore be applied to both parts, giving
\[
\mathcal J_T(\gamma)
=\gamma\E_{P,M}\int_0^T
\KL{W_{t-}}{\widehat{\widetilde W}_{Q,t}^{\mathrm c,\gamma}}\,dt
=\gamma\Cmis(\gamma).
\]
The linear terms of the coordinatewise Poisson loss cancel because both rows have total mass one, giving the same KL loss as in the Poisson information--estimation framework of \cite{atar2012poisson}. The direct argument here also permits zero channel entries. At \(\gamma=0\), the record is empty and \(\mathcal J_T(0)=0\). The increment formula with \(n=0\), together with \eqref{eq:ct-proof-growth}, proves that
\begin{equation}\label{eq:ct-proof-zero}
\begin{aligned}
\Cmis(0)=\Smis(0)
&=\E_P\int_0^T\KL{W_t}{\E_Q\widetilde W_{X_t}}\,dt\\
&=T G_1\le T D_0+T\log|\Z|+K_T<\infty.
\end{aligned}
\end{equation}
The derivative series and the integral identity show that \(\Smis\) and the continuous extension of \(\mathcal J_T(\gamma)/\gamma=\Cmis(\gamma)\) have this value at zero. For positive rates, continuity of \(\Cmis\) follows from \(\mathcal J_T(\gamma)/\gamma\). This proves all assertions of Theorem~\ref{cm:thm:ct}. \\
Setting \(\widetilde M=M\) makes \(K_T=0\) and reduces \(\mathcal J_T,\Cmis,\Smis\) to \(\cJ_{P,Q},\cC_{P,Q},\cS_{P,Q}\), respectively, proving Theorem~\ref{thm:ct-mismatch}. Taking \(Q=P\) as well removes the output divergence and proves Theorem~\ref{thm:ct-matched}.
\end{proof}

\section{Posterior covariance expansion and remainder}\label{app:covariance}

Use \(q_\pi,r_\pi\) from Section~\ref{sec:curvature}.\\
Remove zero coordinates of \(q_\pi\), whose corresponding rows and columns in \(r_\pi\) also vanish.  On the remaining support, set \(b=q_\pi\otimes q_\pi\) and \(\delta=r_\pi-b\).  If \(|\delta_i|\le b_i/2\) coordinatewise, Taylor's theorem gives the expansion \eqref{eq:quad-second}, with remainder
\[
|\mathcal R|\le\frac23\sum_i\frac{|\delta_i|^3}{b_i^2}.
\]
Indeed, for \(g(u)=(1+u)\log(1+u)\), one has \(|g'''(u)|\le4\) when \(|u|\le1/2\), and the linear terms cancel because \(\sum_i\delta_i=0\).  For a family of posteriors whose positive predictive coordinates stay uniformly bounded away from zero, \(\mathcal R=o(\|\delta\|_2^2)\) as \(\delta\to0\).  Full support at each posterior alone does not imply this unweighted remainder.

\section{Support and growth in the capacity optimality conditions}\label{app:capacity-support}

We justify \eqref{eq:cap-kkt-predictive}, with the input fixed at \(x\) and the predictor using \(P\). Write \(S_x=\operatorname{supp}W_x\).  Finiteness at any positive budget implies that some \(x'\) with \(P(x')>0\) satisfies \(S_x\subseteq\operatorname{supp}W_{x'}\): otherwise, a record containing every symbol in \(S_x\) would have positive probability under \(x\) and zero probability under the \(P\)-mixture.  Conversely, such an \(x'\) gives, with \(m=\min_{z\in S_x}W_{x'}(z)>0\) and \(P_n\) the \(P\)-mixture law of \(Z^n\),
\[
0\le d_n\triangleq D(W_x^{\otimes n}\|P_n)
\le-\log P(x')+n\log(1/m).
\]
Thus this support condition is also sufficient for finiteness at every finite positive budget.  Factoring off the last observation gives
\[
d_{n+1}-d_n
=\E_x D\!\left(W_x\middle\|\E_P[W| Z^n]\right).
\]
Since \(G_x(P,\gamma)=\E[d_{N_\gamma}]\) and \(d_0=0\), the linear-growth form of Lemma~\ref{lem:poisson-diff} justifies differentiation and integration, proving \eqref{eq:cap-kkt-predictive}. \\
If the support condition fails, take a shortest record having positive probability under \(x\) but zero probability under \(P\).  Its preceding history is \(P\)-positive, and the \(P\)-predictor assigns zero probability to an \(x\)-possible next symbol.  This history has positive probability under the fixed-input experiment at every \(s>0\), so \(\E_xD(W_x\|\widehat W_P^s)=\infty\).  Hence both sides of \eqref{eq:cap-kkt-predictive} are infinite for every \(\gamma>0\); this conclusion is independent of arbitrary predictor versions on \(P\)-null histories.  At \(\gamma=0\), both the divergence and the zero-length integral are zero.

\section{Evaluation of the stationary filtering entropy}\label{app:digamma}

We evaluate the integral \eqref{eq:telegraph-c-integral}; the smoothing formula then follows from \eqref{eq:telegraph-s-derivative}. For \(0\le v\le1\),
\[
h_2\!\left(\frac{1-v}{2}\right)
=\log2-\sum_{k=1}^{\infty}\frac{v^{2k}}{2k(2k-1)}.
\]
This follows by integrating the power series for the derivative on \(0<v<1\), with continuity at the endpoints. The summands are nonnegative, so Tonelli's theorem permits termwise integration against the exponential age density. For \(\lambda>0\), put \(\alpha=\lambda/(4\nu)\). Then
\[
c(\lambda)=\log2-\sum_{k=1}^{\infty}
\frac{\alpha}{2k(2k-1)(k+\alpha)}.
\]
The partial-fraction identity
\[
\frac{\alpha}{2k(2k-1)(k+\alpha)}
=-\frac{1}{2k}
+\frac{\alpha}{2\alpha+1}\frac{1}{k-\frac12}
+\frac{1}{2(2\alpha+1)}\frac{1}{k+\alpha}
\]
is used with the divergent harmonic terms grouped before summation. The convergent digamma identity
\[
\sum_{k=1}^{\infty}\left(\frac{1}{k+\beta}-\frac1k\right)
=-\psi(1+\beta)-\gamma_E,\text{ } \beta>-1,
\]
together with \(\psi(1/2)=-\gamma_E-2\log2\), gives
\[
c(\lambda)
=\frac{\psi(1+\alpha)+\gamma_E+2\log2}{2(2\alpha+1)}
=\frac{\nu}{\lambda+2\nu}
\left[\psi\!\left(1+\frac{\lambda}{4\nu}\right)+\gamma_E+2\log2\right].
\]
This proves \eqref{eq:telegraph-c-closed}. Differentiating \(\lambda c(\lambda)\), justified by the dominated integral calculation in the main text, gives \eqref{eq:telegraph-s-closed}. Continuity at zero yields \(c(0)=s(0)=\log2\); the usual large-argument expansions of \(\psi\) and \(\psi_1\) give \eqref{eq:telegraph-high-c}--\eqref{eq:telegraph-high-I}.

\FloatBarrier

\section{Image reconstruction and prediction computations}\label{app:catalogue-methods}

\paragraph{Image collections and channel parameters:}
The CIFAR catalogue comprises 100 images selected without replacement from each class of the CIFAR-10 training set~\cite{krizhevsky2009tiny}. The face catalogue comprises all 400 images from AT\&T Laboratories Cambridge~\cite{att_faces}. Images remain at native resolution and are quantized as in Section~\ref{sec:catalogue-example}; all quantized images
within each catalogue are distinct. The true elementary measurement channel is
\begin{equation}\label{eq:catalogue-sensing-channel}
A=
\begin{pmatrix}
0.82&0.12&0.04&0.02\\
0.08&0.76&0.12&0.04\\
0.03&0.09&0.75&0.13\\
0.01&0.04&0.12&0.83
\end{pmatrix},
\end{equation}
with rows indexed by the input category and columns by the observed
category. The nominal channel is $\widetilde A=0.6A+0.4I_4$. Comparing
each true image row with every nominal image row verifies that
$\widetilde W_x$ uniquely minimizes $\KL{W_x}{\widetilde W_{x'}}$ over
$x'\in\X$. For a uniform catalogue of $K$ images with $d$ coordinates,
Proposition~\ref{cm:prop:high} therefore gives the limiting predictive KL loss
\begin{equation}\label{eq:catalogue-floor}
\kappa
=\frac{1}{Kd}\sum_{x\in\X}\sum_{j=1}^d
\KL{A(\cdot|x_j)}{\widetilde A(\cdot|x_j)}.
\end{equation}

\paragraph{Montage and independent-coordinate baseline:}
In Figure~\ref{fig:catalogue-montage}, unmeasured components are displayed at $1/2$ and observed categories are scaled by $1/3$. Repeated readings display the most frequent category, with ties resolved toward the lowest category. Inference retains every reading; the catalogue reconstructions are exact posterior means under the true or nominal channel. \\
For the independent-coordinate baseline, define the empirical marginals
\begin{equation*}
\rho_j(a)=\frac{1}{K}\sum_{x\in\X}\ind\{x_j=a\}.
\end{equation*}
Under their product prior, the local posterior after a record
$(J_i,Z_i)_{i=1}^n$ is
\begin{equation*}
\pi_j(a)\propto\rho_j(a)\prod_{i:J_i=j}A(Z_i|a),
\end{equation*}
and the reconstructed coordinate is $\sum_{a=0}^3(a/3)\pi_j(a)$.
An unobserved coordinate retains its prior $\rho_j$.

\paragraph{Numerical details:}
Figure~\ref{fig:catalogue-prediction} averages 6,000 independent CIFAR acquisition trajectories. Within each trajectory, increasing the budget extends the same Poisson observation stream. Posterior probabilities are calculated exactly over the candidate images. Mutual information is estimated as the initial image entropy minus the average posterior entropy.

Reconstruction error is measured against the quantized images scaled to $[0,1]$ and averaged over all coordinates and over the possible true images, weighted by their probabilities under the true posterior. Prediction loss uses the same posterior averaging, but is evaluated on 64 coordinates chosen uniformly without replacement, independently of the observations. This coordinate sample stays fixed within each stream and is used only to evaluate loss; inference uses every measurement. Averaging over coordinate samples gives the full-image prediction loss in expectation. The incorrect-channel estimate includes a correction that reduces sampling variability.\footnote{We add the difference between the full-coordinate and sampled-coordinate averages of $\KL{A(\cdot|x_j)}{\widetilde A(\cdot|x_j)}$, weighted over images by the true posterior. For each fixed record, this correction averages to zero over the independent coordinate sample.}

Between observations, the posterior and estimated loss remain constant. We obtain the accumulated loss in Figure~\ref{fig:catalogue-prediction}(a) by multiplying each estimated loss by the budget interval over which it applies and summing. This integrates the estimate exactly while simulation uncertainty remains. The shaded pointwise 95\% intervals are obtained by resampling complete streams, preserving the dependence between estimates at different budgets. Reconstruction errors displayed as zero reflect numerical precision.

\end{document}